\documentclass[12pt,english]{amsart}
\usepackage[T1]{fontenc}
\usepackage[latin9]{inputenc}
\usepackage[letterpaper]{geometry}
\geometry{verbose,tmargin=2cm,bmargin=2cm,lmargin=3cm,rmargin=3cm}
\usepackage{units}
\usepackage{amsthm}
\usepackage{amstext}
\usepackage{graphicx}
\usepackage{amssymb}

\makeatletter
%%%%%%%%%%%%%%%%%%%%%%%%%%%%%% Textclass specific LaTeX commands.
\numberwithin{equation}{section} %% Comment out for sequentially-numbered
\numberwithin{figure}{section} %% Comment out for sequentially-numbered
\theoremstyle{plain}
\newtheorem{thm}{Theorem}
\numberwithin{thm}{section}
\newtheorem{cor}[thm]{Corollary}
\newtheorem{lem}[thm]{Lemma}
\newtheorem{prop}[thm]{Proposition}
\newtheorem{property}[thm]{Property}
\theoremstyle{remark}
\newtheorem{remark}[thm]{Remark}
\newtheorem{definition}[thm]{Definition}

%%%%%%%%%%%% Default notation %%%%%%%%%%
\newcommand{\Graph}{\Gamma}
\newcommand{\G}{\Gamma}

\newcommand{\Id}{\mathbb{I}}

\newcommand{\cH}{\mathcal{H}}
\newcommand{\tU}{\tilde{U}}
\newcommand{\tf}{\tilde{f}}
\newcommand{\Vertices}{\mathcal{V}}
\newcommand{\Edges}{\mathcal{E}}
\newcommand{\Leads}{\mathcal{E}^\infty}

\newcommand{\Reals}{\mathbb{R}}
\newcommand{\rmi}{{\mathrm{i}}}
\newcommand{\rme}{{\mathrm{e}}}
%\global\long\global\long\def\id{\mathbb{I}} - was replaced by the following line
\newcommand{\id}{\mathbb{I}}

\renewcommand{\Re}{\mathrm{Re}}
\renewcommand{\Im}{\mathrm{Im}}

\DeclareMathOperator{\sgn}{sign}
\DeclareMathOperator{\mmod}{mod}
\DeclareMathOperator{\Range}{Range}
\DeclareMathOperator{\Ker}{Ker}

\makeatother

\usepackage{babel}

\begin{document}

\title[Nodal count on a family of quantum graphs]{Dynamics of nodal
  points and the nodal count on a family of quantum graphs}
\author{R.~Band$^{1}$} \address{$^{\text{1}}$Department of Physics of Complex Systems, The Weizmann
Institute of Science, Rehovot 76100, Israel}
\author{G.~Berkolaiko$^{2}$} \address{$^{\text{2}}$Department of Mathematics, Texas A\&M
  University, College Station, TX 77843-3368, USA}
\author{U.~Smilansky$^{1,}$$^{3}$} \address{$^{\text{3}}$Cardiff School of Mathematics and WIMCS,
Cardiff University, Senghennydd Road, Cardiff CF24 4AG, UK}

\begin{abstract}
  We investigate the properties of the zeros of the eigenfunctions on
  quantum graphs (metric graphs with a Schr\"odinger-type differential
  operator).  Using tools such as scattering approach and eigenvalue
  interlacing inequalities we derive several formulas relating the
  number of the zeros of the $n$-th eigenfunction to the spectrum of
  the graph and of some of its subgraphs.  In a special case of the
  so-called dihedral graph we prove an explicit formula that only uses
  the lengths of the edges, entirely bypassing the information about
  the graph's eigenvalues.  The results are explained from the point
  of view of the dynamics of zeros of the solutions to the scattering
  problem.
\end{abstract}

\maketitle

\section{Introduction}

Spectral properties of differential operators on graphs have recently
arisen as models for such diverse areas of research as quantum chaos,
photonic crystals, quantum wires and nanostructures.  We refer the
interested reader to the reviews \cite{Kuc_wrm02,GnuSmi_ap06} as well
as to collections of recent results
\cite{BerCarFulKuc_eds06,ExnKeaKuc_eds08}.  As a part of this research
program, the study of eigenfunctions, and in particular, their nodal
domains is an exciting and rapidly developing research direction. It
is an extension to graphs of the investigations of nodal domains on
manifolds, which started already in the 19th century by the pioneering
work of Chladni on the nodal structures of vibrating plates. Counting
nodal domains started with Sturm's oscillation theorem which states
that a vibrating string is divided into exactly $n$ nodal intervals by
the zeros of its $n$-th vibrational mode.  In an attempt to
generalize Sturm's theorem to manifolds in more than one dimension,
Courant formulated his nodal domains theorem for vibrating membranes,
which bounds the number of nodal domains of the $n$-th eigenfunction
by $n$ \cite{CourantHilbert_volume1}.  Pleijel has shown later that
Courant's bound can be realized only for finitely many eigenfunctions
\cite{Ple_cpam56}.  The study of nodal domains counts was revived
after Blum \emph{et al} have shown that nodal count statistics can be
used as a criterion for quantum chaos \cite{BGS02}.  A subsequent
paper by Bogomolny and Schmit illuminated the fascinating connection
between nodal statistics and percolation theory \cite{BS02}. A recent
paper by Nazarov and Sodin addresses the counting of nodal domains of
eigenfunctions of the Laplacian on $\mathbb{S}^{2}$ \cite{NS07}. They
prove that on average, the number of nodal domains increases linearly
with $n$, and the variance about the mean is bounded. At the same
time, it was shown that the nodal sequence - the sequence of numbers
of nodal domains ordered by the corresponding spectral parameters -
stores geometrical information about the domain \cite{GKS06}.
Moreover, there is a growing body of numerical and theoretical
evidence which shows that the nodal sequence can be used to
distinguish between isospectral manifolds \cite{GSS05,BKP07,KS08}.

As far as counting nodal domains on graphs is concerned, it has been
shown that trees behave as one-dimensional manifolds, and the analogue
of Sturm's oscillation theory applies
\cite{AlO_viniti92,PokPryObe_mz96,PokPry_rms04,Sch_wrcm06}, as long as
the eigenfunction does not vanish at any vertex.  Thus, denoting by
$\nu_{n}$ the number of nodal domains of the $n$'th eigenfunction, one
has $\nu_{n}=n$ for tree graphs.  Courant's theorem applies for the
eigenfunctions on a generic graph: $\nu_{n}\leq n$,
\cite{GnuSmiWeb_wrm04}.  It should be noted that there is a correction
due to multiplicity of the $n$-th eigenvalue and the upper bound
becomes $n+m-1$, where $m$ is the multiplicity
\cite{DavGlaLeySta_laa01}. In addition, a lower bound for the number
of nodal domains was discovered recently. It is shown in
\cite{Ber_cmp08} that the nodal domains count of the $n$-th
eigenfunction has no less than $n-\beta$ nodal domains, where $\beta$
is the number of independent cycles in the graph.  Again, this result
is valid for generic eigenfunctions, namely, the eigenfunction has no
zero entries on the vertices and belongs to a simple eigenvalue.  In a
few cases, the nodal counts of isospectral quantum graphs were shown
to be different, and thus provided further support to the conjecture
that nodal count resolves isospectrality \cite{BSS06}. A recent review
entitled {}``Nodal domains on graphs - How to count them and
why?\textquotedbl{} \cite{BanOreSmi_incol08} provides a detailed
answer to the question which appears in its title (as it was known
when the article was written). In particular, this manuscript contains
a numerically established formula for the nodal count of a specific
quantum graph, expressed in terms of the lengths of its edges. This
was the first, and to this date the only, explicit nodal count formula
for a non-trivial graph and in this manuscript we succeed in
rigorously proving it.

This leads us to focus on the study of nodal domains on quantum graphs
from a new point of view. Namely, we shall show that one can count the
number of nodal domains by using scattering data obtained by attaching
semi-infinite leads to the graph. Scattering on graphs was proposed as
a paradigm for chaotic scattering in \cite{KotSmi_prl00,KotSmi_jpa03}
with new applications and further developments in the field described
in \cite{DP10,DEL10,EL10}.  The work presented here is based on the
concepts and ideas developed in these studies.

The paper is organized in the following way. The current section
provides the necessary definitions and background from the theory of
quantum graphs. The conversion of finite graphs to scattering systems
by adding leads will be discussed in the next section and the
expression for the scattering matrix will be derived and studied in
detail. The connection of the scattering data with nodal domains and
the counting methods it yields will be presented in section
\ref{sec:applications_to_nodal_count}. Section
\ref{sec:disjoint_cycles} applies the above counting methods in order
to derive a formula for the nodal count of graphs with disjoint
cycles. This formula relates the nodal count to the spectra of the
graph and some of its subgraphs. Thus, information about the
eigenfunctions is exclusively obtained from the eigenvalue
spectrum. The last section relates the different ways of counting and
discusses possible future developments.

%%%%%%%%%%%%%%%%%%%%%%%%%%%%%%%%%%%%%%%%%%%%%%%%%%%%%%%%%%%%%
\subsection{Quantum graphs}
\label{sec:intro_quantum_graphs}

In this section we describe the quantum graph which is a metric graph
with a Shr\"odinger-type self-adjoint operator defined on it.  Let
$\Graph = (\Vertices, \Edges)$ be a connected graph with vertices
$\Vertices=\{v_j\}$ and edges $\Edges=\{e_j\}$.  The sets $\Vertices$
and $\Edges$ are required to be finite.

We are interested in metric graphs, i.e. the edges of $\Graph$ are
$1$-dimensional segments with finite positive lengths
$\{L_e\}_{e\in\Edges}$.  On the edge $e=(u,v)$ we use two coordinates,
$x_{e,v}$ and $x_{e,u}$.  The coordinate $x_{e,v}$ measures the
distance along the edge starting from the vertex $v$; $x_{e,u}$ is
defined similarly.  The two coordinates are connected by $x_{e,v} +
x_{e,u} = L_e$.  Sometimes, when the precise nature of the coordinate
is unimportant, we will simply write $x_e$ or even $x$.

A metric graph becomes quantum after being equipped with an additional
structure: assignment of a self-adjoint differential operator.  This operator will
be often called the {\bf Hamiltonian}.  In this paper we study the
zeros of the eigenfunctions of the negative second
derivative operator ($x$ is the coordinate along an edge)
\begin{equation}
  H\ :\ f(x) \mapsto -\frac{d^2f}{dx^2}.  \label{E:lap_op}
\end{equation}
or the more general Schr\"{o}dinger operator
\begin{equation}
  H\ :\ f(x) \mapsto -\frac{d^2f}{dx^2}+V(x)f(x),  \label{E:schrod}
\end{equation}
where $V(x)$ is a \emph{potential}.  Note that the value of a function
or the second derivative of a function at a point \emph{on the edge}
is well-defined, thus it is not important which coordinate, $x_{e,v}$
or $x_{e,u}$ is used.  This is in contrast to the first derivative
which changes sign according to the direction of the chosen
coordinate.

To complete the definition of the operator we need to specify its
domain.

\begin{definition}
  \label{D:spaces}
  We denote by $\widetilde{H}^2(\G)$ the space
  \begin{equation*}\label{E:tilde_space}
    \widetilde{H}^2(\G):=\bigoplus_{e\in\Edges}H^2(e),
  \end{equation*}
  which consists of the functions $f$ on $\G$ that on each edge $e$
  belong to the Sobolev space $H^2(e)$.  The restriction of $f$ to the
  edge $e$ is denoted by $f_e$.  The norm in the space
  $\widetilde{H}^2(\G)$ is
  \begin{equation*}\label{E:tilde_space_sum}
    \|f\|_{\widetilde{H}^2(\G)} := \sum\limits_{e\in\Edges}\|f_e\|^2_{H^2(e)}.
  \end{equation*}
\end{definition}

Note that in the definition of $\widetilde{H}^2(\G)$ the smoothness is
enforced along edges only, without any junction conditions at the
vertices at all.  However, the standard Sobolev trace theorem (e.g.,
\cite{EdmundsEvans_spectral}) implies that each function $f_e$ and its
first derivative have well-defined values at the endpoints of the edge
$e$.  Since the direction is important for the first derivative, we
will henceforth adopt the convention that, at an end-vertex of an edge
$e$, the derivative is calculate \emph{into} the edge and away from
the vertex.  That is the coordinate $x$ is chosen so that the vertex
corresponds to $x=0$.

To complete the definition of the operator we need to specify its
domain.  All conditions that lead to the operator \eqref{E:lap_op}
being self-adjoint have been classified in
\cite{KosSch_jpa99,Har_jpa00,Kuc_wrm04}.  We will only be interested
in the so-called extended $\delta$-type conditions, since they are the
only conditions that guarantee continuity of the eigenfunctions,
something that is essential if one wants to study changes of sign of
the said eigenfunctions.

\begin{definition}
  \label{def:domain}
  The domain $\cH$ of the operator \eqref{E:schrod}
  consists of the functions $f \in \widetilde{H}^2(\G)$ such that
  \begin{enumerate}
  \item $f$ is continuous on every vertex:
    \begin{equation}
      \label{eq:delta_cont}
      f_{e_1}(v) = f_{e_2}(v),
    \end{equation}
    for every vertex $v$ and edges $e_1$ and $e_2$ that have $v$ as an
    endpoint.
  \item the derivatives of $f$ at each vertex $v$ satisfy
    \begin{equation}
      \label{eq:delta_deriv}
      \sum_{e \in \Edges_v}\frac{df}{dx_e}(v)=\alpha _v f(v), \qquad
      \alpha_v \in \Reals,
    \end{equation}
    where $\Edges_v$ is the set of edges incident to $v$.
  \end{enumerate}
\end{definition}

Sometimes the condition \eqref{eq:delta_deriv} is written in a more
robust form
\begin{equation}
  \label{eq:delta_deriv_tan}
  \cos(\gamma_v) \sum_{e \in \Edges_v}\frac{df}{dx_e}(v) =
  \sin(\gamma_v) f(v),
\end{equation}
which is also meaningful for infinite values of $\alpha_v =
\tan(\gamma_v)$.  Henceforth we will understand $\alpha_v=\infty$ as
the Dirichlet condition $f(v)=0$.  The case $\alpha_v=0$ is often
referred to as the Neumann-Kirchhoff condition.

Finally, we will assume that the potential $V(x)$ is bounded and
piecewise continuous.  To summarize our discussion, the operator
\eqref{E:schrod} with the domain $\cH$ is self-adjoint for any choice
of real $\alpha_v$.  Since we only consider compact graphs, the
spectrum is real, discrete and with no accumulation points.  We will
slightly abuse notation and denote by $\sigma(\Gamma)$ the spectrum of
an operator $H$ defined on the graph $\Gamma$.  It will be clear from
the context which operator $H$ we mean and what are the vertex
conditions.

The eigenvalues $\lambda \in \sigma(\Gamma)$ satisfy
the equation
\begin{equation}
  \label{eq:eig_eq}
  -\frac{d^2f}{dx^2}+V(x)f(x) = \lambda f(x).
\end{equation}
It can be shown that under the conditions specified above the operator
$H$ is bounded from below \cite{Kuc_wrm04}.  Thus we can number the
eigenvalues in the ascending order, starting with $1$.  Sometimes we
abuse the notation further and also call $k$, such that $\lambda=k^2$,
an eigenvalue of the graph $\Gamma$.  This also should lead to no
confusion since, with the conditions $\Re(k) \geq 0$, $\Im(k) \geq 0$,
the relation between $k$ and $\lambda$ is bijective.

%%%%%%%%%%%%%%%%%%%%%%%%%%%%%%%%%%%%%%%%%%%%%%%%%%%%%%%%%%%%%%%%%
\subsection{Nodal count}
\label{sec:intro_nodal_count}

The main purpose of this article is to investigate the number of zeros
and the number of nodal domains of the eigenfunctions of a connected
quantum graph.  We aim to give formulas linking these quantities to
the geometry of the graphs and to the eigenvalues of the graph and its
subgraphs, but avoiding any reference to the values of the
eigenfunctions themselves.

The number of internal zeros or \emph{nodal points} of the function
$f$ will be denoted by $\mu(f)$.  We will use the shorthand $\mu_n$ to
denote $\mu(f_n)$ where $f_n$ is the $n$-th eigenfunction of the graph
in question.  The sequence $\{\mu_n\}$ will be called the \emph{nodal
  point count} sequence.  A \emph{positive (negative) domain\/} with
respect to $f$ is a maximal connected subset in $\G$ where $f$ is
positive (correspondingly, negative).  The total number of positive
and negative domains will be called the \emph{nodal domain count} of
$f$ and denoted by $\nu(f)$.  Similarly to $\mu_n$, we use $\nu_n$ as
a short-hand for $\nu(f_n)$ and refer to $\{\nu_n\}$ as the
\emph{nodal domain count sequence}.

The two quantities $\mu_n$ and $\nu_n$ are closely related, although,
due to the graph topology, the relationship is more complex than on a
line, where $\nu=\mu+1$.  Namely, one can easily establish the bound
\begin{equation}\label{E:mu_nu}
  \mu - \beta_\G + 1 \leq \nu \leq \mu + 1,
\end{equation}
where $\beta_\G$ is the cyclomatic number of $\Graph$.  The
cyclomatic number can be computed as
\begin{equation}
  \label{eq:cyclom}
  \beta_\Gamma = |\Edges| - |\Vertices| + 1.
\end{equation}
The cyclomatic number has several related interpretations: it counts
the number of independent cycles in the graph (hence the name) and
therefore it is the first Betti number of $\Graph$ (hence the notation
$\beta$).  It also counts the minimal number of edges that need to be
removed from $\Gamma$ to turn it into a tree.  Correspondingly,
$\beta_\Gamma=0$ if and only if $\Gamma$ is a tree.

There is another simple but useful observation relating the cycles on
the graph and the number of zeros: if the eigenfunction of the graph
does not vanish on the vertices of the graph, the number of zeros on
any cycle of the graph is even.  Indeed, an eigenfunction of a second
order operator can only have simple zeros, thus at every zero $f$
changes sign.  On a cycle there must be an even number of sign
changes.

As mentioned earlier, we will be interested in the number of zeros and
nodal domains of the eigenfunctions of operators~\eqref{E:lap_op} and
\eqref{E:schrod} on graphs.  According to the well known ODE theorem
by Sturm \cite{Stu_jmpa36,Stu_jmpa36a,Hin_incol05}, the zeros of the
$n$-th eigenfunction of the operator of type \eqref{E:schrod} on an
interval divide the interval into $n$ nodal domains.  By contrast, in
the corresponding question in $\Reals^2$ only an upper bound is
possible, given by the Courant's nodal line theorem
\cite{CourantHilbert_volume1}, $\nu_n \leq n$.  In a series of papers
\cite{AlO_viniti92,PokPryObe_mz96,GnuSmiWeb_wrm04,Sch_wrcm06,Ber_cmp08},
it was established that a generic eigenfunction of the quantum graph
satisfies both an upper and a lower bound.  Namely, let $\lambda_n$ be
a simple eigenvalue of $-\frac{d^2}{dx^2} + V(x)$ on a graph $\Gamma$
and its eigenfunction $f_n$ be non-zero at all vertices of
$\Gamma$. Then the number of the nodal domains of $f_n$ satisfies
\begin{equation}
 \label{eq:nodal_domains_bound}
  n - \beta_\G  \leq \nu_n \leq n.
\end{equation}
Similarly, for the number of zeros we have
\begin{equation}
 \label{eq:nodal_zeros_bound}
  n - 1 \leq \mu_n \leq n - 1 +\beta_\G.
\end{equation}
Note that the upper bound in \eqref{eq:nodal_zeros_bound} follows from
the upper bound in \eqref{eq:nodal_domains_bound} and inequality
\eqref{E:mu_nu}.  The lower bound in \eqref{eq:nodal_zeros_bound}
requires an independent proof which is given in \cite{BerKuc_prep10}.  An
interesting feature of quantum graphs is that, unlike the $\Reals^d$ case,
the upper bound $\nu_n \leq n$ is in general not valid for degenerate
eigenvalues.

In the present paper we combine these a priori bounds with scattering
properties of a certain family of graphs to derive formulas for the
nodal counts $\mu_n$ and $\nu_n$.

%%%%%%%%%%%%%%%%%%%%%%%%%%%%%%%%%%%%%%%%%%%%%%%%%%%%%%%%%%%%%
\subsection{Quantum evolution map}
\label{sec:scat_mat}

When the potential $V(x)$ is equal to zero, the eigenvalue equation
\begin{equation}
  \label{eq:eig_eq_V0}
  -\frac{d^2f}{dx^2} = k^2 f(x),
\end{equation}
has, on each edge, a solution that is a linear combination of the two
exponents $\rme^{\pm\rmi k x}$ if $k\neq 0$.  We will write it in the
form
\begin{equation}
  \label{eq:sol_on_edge_a}
  f_e(x_{e,v}) = a_{e,v}^{\,in} \exp(-\rmi k x_{e,v}) + a_{e,v}^{\,out} \exp(\rmi k x_{e,v}),
\end{equation}
where the variable $x_{e,v}$ measures the distance from the vertex
$v$ of the edge $e$.  The coefficient $a_{e,v}^{\,in}$ is the incoming
amplitude on the edge $e$ (with respect to the vertex $v$) and
$a_{e,v}^{\,out}$ is correspondingly the outgoing amplitude.
However the same function can be expressed using the coordinate
$x_{e,u}$ as
\begin{equation}
  \label{eq:sol_on_edge_other}
  f_e(x_{e,u}) = a_{e,u}^{\,in} \exp(-\rmi k x_{e,u}) + a_{e,u}^{\,out} \exp(\rmi k x_{e,u}).
\end{equation}
Since these two expressions should define the same function and since
the two coordinates are connected, through the identity $x_{e,v} +
x_{e,u} = L_e$, we arrive to the following relations
\begin{equation}
  \label{eq:a_connection}
  a_{e,v}^{\,in} = \rme^{\rmi k L_e} a_{e,u}^{\,out}
  \qquad
  a_{e,u}^{\,in} = \rme^{\rmi k L_e} a_{e,v}^{\,out}
\end{equation}

Fixing a vertex $v$ of degree $d_v$ and using \eqref{eq:sol_on_edge_a}
to describe the solution on the edges $\Edges_v$ adjacent to $v$ we
obtain from \eqref{eq:delta_cont} and \eqref{eq:delta_deriv} $d_v$
equations on the $2d_v$ variables $a_{e,v}^{\,out}$ and
$a_{e,v}^{\,in}$.  These equations can be rearranged as
\begin{equation}
  \label{eq:vert_scat}
  \vec{a}_v^{\,out} = \sigma^{(v)}(k) \vec{a}_v^{\,in},
\end{equation}
where $\vec{a}_v^{\,out}$ and $\vec{a}_v^{\,in}$ are the vectors of
the corresponding coefficients and $\sigma^{(v)}(k)$ is a $d_v\times
d_v$ unitary matrix.  The matrix $\sigma^{(v)}(k)$ is called the
vertex-scattering matrix, it depends on $k$ for values of $\alpha_v$
other than $0$ or $\infty$ and its entries have been calculated in
\cite{KotSmi_ap99}.

Collect all coefficients $a_{e,v}^{\,in}$ into a vector $\vec{a}$ of size
$2|\Edges|$ and define the matrix $J$ acting on $\vec{a}$ by requiring
that it swaps around $a_{e,v}^{\,in}$ and $a_{e,u}^{\,in}$ for all $e=(u,v)$.
Then, collecting equations \eqref{eq:vert_scat} into one system and
using connection~\eqref{eq:a_connection} and the matrix $J$ to rewrite
everything in terms of $\vec{a}$ we have
\begin{equation*}
    J \rme^{-\rmi k L} \vec{a} = \Sigma(k) \vec{a}.
\end{equation*}
Here all matrices have the dimension equal to double the number of
edges, $2|\Edges|$.  The matrix $L$ is the diagonal matrix of edge
lengths, each length appearing twice and $\Sigma$ is the
block-diagonalizable matrix with individual $\sigma^{(v)}$ as blocks,
namely
\begin{equation*}
  \Sigma_{(e_1,v_1), (e_2,v_2)} = \delta_{v_1,v_2}
  \sigma^{(v_1)}_{e_1, e_2}.
\end{equation*}
Noting that $J^{-1}=J$, the condition on $\vec{a}$ can be rewritten as
\begin{equation}
  \label{eq:bond_scat}
  \vec{a} = \rme^{\rmi k L} J \Sigma(k) \vec{a},
\end{equation}
The unitary matrix $U(k):= \rme^{\rmi k L} J \Sigma(k)$ is variously
called the \emph{bond scattering matrix} \cite{KotSmi_ap99} or the
\emph{quantum evolution map} \cite{GnuSmi_ap06}.  The matrix $J
\Sigma(k)$ describes the scattering of the waves on the vertices of
the graph and $\rme^{\rmi k L}$ gives the phase shift acquired by the
waves while traveling along the edges.  The quantum evolution map can
be used to compute the non-zero eigenvalues of the graph through the
equation
\begin{equation}
  \label{eq:secular_cond}
  \det\left(\Id - U(k) \right) = 0.
\end{equation}
We stress that $U(k)$ is not a scattering matrix in the conventional
sense, since the graph is not open.  Turning graph into a scattering
system is the subject of the next section.

%%%%%%%%%%%%%%%%%%%%%%%%%%%%%%%%%%%%%%%%%%%%%%%%%%%%%%%%%%%%
\section{Attaching infinite leads to the graph}
\label{sec:scattering_matrix}

A quantum graph may be turned into a scattering system by attaching
any number of infinite leads to some or all of its vertices. This idea
was already discussed in
\cite{KotSmi_ap99,KotSmi_jpa03,KotScha_wrm04}.  We repeat it here and
further investigate the analytic and spectral properties of the
graph's scattering matrix, that would enable the connection to the
nodal count.

Let $\Gamma=\left(\Vertices,\Edges\right)$ be a quantum graph. We
choose some $M\leq\left|\Vertices\right|$ out of its vertices and
attach to each of them an infinite lead. We call these $M$ vertices,
the marked vertices, and supply them with the same vertex conditions
as they had in $\Gamma$. Namely, each marked vertex $v$ retains its
$\delta$-type condition with the same parameter $\alpha_{v}$ (recall
\eqref{eq:delta_deriv}). We denote the extended graph that contains
the leads by $\tilde{\Gamma}$ and investigate its generalized
eigenfunctions.

The solution of the eigenvalue equation, \eqref{eq:eig_eq_V0}, on a
lead $l$ which is attached to the vertex $v$, can be written in the
form
\begin{equation}
  f_{l}(x_{l,v}) = c_{l,v}^{\,in}\exp(-\rmi kx_{l,v})
  + c_{l,v}^{\,out}\exp(\rmi kx_{l,v}).
\label{eq:sol_on_lead}
\end{equation}
The variable $x_{l,v}\in [0,\infty)$ measures the distance from the
vertex $v$ along the lead $l$ and the coefficients $c_{l,v}^{\,in},\,
c_{l,v}^{\,out}$ are the incoming and outgoing amplitudes on the lead
$l$ (compare with \eqref{eq:sol_on_edge_a}).  We use the notation
$\vec{c}^{\,out}$, $\vec{c}^{\,in}$ for the vectors of the
corresponding coefficients and follow the derivation that led to
\eqref{eq:bond_scat} in order to obtain
\begin{equation}
  \left(
    \begin{array}{c}
      \vec{c}^{\,out}\\
      \vec{a}
    \end{array}
  \right)
  =\rme^{\rmi kL_o}J_o\Sigma(k) \left(
    \begin{array}{c}
      \vec{c}^{\,in}\\
      \vec{a}
    \end{array}
  \right).
  \label{eq:full_graph_scatter0}
\end{equation}
All the matrices above are square matrices of dimension
$2\left|\Edges\right|+M$.  There are two differences from
equation~\eqref{eq:bond_scat}.  First, in the matrix $L_o$ each lead
is represented by a single zero on the diagonal, in contrast to the
positive lengths of the graph edges, appearing twice each.  The matrix
$J_o$ swaps around the coefficients $a$ corresponding to opposite
directions on internal edges, but acts as an identity on the leads.
These differences arise because for an infinite lead we do not have
two representations \eqref{eq:sol_on_edge_a} and
\eqref{eq:sol_on_edge_other} and therefore no connection formulas
\eqref{eq:a_connection} allowing to eliminate outgoing coefficients.
Writing the matrix $\rme^{\rmi kL}J\Sigma(k)$ in blocks corresponding
to the edge coefficients and lead coefficients results in
\begin{equation}
  \left(\begin{array}{c}
      \vec{c}^{\,out}\\
      \vec{a}
    \end{array}\right)
  =
  \left(\begin{array}{cc}
      R(k) & T_o(k)\\
      T_i(k) & \tilde{U}(k)
    \end{array}\right)
  \left(\begin{array}{c}
      \vec{c}^{\,in}\\
      \vec{a}
    \end{array}\right),
  \label{eq:full_graph_scatter}
\end{equation}
where the dimensions of the matrices $R$, $T_o$, $T_i$ and $\tU$ are
$M\times M$, $M\times2|\Edges|$, $2|\Edges|\times M$ and
$2|\Edges|\times2|\Edges|$ correspondingly.  We stress that the matrix
$\tilde{U}(k)$ describes the evolution of the waves inside the compact
graph and has eigenvalues that can now lie \emph{inside} the unit
circle due to the ``leaking'' of the waves into the leads.

Equation \eqref{eq:full_graph_scatter} can be used to define a unitary
scattering matrix $S$ such that $\vec{c}^{\,out}=S\,\vec{c}^{\,in}$, as
described in the following theorem.

\begin{thm}
  \label{thm:scat_mat}
  Let
  \begin{equation}
    \label{eq:full_graph_scatter_Q}
    \left(\begin{array}{c}
        \vec{c}^{\,out}\\
        \vec{a}
      \end{array}\right)
    = Q
    \left(\begin{array}{c}
        \vec{c}^{\,in}\\
        \vec{a}
      \end{array}\right),
    \qquad \mbox{where} \qquad
    Q=\left(\begin{array}{cc}
        R & T_{o}\\
        T_{i} & \tU
      \end{array}\right)
  \end{equation}
  is a unitary matrix with the blocks $R$, $T_o$, $T_i$ and $\tU$ of
  sizes $M\times M$, $M\times2|\Edges|$, $2|\Edges|\times M$ and
  $2|\Edges|\times2|\Edges|$ correspondingly.  For every choice of
  $\vec{c}^{\,in}\in\mathbb{C}^{M}$, consider
  relation~\eqref{eq:full_graph_scatter_Q} as a set of linear
  equations in the variables $\vec{c}^{\,out}\in\mathbb{C}^{M}$ and
  $\vec{a}\in\mathbb{C}^{2|\Edges|}$.  Then
  \begin{enumerate}
  \item \label{enu:scat-1} There exists at least one $2|\Edges|\times
    M$ matrix $C$ such that
    \begin{equation}
      \left(\id-\tU\right)C = T_{i},
      \label{eq:C-definition}
    \end{equation}
  \item \label{enu:scat-2} Let
    \begin{equation}
      \label{eq:S-definition}
      S = R + T_{o} C.
    \end{equation}
    Then $S$ is a unitary matrix independent of the particular choice of
    $C$ in equation~\eqref{eq:C-definition}.
  \item \label{enu:scat-3} The solutions of \eqref{eq:full_graph_scatter} are given by
    \begin{align}
      \label{eq:c_out}
      &\vec{c}^{\,out} = S \vec{c}^{\,in}, \\
      &\vec{a} \in \{ C \vec{c}^{\,in} + \Ker(\id-\tU)\}.
    \end{align}
    In particular, $\vec{c}^{\,out}$ is defined uniquely by $\vec{c}^{\,in}$.
  \end{enumerate}
\end{thm}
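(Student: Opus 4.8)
The plan is to read \eqref{eq:full_graph_scatter_Q} as the pair of equations $\vec{c}^{\,out} = R\vec{c}^{\,in} + T_o\vec{a}$ and $(\id-\tU)\vec{a} = T_i\vec{c}^{\,in}$, and to extract all three assertions from the single hypothesis that $Q$ is unitary. Were $\id-\tU$ invertible, one could simply set $C=(\id-\tU)^{-1}T_i$ and reduce the theorem to a routine Schur-complement computation; the whole difficulty is that $\tU$ may have $1$ as an eigenvalue, so $\id-\tU$ is in general singular. My strategy is therefore to first isolate two structural facts describing how the off-diagonal blocks interact with $\Ker(\id-\tU)$, and then read off the three parts from them.

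The two facts I would establish are (a) $T_o$ annihilates $\Ker(\id-\tU)$, and (b) $\Range(T_i)\subseteq\Range(\id-\tU)$. Both follow from norm preservation under the unitary $Q$ (respectively $Q^*$) applied to vectors of the special form $\binom{0}{\vec{w}}$. Indeed, if $\tU\vec{w}=\vec{w}$ then
\[
  Q\binom{0}{\vec{w}} = \binom{T_o\vec{w}}{\tU\vec{w}} = \binom{T_o\vec{w}}{\vec{w}},
\]
and equating norms gives $\|T_o\vec{w}\|^2+\|\vec{w}\|^2=\|\vec{w}\|^2$, hence $T_o\vec{w}=0$, which is (a). Applying the same computation to the unitary $Q^*$ and a fixed vector $\vec{v}$ of $\tU^*$ yields $T_i^*\vec{v}=0$; thus $\Ker(\id-\tU^*)\subseteq\Ker(T_i^*)$, and passing to orthogonal complements gives $\Range(T_i)\subseteq\Range(\id-\tU)$, which is (b).

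With (a) and (b) in hand the three parts fall out quickly. Part \ref{enu:scat-1} is precisely (b) read column by column: each column of $T_i$ lies in $\Range(\id-\tU)$, so a matrix $C$ solving $(\id-\tU)C=T_i$ exists. For the independence claim in Part \ref{enu:scat-2}, if $C_1,C_2$ both solve this equation then the columns of $C_1-C_2$ lie in $\Ker(\id-\tU)$, so (a) gives $T_o(C_1-C_2)=0$ and hence $R+T_oC_1=R+T_oC_2$. Unitarity of $S$ I would obtain not by a direct block expansion of $(R+T_oC)^*(R+T_oC)$ but by a flux-conservation argument: for any $\vec{c}^{\,in}$, choose a solution $\vec{a}$ of $(\id-\tU)\vec{a}=T_i\vec{c}^{\,in}$ (which exists by Part \ref{enu:scat-1}); then the lower block of $Q\binom{\vec{c}^{\,in}}{\vec{a}}$ equals $\vec{a}$ itself, so $Q$ maps $\binom{\vec{c}^{\,in}}{\vec{a}}$ to $\binom{\vec{c}^{\,out}}{\vec{a}}$, and unitarity of $Q$ forces $\|\vec{c}^{\,in}\|^2+\|\vec{a}\|^2=\|\vec{c}^{\,out}\|^2+\|\vec{a}\|^2$, i.e. $\|S\vec{c}^{\,in}\|=\|\vec{c}^{\,in}\|$ for every $\vec{c}^{\,in}\in\CC^M$. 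A norm-preserving map of a finite-dimensional space to itself is unitary.

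Finally, Part \ref{enu:scat-3} is just the explicit solution of the two equations. The second equation has solution set $C\vec{c}^{\,in}+\Ker(\id-\tU)$, giving the stated formula for $\vec{a}$; substituting any such $\vec{a}$ into the first equation gives $\vec{c}^{\,out}=R\vec{c}^{\,in}+T_o\vec{a}=R\vec{c}^{\,in}+T_oC\vec{c}^{\,in}=S\vec{c}^{\,in}$, where the kernel ambiguity of $\vec{a}$ drops out precisely by (a), so $\vec{c}^{\,out}$ is uniquely determined. The main obstacle throughout is the possible singularity of $\id-\tU$: once facts (a) and (b) pin down how $T_o$ and $T_i$ see its kernel and cokernel, the non-uniqueness of $\vec{a}$ becomes harmless and $S$ is both well defined and unitary. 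I expect the only place needing genuine care is verifying that the two norm-balance identities collapse exactly to (a) and (b) without sign or adjoint slips.
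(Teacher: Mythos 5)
Your proposal is correct, and its backbone coincides with the paper's own proof: your facts (a) and (b) are exactly the paper's Lemma~\ref{lem:ker} (equations \eqref{eq:ker-2} and \eqref{eq:ker-1}), established by the identical norm-balance trick of applying $Q$ and $Q^*$ to vectors of the form $\binom{0}{\vec{w}}$, and your handling of the existence of $C$, the well-definedness of $T_oC$, and part~\eqref{enu:scat-3} matches the paper's treatment of the singular case step for step. The one genuinely different ingredient is your proof that $S$ is unitary. The paper proves $S^*S=\id$ by a direct block computation: it expands $(R+T_oC)^*(R+T_oC)$, substitutes the unitarity identities \eqref{eq:unitarity_of_Q} for the blocks of $Q$, and collapses the result using $T_i+\tU C=C$; the paper then observes that this computation never used invertibility of $\id-\tU$, which is how it covers the case $\det(\id-\tU)=0$. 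You instead argue by flux conservation: for each $\vec{c}^{\,in}$, choosing $\vec{a}$ with $(\id-\tU)\vec{a}=T_i\vec{c}^{\,in}$ makes the lower block of $Q\binom{\vec{c}^{\,in}}{\vec{a}}$ reproduce $\vec{a}$ itself, so unitarity of $Q$ forces $\|S\vec{c}^{\,in}\|=\|\vec{c}^{\,in}\|$, and a linear isometry of $\CC^M$ onto itself is unitary. This is sound --- the kernel ambiguity in $\vec{a}$ is harmless by (a), so $\vec{c}^{\,out}=S\vec{c}^{\,in}$ for any admissible $\vec{a}$, and isometry plus finite dimension does yield $S^*S=SS^*=\id$ --- and it buys a uniform, computation-free argument that handles the singular and nonsingular cases simultaneously, with a transparent physical meaning (no amplitude is lost to the interior because the internal coefficients are reproduced exactly). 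What the paper's algebraic route buys in exchange is the explicit identities \eqref{eq:unitarity_of_Q}, which it reuses later, e.g.\ in the proof of Lemma~\ref{lem:counterclockwise}, so the block computation is not wasted effort in the larger context of the paper.
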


The proof of the theorem distinguishes between the case of a trivial
$\Ker(\id-\tU)$ and the case of singular $\id-\tU$. The
following lemma makes the treatment of the latter case easier.

\begin{lem}
  \label{lem:ker}
  Let $Q$ be as in theorem \ref{thm:scat_mat}.
  Then the following hold:
  \begin{align}
    \label{eq:ker-1}
    &\Range T_{i} \subseteq \Range\,(\id-\tU), \\
    \label{eq:ker-2}
    &\Ker\,(\id-\tU) \subseteq \Ker T_o
  \end{align}
\end{lem}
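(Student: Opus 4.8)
The plan is to use the unitarity of $Q$ in both of its equivalent forms, $Q^*Q = \id$ and $QQ^* = \id$, and to read off from each the lower-right block relation among $R, T_o, T_i, \tU$. Writing $Q^*$ as the conjugate-transpose block matrix with blocks $R^*, T_i^*$ in the top row and $T_o^*, \tU^*$ in the bottom row, the $(2,2)$ block of $Q^*Q = \id$ gives
\begin{equation*}
  T_o^* T_o + \tU^* \tU = \id,
\end{equation*}
while the $(2,2)$ block of $QQ^* = \id$ gives
\begin{equation*}
  T_i T_i^* + \tU \tU^* = \id.
\end{equation*}
These two identities contain everything required.

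For the inclusion \eqref{eq:ker-2}, I would take any $v \in \Ker(\id - \tU)$, so that $\tU v = v$. Pairing the first identity with $v$ yields $\|T_o v\|^2 + \|\tU v\|^2 = \|v\|^2$; since $\|\tU v\| = \|v\|$, this forces $\|T_o v\| = 0$, i.e. $v \in \Ker T_o$. For the inclusion \eqref{eq:ker-1}, I would first reformulate it through the range--kernel duality $\Range(\id - \tU) = \big(\Ker(\id - \tU^*)\big)^\perp$. It then suffices to show that $\Range T_i$ is orthogonal to $\Ker(\id - \tU^*)$; since $\langle T_i u, w\rangle = \langle u, T_i^* w\rangle$, this reduces to proving $T_i^* w = 0$ whenever $\tU^* w = w$. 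That claim is the exact mirror of the previous step: pairing the second identity with such a $w$ gives $\|T_i^* w\|^2 + \|\tU^* w\|^2 = \|w\|^2$, and $\|\tU^* w\| = \|w\|$ forces $T_i^* w = 0$.

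The argument is essentially routine once the two block identities are in hand, and both inclusions come from the same energy-balance idea (columns, respectively rows, of $Q$ are orthonormal). The only point requiring real care is the direction of \eqref{eq:ker-1}: one cannot argue with $\id - \tU$ directly but must pass to its adjoint $\id - \tU^*$ via the orthogonal-complement description of the range, and must match each block identity (one coming from $Q^*Q$, the other from $QQ^*$) to the correct inclusion. Beyond this bookkeeping of adjoints I anticipate no genuine obstacle.
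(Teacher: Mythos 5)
Your proof is correct and follows essentially the same route as the paper: both establish \eqref{eq:ker-1} by reducing it via the duality $\Range A = \left(\Ker A^*\right)^\perp$ to showing $\Ker(\id-\tU^*) \subseteq \Ker T_i^*$, and both inclusions then follow from the same energy-balance argument using unitarity of $Q$ and $Q^*$. The paper phrases this as norm preservation of $Q^*$ (resp.\ $Q$) applied to the vector $(0,\vec{v})^T$, which is just your block identities $T_iT_i^*+\tU\tU^*=\id$ and $T_o^*T_o+\tU^*\tU=\id$ paired against $v$, so the difference is purely presentational.
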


\begin{proof}
  Since in a
  finite-dimensional space $\Range A = (\Ker A^*)^\perp$,
  equation \eqref{eq:ker-1} is equivalent to
  \begin{equation*}
    \left(\Ker T_{i}^*\right)^\perp \subseteq \left(\Ker\,(\id-\tU^*)\right)^\perp,
  \end{equation*}
  which is in turn equivalent to
  \begin{equation*}
    \Ker\,(\id-\tU^*) \subseteq \Ker T_{i}^*.
  \end{equation*}
  Let $\vec{v}\in\Ker(\id-\tU^{*})$.  Using the unitarity of $Q^{*}$
  we get
  \begin{equation}
    \left\Vert \left(\begin{array}{c}
          0\\
          \vec{v}
        \end{array}\right)\right\Vert
    = \left\Vert \left(\begin{array}{cc}
          R^{*} & T_{i}^{*}\\
          T_{o}^{*} & \tU^{*}
        \end{array}\right)
      \left(\begin{array}{c}
          0\\
          \vec{v}
        \end{array}\right)\right\Vert
    =\left\Vert \left(\begin{array}{c}
          T_{i}^{*}\vec{v}\\
          \tU^{*}\vec{v}\end{array}\right)\right\Vert
    =\left\Vert \left(\begin{array}{c}
          T_{i}^{*}\vec{v}\\
          \vec{v}\end{array}\right)\right\Vert.
    \label{eq:existence_of_sol}
  \end{equation}
  Equating the left-hand side to the right-hand side of the equation
  above we get $T_{i}^{*}\vec{v}=0$.  Equation \eqref{eq:ker-2} is
  proved in a similar manner by replacing $Q^{*}$ with $Q$ in the
  above.
\end{proof}

\begin{proof}[Proof of theorem \ref{thm:scat_mat}]
  \emph{Case 1:} $\det(\id-\tU)\neq0$.

  To show part \eqref{enu:scat-1} we simply set $C=(\id-\tU)^{-1}\,T_{i}$.
  Furthermore, equation \eqref{eq:full_graph_scatter} has a unique solution, given by
  \begin{align}
    \vec{a} & =\left(\id-\tU\right)^{-1}T_{i}\vec{c}^{\, in}\label{eq:avec_in_terms_of_cinvec}\\
    \vec{c}^{\, out} & =
    \left(R+T_{o}\left(\id-\tU\right)^{-1}T_{i}\right)\vec{c}^{\, in},\label{eq:coutvec_in_terms_of_cinvec}
  \end{align}
  which proves part \eqref{enu:scat-3}.

  The unique definition of $C$ guarantees the uniqueness of
  $S=R+T_{o}C$.  To finish the proof of part \eqref{enu:scat-2} we use
  the unitarity of $Q$, which provides the identities
  \begin{align}
    R^{*}R+T_{i}^{*}T_{i} & = T_{o}^{*}T_{o}+\tU^{*}\tU=\id\nonumber \\
    R^{*}T_{o}+T_{i}^{*}\tU & =
    T_{o}^{*}R+\tU^{*}T_{i}=\mathbf{0}.
    \label{eq:unitarity_of_Q}
  \end{align}
  From here we get
  \begin{align*}
    S^{*}S &= R^{*}R+R^{*}T_{o}C+C^{*}T_{o}^{*}R+C^{*}T_{o}^{*}T_{o}C \\
    &= \id-T_{i}^{*}T_{i}-T_{i}^{*}\tU
    C-C^{*}\tU^{*}T_{i}+C^{*}\left(\id-\tU^{*}\tU\right)C.
 \end{align*}
  Expanding, factorizing and using the definition of $C$ in the form
  $T_i + \tU C =C$, we arrive to
  \begin{align*}
    S^{*}S &= \id + C^*C - (T_i^* + C^*\tU^*) (T_i + \tU C) = \id
  \end{align*}

  \emph{Case 2:} $\det(\id-\tU)=0$.

  Existence of a solution $C$ to the equation $(\id-\tU)C=T_{i}$ is
  guaranteed by equation~\eqref{eq:ker-1} of Lemma~\ref{lem:ker}.

  The columns of $C$ are defined up to addition of arbitrary vectors
  from $\Ker\,(\id-\tU)$.  However, Lemma~\ref{lem:ker},
  equation~\eqref{eq:ker-2} implies that these vectors are in the
  null-space of $T_o$, therefore the product $T_oC$ has unique value
  independent of the particular choice of the solution $C$.  The proof
  of the unitarity of $S$ has already been given in case 1 and did not
  rely on the invertibility of $\id-\tU$.  This proves part
  \eqref{enu:scat-2}.

  The last $2|\Edges|$ equations of \eqref{eq:full_graph_scatter} are
  $(\id-\tU)\vec{a}=T_{i}\vec{c}^{\,in}$.
  From~\eqref{eq:C-definition}, all solutions $\vec{a}$ of this
  equation are given by $C\vec{c}^{\,in} + \Ker(\id-\tU)$.  On the
  other hand, the first $M$ equations of \eqref{eq:full_graph_scatter}
  are $\vec{c}^{\,out} = R \vec{c}^{\,in} + T_o \vec{a}$ and
  substituting the already obtained expression for $\vec{a}$ and using
  \eqref{eq:ker-2} we finally arrive to $\vec{c}^{\,out} =
  (R+T_oC) \vec{c}^{\,in}$.  This finished the proof of the theorem.
\end{proof}

We would like to study the unitary scattering matrices, $S(k)$ as a
one-parameter family in $k\in\Reals$.  The matrix $Q(k)$ is a meromorphic
function of $k$ in the entire complex plane \cite{KosSch_jpa99}.  For
all $k$ values which satisfy $\det(\id-\tU)\neq0$, $S(k)$ is given
explicitly by
\begin{equation}
  \label{eq:S_explicit}
  S(k)=R+T_{o}\left(\id-\tU\right)^{-1} T_{i},
\end{equation}
and $S(k)$ is therefore also a meromorphic function in $k$ at these
values.  The significance of the values of $k$ for which
$\det(\id-\tU)=0$ is explained in the following lemma.

\begin{lem}
  \label{lem:special_k_values}
  Let $\Gamma^{*}$ be the quantum graph obtained from the original
  compact quantum graph, $\Gamma$, by imposing the condition $f(v)=0$ at all of its $M$ marked vertices, \emph{in addition} to the conditions already imposed there. Then the spectrum
  $\sigma\left(\Gamma^{*}\right)$ coincides with the set
  \begin{equation}
    \label{eq:setJ}
    \Delta = \left\{ k^2 \in \Reals\,\left|\,\det\left(\id-\tU\right)=0\right.\right\}.
  \end{equation}
\end{lem}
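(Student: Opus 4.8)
The plan is to prove the set equality $\sigma(\Gamma^{*})=\Delta$ by identifying $\Delta$ with those $k^{2}$ for which $\tilde{\Gamma}$ carries a generalized eigenfunction that vanishes identically on every lead, and then observing that the restriction of such a function to $\Gamma$ is exactly an eigenfunction of $\Gamma^{*}$. The starting point is that $\det(\id-\tU(k))=0$ holds precisely when there is a nonzero $\vec{a}$ with $\tU\vec{a}=\vec{a}$, i.e. $\vec{a}\in\Ker(\id-\tU)$; I will interpret this vector as the internal-edge amplitudes in a solution of the full scattering system \eqref{eq:full_graph_scatter}.

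For the inclusion $\Delta\subseteq\sigma(\Gamma^{*})$, take $\vec{a}\neq0$ in $\Ker(\id-\tU)$ and set $\vec{c}^{\,in}=0$. Equation \eqref{eq:ker-2} of Lemma~\ref{lem:ker} gives $T_{o}\vec{a}=0$, so the bottom block of \eqref{eq:full_graph_scatter} reads $\vec{a}=\tU\vec{a}$ and the top block reads $\vec{c}^{\,out}=T_{o}\vec{a}=0$. Thus $(0,\vec{a})$ solves \eqref{eq:full_graph_scatter} with vanishing lead amplitudes, and by the derivation preceding \eqref{eq:full_graph_scatter} it corresponds to a generalized eigenfunction $f$ of $\tilde{\Gamma}$ at energy $k^{2}$. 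Since $\vec{c}^{\,in}=\vec{c}^{\,out}=0$, formula \eqref{eq:sol_on_lead} forces $f\equiv0$ on every lead. Examining a marked vertex $v$: continuity \eqref{eq:delta_cont} against the vanishing lead gives $f(v)=0$, while the derivative condition \eqref{eq:delta_deriv} for $\tilde{\Gamma}$, in which the lead contributes $0$ to the sum and $\alpha_{v}f(v)=0$ to the right-hand side, reduces to $\sum_{e\in\Edges_{v}}\frac{df}{dx_{e}}(v)=0$. These are precisely the conditions obtained by adjoining $f(v)=0$ to the original $\delta$-condition, so $f|_{\Gamma}$ satisfies the $\Gamma^{*}$ conditions at marked vertices and the unchanged $\delta$-conditions elsewhere. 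As $\vec{a}\neq0$ and $k\neq0$, $f$ is nonzero on some edge, hence $f|_{\Gamma}$ is an eigenfunction of $\Gamma^{*}$ and $k^{2}\in\sigma(\Gamma^{*})$.

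For the reverse inclusion, let $g$ be an eigenfunction of $\Gamma^{*}$ at $k^{2}$ and extend it by zero on all leads to a function $\hat{g}$ on $\tilde{\Gamma}$. On each internal edge and each lead $\hat{g}$ solves the eigenvalue equation (trivially on the leads). At a marked vertex $v$ the condition $g(v)=0$ matches the zero lead value, so $\hat{g}$ is continuous there, and the $\delta$-condition of $\tilde{\Gamma}$ reads $\sum_{e\in\Edges_{v}}\frac{dg}{dx_{e}}(v)+0=\alpha_{v}g(v)=0$, which holds because $g$ satisfies the $\Gamma^{*}$ conditions; non-marked vertices are untouched. Hence $\hat{g}$ is a generalized eigenfunction of $\tilde{\Gamma}$ with $\vec{c}^{\,in}=\vec{c}^{\,out}=0$, and the internal block of \eqref{eq:full_graph_scatter} yields $\tU\vec{a}=\vec{a}$ with $\vec{a}\neq0$ (since $g\not\equiv0$). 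Therefore $\det(\id-\tU(k))=0$ and $k^{2}\in\Delta$.

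The crux I expect is the use of Lemma~\ref{lem:ker} in the forward direction: the choice $\vec{c}^{\,in}=0$ only suppresses the incoming lead wave, and it is \eqref{eq:ker-2} that additionally forces the outgoing amplitude $\vec{c}^{\,out}=T_{o}\vec{a}$ to vanish. Only with both lead amplitudes zero does $f$ vanish identically on the leads, producing the genuine Dirichlet value $f(v)=0$ rather than a mere no-incoming-flux relation, together with the companion identity $\sum_{e\in\Edges_{v}}\frac{df}{dx_{e}}(v)=0$; the conjunction of these two is exactly what the conditions of $\Gamma^{*}$ demand. I would also record the minor caveats that the amplitude formalism, and hence $\tU$, presupposes $k\neq0$, and that both $\sigma(\Gamma^{*})$ and $\Delta$ are subsets of $\Reals$, so the comparison is meaningful.
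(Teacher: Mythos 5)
Your proof is correct and takes essentially the same route as the paper's: in one direction you extend an eigenfunction of $\Gamma^{*}$ by zero onto the leads, and in the other you take $\vec{a}\in\Ker(\id-\tU)$ and exhibit a solution of \eqref{eq:full_graph_scatter} with $\vec{c}^{\,in}=\vec{c}^{\,out}=0$ whose restriction to $\Gamma$ satisfies the Dirichlet conditions at the marked vertices by continuity. Your explicit appeal to \eqref{eq:ker-2} of Lemma~\ref{lem:ker} to force $\vec{c}^{\,out}=T_{o}\vec{a}=0$ only spells out a step the paper leaves implicit when it asserts that \eqref{eq:full_graph_scatter} holds with vanishing lead amplitudes.
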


\begin{proof}
  We mention that imposition of the additional vertex conditions
  makes the problem overdetermined.  In most circumstances the set $\Delta$
  will be empty.  The operator $H$ is still symmetric but no longer
  self-adjoint, because its domain is too narrow.

  Denote by $\tilde{\Gamma}$ the graph with the leads attached.  Let
  $k^2\in\sigma\left(\Gamma^{*}\right)$ and let $f$ be the
  corresponding eigenfunction on $\Gamma^{*}$.  Then $f$ can be
  extended to the leads by zero.  It will still satisfy the vertex
  conditions of $\tilde{\Gamma}$ and will therefore satisfy
  \eqref{eq:full_graph_scatter} with
  $\vec{c}^{\,in}=\vec{c}^{\,out}=0$ and $\vec{a}\neq0$. The last
  $2|\Edges|$ equations of \eqref{eq:full_graph_scatter} imply
  $\det(\id-\tU)=0$.

  In the other direction, let $k^2\in \Delta$.  Choose
  $\vec{a}\in\Ker(\id-\tU)$.  We see that equation
  \eqref{eq:full_graph_scatter} is satisfied with the chosen $\vec{a}$
  and with $\vec{c}^{\,in}=\vec{c}^{\,out}=0$.  These coefficients
  describe a function on $\tilde{\Gamma}$ which vanishes completely on
  the leads and therefore its restriction to $\Gamma$ satisfies
  Dirichlet boundary conditions on the marked vertices by continuity.
  This implies $k^2\in\sigma\left(\Gamma^{*}\right)$.
\end{proof}

\begin{cor}
  \label{cor:discrete_special_k_values}
  The set $\Delta$ is discrete.
\end{cor}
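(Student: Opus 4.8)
The plan is to reduce the statement to the discreteness of the spectrum of a genuinely self-adjoint operator, exploiting the identification $\Delta=\sigma(\Gamma^{*})$ already furnished by Lemma~\ref{lem:special_k_values}. The obstacle, flagged in the proof of that lemma, is that the operator underlying $\Gamma^{*}$ is symmetric but \emph{not} self-adjoint: imposing $f(v)=0$ on top of the $\delta$-condition at each of the $M$ marked vertices overdetermines the problem, so one cannot invoke directly the standard fact that a self-adjoint operator on a compact graph has discrete spectrum. I would circumvent this by dominating $\sigma(\Gamma^{*})$ by the spectrum of a related self-adjoint operator.

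Concretely, let $H_{D}$ denote the operator $-d^{2}/dx^{2}+V$ on $\Gamma$ obtained by replacing, at each marked vertex $v$, the original $\delta$-condition by the Dirichlet condition $\alpha_{v}=\infty$ (i.e.\ $f(v)=0$), and keeping the original conditions at all other vertices. This is one of the self-adjoint $\delta$-type operators described in Section~\ref{sec:intro_quantum_graphs} (Dirichlet being the limiting case $\alpha_v=\infty$), so by the discussion there its spectrum $\sigma(H_{D})$ is real, discrete and without accumulation points, even if imposing Dirichlet conditions disconnects the graph. The key comparison step is the inclusion $\sigma(\Gamma^{*})\subseteq\sigma(H_{D})$. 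To see it, observe that at a marked vertex the conditions defining $\Gamma^{*}$ are continuity, $f(v)=0$, and $\sum_{e\in\Edges_v}\frac{df}{dx_e}(v)=\alpha_v f(v)=0$, whereas the conditions defining $H_{D}$ there are just continuity and $f(v)=0$. Hence the domain of $\Gamma^{*}$ is contained in the domain of $H_{D}$, the two operators act identically, and every eigenfunction of $\Gamma^{*}$ is an eigenfunction of $H_{D}$ with the same eigenvalue.

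Combining the pieces, $\Delta=\sigma(\Gamma^{*})\subseteq\sigma(H_{D})$, and since $\sigma(H_{D})$ is discrete, so is its subset $\Delta$. The only point needing care is the domain inclusion: one must check that the extra derivative constraint carried by $\Gamma^{*}$ only shrinks the domain relative to $H_{D}$ and never enlarges it, which is immediate once one notes that the Dirichlet datum forces the right-hand side $\alpha_v f(v)$ of the $\delta$-condition to vanish.

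As an alternative, purely analytic route, I would use that $Q(k)$---and hence $\tU(k)$ and $g(k):=\det(\id-\tU(k))$---is meromorphic on $\CC$. It suffices to show $g\not\equiv 0$, for then its zeros are isolated in $\CC$, and transporting this through the bijection $k\mapsto\lambda=k^{2}$ (on $\Re k\ge 0$, $\Im k\ge 0$) yields discreteness of $\Delta$. Non-vanishing follows by letting $\Im k\to+\infty$: each internal phase factor $\rme^{\rmi kL_e}$ decays while the vertex blocks $\sigma^{(v)}(k)$ stay bounded, so $\tU(k)\to 0$ and $g(k)\to 1$. I expect the main effort in this second approach to lie in controlling the $k$-dependence of $\Sigma(k)$ so as to guarantee $\tU(k)\to0$, whereas the spectral comparison above avoids this entirely and is the route I would follow.
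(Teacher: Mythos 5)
Your argument is correct, but it is a genuinely different comparison than the one the paper makes. The paper's proof is a single line: $\Delta\subset\sigma(\Gamma)$, the spectrum of the \emph{original} self-adjoint operator on the compact graph, which is discrete. The observation underlying that inclusion is that an eigenfunction of $\Gamma^{*}$ automatically remains in the domain of the original operator: since $f(v)=0$ at each marked vertex, the right-hand side $\alpha_v f(v)$ of the $\delta$-condition vanishes, and the derivative condition carried over from $\Gamma^{*}$ supplies $\sum_{e\in\Edges_v}\frac{df}{dx_e}(v)=0$, so the original $\delta$-condition holds trivially. You instead compare with the auxiliary operator $H_{D}$ obtained by \emph{replacing} the $\delta$-condition at marked vertices with Dirichlet, and your domain inclusion $\Ker$-side check is sound: the conditions defining $\Gamma^{*}$ (continuity, $f(v)=0$, derivative sum zero) imply those defining $H_{D}$ (continuity, $f(v)=0$), so $\Delta=\sigma(\Gamma^{*})\subseteq\sigma(H_{D})$, and $\sigma(H_{D})$ is discrete as the spectrum of a self-adjoint operator on a compact (possibly disconnected) graph. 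Both routes reduce the claim to discreteness of a self-adjoint spectrum; the paper's choice of $\sigma(\Gamma)$ is marginally more economical because the same inclusion $\Delta\subseteq\sigma(\Gamma)$ is reused verbatim in the proof of Proposition~\ref{pro:inside-outside}, whereas your $H_{D}$ is a new object introduced only for this corollary (though it is the natural comparison operator, in the spirit of Theorem~\ref{thm:interlacing_monotone} with $\alpha'=\infty$). Your alternative analytic route via the meromorphy of $\det(\id-\tU(k))$ would also work and mirrors the mechanism the paper uses in the subsequent lemma on analyticity of $S(k)$, but, as you correctly identify, it requires controlling the $k$-dependence of $\Sigma(k)$ in the upper half-plane, which the spectral comparison avoids entirely.
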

\begin{proof}
  The corollary is immediate since
  $\Delta\subset\sigma\left(\Gamma\right)$,
  which is discrete.
\end{proof}

\begin{lem}
  $S\left(k\right)$ is a meromorphic function which is analytic on the
  real line.
\end{lem}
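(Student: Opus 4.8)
The plan is to work with the explicit expression \eqref{eq:S_explicit}, $S(k)=R+T_{o}\left(\id-\tU\right)^{-1}T_{i}$, and to establish two things in turn: first that it defines a meromorphic function on all of $\CC$, and second that none of its poles lie on the real axis. For the meromorphy I would recall that $Q(k)$, and hence each of its blocks $R$, $T_o$, $T_i$ and $\tU$, is meromorphic on $\CC$. Writing $\left(\id-\tU\right)^{-1}=\operatorname{adj}\left(\id-\tU\right)/\det\left(\id-\tU\right)$, the only extra singularities introduced sit at the zeros of the scalar function $\det\left(\id-\tU\right)$. This function is not identically zero: its real zeros are exactly the points with $k^{2}\in\Delta$, and by Corollary~\ref{cor:discrete_special_k_values} this set is discrete. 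Since sums, products and quotients by a not-identically-zero denominator of meromorphic functions are meromorphic, $S(k)$ is meromorphic on $\CC$.

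It then remains to rule out poles at real $k$. The candidate singular points on $\Reals$ are of two kinds: poles of $Q$ and real zeros of $\det\left(\id-\tU\right)$. The first kind does not occur, and I would dispose of it by the following blow-up argument: $Q(k)$ is unitary for every real $k$ at which it is defined, so $\|Q(k)\|=1$ there, whereas a genuine pole at a real $k_{0}$ would force $\|Q(k)\|\to\infty$ as $k\to k_{0}$ along the real axis, which is impossible. Hence $Q$, and with it $R,T_o,T_i,\tU$, is analytic on $\Reals$, and the only remaining candidates are the real $k_{0}$ with $k_{0}^{2}\in\Delta$.

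The heart of the argument is that the apparent singularity of $S$ at such a $k_{0}$ is removable, and here I would reuse the same dichotomy. Because $\Delta$ is discrete, every real $k\neq k_{0}$ close enough to $k_{0}$ satisfies $k^{2}\notin\Delta$, so Theorem~\ref{thm:scat_mat} applies and $S(k)$ is unitary; in particular $\|S(k)\|=1$ for all real $k$ in a punctured neighbourhood of $k_{0}$. On the other hand, if $S$ had a genuine pole at $k_{0}$, some entry would behave like $c_{-m}(k-k_{0})^{-m}$ with $c_{-m}\neq0$ and $m\geq1$, so $\|S(k)\|\to\infty$ as $k\to k_{0}$ from \emph{any} direction, the real axis included. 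This contradicts $\|S(k)\|=1$ along $\Reals$, so the Laurent expansion of $S$ at $k_{0}$ has no principal part and the singularity is removable. Thus $S$ is analytic at every real point, completing the proof.

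The step I expect to be the main obstacle is exactly this removability at the points $k_{0}$ with $k_{0}^{2}\in\Delta$, where \eqref{eq:S_explicit} is literally undefined; the clean resolution above needs only the unitarity supplied by Theorem~\ref{thm:scat_mat} together with the observation that a pole blows up along every approach, so boundedness along $\Reals$ already precludes a real pole and no finer cancellation estimate is required. Should one prefer a direct verification, the alternative is to invoke Lemma~\ref{lem:ker}: when $\Ker(\id-\tU)$ is one-dimensional, $\operatorname{adj}(\id-\tU)$ is rank one at $k_{0}$, of the form $\phi\psi^{*}$ with $\phi\in\Ker(\id-\tU)$ and $\psi\in\Ker(\id-\tU^{*})$, and the containments $\Ker(\id-\tU)\subseteq\Ker T_o$ and $\Range T_i\subseteq\Range(\id-\tU)=(\Ker(\id-\tU^{*}))^{\perp}$ make both $T_o\phi$ and $\psi^{*}T_i$ vanish, forcing the numerator $T_o\,\operatorname{adj}(\id-\tU)\,T_i$ to vanish to order sufficient to cancel the zero of $\det(\id-\tU)$; tracking the orders of vanishing in the higher-dimensional kernel case is the only routine complication.
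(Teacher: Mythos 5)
Your proof is correct and takes essentially the same route as the paper's: meromorphy of the blocks of $Q$, discreteness of the singular set $\Delta$ to conclude that \eqref{eq:S_explicit} defines a meromorphic $S(k)$, and unitarity of $S(k)$ for real $k\notin\Delta$ to show boundedness along $\Reals$ and hence that the real singularities are removable. Your explicit pole blow-up argument (and the alternative adjugate cancellation via Lemma~\ref{lem:ker}) merely spell out in more detail the paper's one-line observation that a bounded meromorphic function cannot have a real pole.
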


\begin{proof}
  The blocks of the matrix $\Sigma(k)$ in equation
  \eqref{eq:full_graph_scatter0} are meromorphic (see
  \cite{KosSch_jpa99}, Theorem 2.1 and the discussion following it),
  therefore all the blocks of the matrix $Q$ are meromorphic on the
  entire complex plane.  Since the set $\Delta$ on which the matrix
  $\id-\tU$ is singular is discrete, equation \eqref{eq:S_explicit}
  defines a meromorphic function.  To show that $S(k)$ in fact does
  not have singularities on the real line, we observe, that, for $k
  \in\Reals \setminus \Delta$ we have shown that $S(k)$ defined by
  \eqref{eq:S_explicit} is unitary.  Therefore $S(k)$ remains bounded
  as we approach the ``bad'' set $\Delta$ and the singularities are
  removable.  Theorem~\ref{thm:scat_mat} gives a prescription for
  computing the correct value of $S(k)$ for $k \in \Delta$.
\end{proof}

We now examine the $k$-dependence of the eigenvalues of $S(k)$.
To avoid technical difficulties we restrict our attention to the case
when only $\alpha=0$ (Neumann) or $\alpha=\infty$ (Dirichlet) are
allowed as coefficients of the $\delta$-type vertex conditions,
equation~\eqref{eq:delta_deriv}.  In this case the matrix $\Sigma(k)$
described in section~\ref{sec:scat_mat} is independent of $k$ making
calculations easier.  The general case can be treated using methods of
\cite{BolEnd_incol08}, however we will not need it for applications.

\begin{lem}
  \label{lem:counterclockwise}
  Let every vertex of the graph $\Gamma$ have either Neumann or
  Dirichlet condition imposed on it.  Then the eigenvalues of $S(k)$
  move counterclockwise on the unit circle, as $k$ increases.
\end{lem}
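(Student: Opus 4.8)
My plan is to recognize the statement as the positivity of the Wigner--Smith time-delay operator and to prove that positivity by an explicit block computation. Introduce
\[
  \mathcal{Q}(k) := -\rmi\, S(k)^* \frac{dS}{dk},
\]
and write $'$ for $d/dk$ throughout. First I would check that $\mathcal{Q}$ is Hermitian: differentiating $S^*S=\id$ gives $(S^*)'S=-S^*S'$, so $\mathcal{Q}^*=\rmi (S')^*S=-\rmi S^*S'=\mathcal{Q}$. Next, writing an eigenvalue of the unitary matrix $S(k)$ as $\rme^{\rmi\theta(k)}$ with unit eigenvector $\psi$, differentiating $S\psi=\rme^{\rmi\theta}\psi$, pairing with $\psi$, and using $S^*\psi=\rme^{-\rmi\theta}\psi$, a Hellmann--Feynman argument gives $\theta'=\langle\psi,\mathcal{Q}\psi\rangle$. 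Hence ``counterclockwise'' (that is, $\theta'\geq 0$ for every eigenphase, including through degeneracies) is exactly the assertion $\mathcal{Q}(k)\succeq 0$, and this is what I would prove.

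The key simplification, and the only place the Neumann/Dirichlet hypothesis enters, is that $\Sigma$ and $J_o$ in $Q=\rme^{\rmi kL_o}J_o\Sigma$ are then independent of $k$, so that $Q'=\rmi L_o Q$. Reading this identity in the block form \eqref{eq:full_graph_scatter_Q}, and recalling that $L_o$ vanishes on the lead coordinates while restricting to the positive diagonal edge-length matrix $L$ on the edge coordinates, I obtain the four relations
\[
  R'=0,\qquad T_o'=0,\qquad T_i'=\rmi L T_i,\qquad \tU'=\rmi L\tU .
\]
In particular the blocks $R$ and $T_o$ are constant in $k$.

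With these in hand I would run the computation on $\Reals\setminus\Delta$, where $C=(\id-\tU)^{-1}T_i$ is well defined. Writing $G=(\id-\tU)^{-1}$ and differentiating $S=R+T_oC$ gives $S'=T_oC'$. Using $G'=G\tU'G$ together with \eqref{eq:C-definition} in the form $\tU C+T_i=C$, a short manipulation yields $C'=\rmi\,GLC$. Substituting, $-\rmi S^*S'=(R^*+C^*T_o^*)T_o\,GLC$; I then simplify the middle factor by the unitarity relations \eqref{eq:unitarity_of_Q} in the form $R^*T_o=-T_i^*\tU$ and $T_o^*T_o=\id-\tU^*\tU$, and by $T_i=(\id-\tU)C$. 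The terms collapse to $(R^*+C^*T_o^*)T_o=C^*(\id-\tU)$, and since $(\id-\tU)G=\id$ this leaves the clean identity
\[
  -\rmi\, S^*\frac{dS}{dk}=C^*LC .
\]
Because $L$ is diagonal with strictly positive entries, the right-hand side is positive semidefinite, which proves $\mathcal{Q}\succeq 0$. Moreover $\theta'=\langle C\psi,LC\psi\rangle>0$ whenever $C\psi\neq 0$, i.e. whenever the generalized eigenfunction has nonzero amplitude $\vec{a}=C\psi$ inside the graph, so the motion is strictly counterclockwise in the generic case.

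It remains to remove the exceptional set $\Delta$ and to comment on strictness, which I expect to be routine rather than the crux. Since $\Delta$ is discrete (Corollary \ref{cor:discrete_special_k_values}) and $S(k)$ is analytic on the real line, each eigenphase is continuous and non-decreasing off a discrete set, hence non-decreasing on all of $\Reals$; thus no eigenvalue ever turns clockwise. The only way strict monotonicity can fail is $C\psi=0$, which forces $\vec{a}=0$ and, through continuity at the marked vertices, the eigenvalue $-1$; this is a non-generic coincidence occurring at isolated $k$ and does not affect the counterclockwise conclusion. I anticipate the genuine work to be the algebra of the previous paragraph---in particular verifying $C'=\rmi GLC$ and the cancellation producing $C^*(\id-\tU)$---with everything else following formally.
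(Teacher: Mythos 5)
Your proposal is correct and follows essentially the same route as the paper's proof: the same Hellmann--Feynman reduction to positivity of $\tfrac{1}{\rmi}S^*\frac{dS}{dk}$, the same block relations $\dot{T_i}=\rmi L T_i$, $\dot{\tU}=\rmi L\tU$ from the $k$-independence of $\Sigma$, the same differentiated identity $(\id-\tU)\dot C=\rmi LC$, and the same final collapse to $-\rmi\,S^*\frac{dS}{dk}=C^*LC$ (you merely reorder the cancellation via $G=(\id-\tU)^{-1}$). Your closing remarks on extending through the discrete set $\Delta$ and on the degenerate case $C\psi=0$ are a welcome bit of extra care the paper glosses over, but they do not change the substance of the argument.
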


\begin{proof}
  Let $e^{i\varphi}$ be an eigenvalue of $S$ with the normalized
  eigenvector $v$.  Differentiating the normalization condition
  $v^*v=1$ with respect to $k$ we get
  \begin{equation}
    \label{eq:v_vdot_orth}
    v^* \dot{v} = 0.
  \end{equation}
  Now we take the derivative of $Sv=e^{i\varphi}v$ with respect to $k$
  to get
  \begin{equation*}
    \left(\frac{{\rm d}}{{\rm dk}}S\right)v + S\dot{v}
    = i\dot{\varphi}e^{i\varphi}v + e^{i\varphi}\dot{v}.
  \end{equation*}
  We multiply the above equation on the left with
  $v^{*}S^{*}=e^{-i\varphi}v^{*}$ and use $v^{*}v=1$ and
  equation~\eqref{eq:v_vdot_orth} to obtain:
  \begin{equation*}
    v^{*}S^{*}\left(\frac{{\rm d}}{{\rm dk}}S\right)v = i\dot{\varphi}.
  \end{equation*}
  Thus we need to show that $\frac{1}{i}S^{*}\frac{{\rm d}}{{\rm
      dk}}S$ is positive definite.  Comparing equations
  \eqref{eq:full_graph_scatter0} and \eqref{eq:full_graph_scatter} and
  using that $\Sigma$ is $k$-independent, we obtain that
  \begin{equation*}
    R(k)=R(0), \qquad T_o(k)=T_o(0), \qquad
    T_i(k) = \rme^{\rmi kL} T_i(0), \qquad \tU(k) = \rme^{\rmi kL} \tU(0).
  \end{equation*}
  Differentiating the latter two matrices with respect to $k$ produces
  \begin{equation*}
    \dot{\tU}=iL\tU, \qquad \dot{T_{i}}=iLT_{i}.
  \end{equation*}
  We can now differentiate equation \eqref{eq:C-definition} to obtain
  \begin{equation}
    \label{eq:Cdot}
    (1-\tU)\dot{C} = \dot{T_i} + \dot{\tU}C = iL(T_{i} + \tU C) = iLC,
  \end{equation}
  where we used \eqref{eq:C-definition} again in the final step.

  For the matrix in question we now obtain
  \begin{equation*}
    S^{*}\frac{{\rm d}}{{\rm dk}}S
    = \left(R^{*}+C^{*}T_{o}^{*}\right) T_{o}\dot{C}
    =-T_{i}^{*}\tU\dot{C}+C^{*}\left(\id-\tU^{*}\tU\right)\dot{C},
  \end{equation*}
  where equations \eqref{eq:unitarity_of_Q} have been used in the
  second step.  Using $-T_i^* = C^*(\tU^*-\id)$ which is a conjugate of
  \eqref{eq:C-definition}, we obtain
  \begin{equation*}
    S^{*}\frac{{\rm d}}{{\rm dk}}S
    = C^* \left(\tU^*\tU - \tU + \id - \tU^{*}\tU\right)\dot{C}.
  \end{equation*}
  Using \eqref{eq:Cdot} this simplifies to
  \begin{equation*}
    S^{*}\frac{{\rm d}}{{\rm dk}}S  = iC^{*}LC.
  \end{equation*}
  Since $L$ is diagonal with positive entries we conclude that
  $\frac{1}{i}S^{*}\frac{{\rm d}}{{\rm dk}}S$ is positive definite.
\end{proof}

We end this section by stating a result known as the inside-outside
duality, which relates the spectrum of the compact graph, $\Gamma$, to
the eigenvalues of its scattering matrix, $S(k)$.  This is a well
known result, mentioned already in \cite{KotSmi_ap99}. We bring it here with a small modification, related to
the already mentioned set, $\Delta$.

\begin{prop}
  \label{pro:inside-outside}
  The spectrum of
  $\Gamma$ is $\Delta\cup\left\{ k\,|\,\det\left(\id-S\right)=0\right\} $.
\end{prop}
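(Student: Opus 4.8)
The plan is to prove the two inclusions separately, working throughout with genuine and generalized eigenfunctions rather than with the secular determinant $\det(\id-U)$, so that no explicit relation between $U$ and the blocks of $Q$ is needed. The whole argument rests on the correspondence between solutions of the linear system~\eqref{eq:full_graph_scatter} and functions on $\tilde\Gamma$ satisfying all vertex and edge conditions, together with one simple observation: an eigenvalue $1$ of $S(k)$, i.e.\ $\vec{c}^{\,out}=\vec{c}^{\,in}$, is equivalent to the lead amplitudes being those of a pure standing wave $f_l(x_l)=2c^{\,in}_{l,v}\cos(kx_l)$, whose derivative \emph{into} the lead vanishes at the marked vertex. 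I would first record that $S(k)$ is defined for every real $k$ (the earlier lemma removes the apparent singularities on $\Delta$), so both branches of the statement make sense at once.

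For the inclusion $\Delta\cup\{k\,|\,\det(\id-S)=0\}\subseteq\sigma(\Gamma)$ I treat the two sets in turn. By Lemma~\ref{lem:special_k_values}, $\Delta=\sigma(\Gamma^{*})$, and since the domain of the operator on $\Gamma^{*}$ is contained in that on $\Gamma$ (it carries the extra Dirichlet conditions at the marked vertices while still obeying the $\delta$-type conditions), every eigenfunction of $\Gamma^{*}$ is an eigenfunction of $\Gamma$; hence $\Delta\subseteq\sigma(\Gamma)$. Next, suppose $\det(\id-S(k))=0$ and choose $\vec{c}^{\,in}\neq0$ with $S\vec{c}^{\,in}=\vec{c}^{\,in}$. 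By Theorem~\ref{thm:scat_mat}\eqref{enu:scat-3} there is an $\vec{a}$ for which $(\vec{c}^{\,in},\vec{c}^{\,out},\vec{a})$ with $\vec{c}^{\,out}=S\vec{c}^{\,in}=\vec{c}^{\,in}$ solves~\eqref{eq:full_graph_scatter}; this data defines a function $f$ on $\tilde\Gamma$ satisfying every vertex and edge condition. On each lead the equality $c^{\,in}_{l,v}=c^{\,out}_{l,v}$ gives $f_l(x_l)=2c^{\,in}_{l,v}\cos(kx_l)$, so the lead contributes nothing to the derivative sum~\eqref{eq:delta_deriv} at the marked vertex, and the restriction $f|_{\Gamma}$ satisfies exactly the vertex conditions of $\Gamma$. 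Because $\vec{c}^{\,in}\neq0$, some $c^{\,in}_{l,v}\neq0$, whence $f(v)=2c^{\,in}_{l,v}\neq0$ and $f|_{\Gamma}$ is a nonzero eigenfunction, giving $k\in\sigma(\Gamma)$.

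For the reverse inclusion $\sigma(\Gamma)\subseteq\Delta\cup\{\det(\id-S)=0\}$, let $f$ be an eigenfunction of $\Gamma$ at $k$ and split into two cases. If $f$ vanishes at every marked vertex, then it also satisfies the Dirichlet conditions there and is therefore an eigenfunction of $\Gamma^{*}$, so $k\in\sigma(\Gamma^{*})=\Delta$. Otherwise I extend $f$ to $\tilde\Gamma$ by putting $f_l(x_l)=f(v)\cos(kx_l)$ on each lead; this continuation is continuous, solves the equation on the lead, and has vanishing derivative into it, so all vertex conditions of $\tilde\Gamma$ persist. The resulting amplitudes satisfy $\vec{c}^{\,in}=\vec{c}^{\,out}=\frac{1}{2}(f(v_1),\dots,f(v_M))^{\top}\neq0$ and solve~\eqref{eq:full_graph_scatter}, so by the uniqueness in Theorem~\ref{thm:scat_mat}\eqref{enu:scat-3} we have $\vec{c}^{\,out}=S\vec{c}^{\,in}$, i.e.\ $S\vec{c}^{\,in}=\vec{c}^{\,in}$ and $\det(\id-S)=0$.

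The only genuinely delicate point — the step I would check most carefully — is the dictionary at the marked vertices: that eigenvalue $1$ of $S$ is exactly equivalent to pure-cosine lead waves with vanishing normal derivative, which is precisely what lets the derivative condition on $\tilde\Gamma$ collapse to that on $\Gamma$ upon restriction (and, read backwards, lets any $\Gamma$-eigenfunction be extended to a standing-wave solution on $\tilde\Gamma$). The set $\Delta$ enters solely because of the degenerate eigenfunctions that vanish at every marked vertex: these yield $\vec{c}^{\,in}=0$ and are thus invisible to the scattering matrix, so they must be recovered separately as $\sigma(\Gamma^{*})$. Making the case split clean, and verifying that in the non-vanishing case $f|_{\Gamma}$ is genuinely nonzero, is where I would spend the care.
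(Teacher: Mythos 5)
Your proof is correct and takes essentially the same route as the paper's: the same standing-wave dictionary (eigenvalue $1$ of $S$ corresponds to pure-cosine lead waves $f_l(x)=2c^{\,in}_{l,v}\cos(kx_l)$ with vanishing derivative into the lead, so the $\delta$-type condition on $\tilde\Gamma$ collapses to that on $\Gamma$), the same extension $f(v)\cos(kx)$ of a $\Gamma$-eigenfunction onto the leads combined with the uniqueness of $\vec{c}^{\,out}$ from Theorem~\ref{thm:scat_mat}, and the same case split sending eigenfunctions vanishing at all marked vertices to $\Delta=\sigma(\Gamma^{*})$ via Lemma~\ref{lem:special_k_values}. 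The only difference is cosmetic: you make explicit the check, left implicit in the paper, that $\vec{c}^{\,in}\neq0$ forces $f(v)\neq0$ at some marked vertex and hence a nonzero restriction $f|_{\Gamma}$.
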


\begin{proof}
  We remind the reader that when a lead is attached to a (marked)
  vertex, the new vertex conditions are also of $\delta$-type with the
  same value of the constant $\alpha_v$.  The conditions at the
  vertices that are not marked remain unchanged.

  Let $k$ be such that $\det\left(\id-S\right)=0$.  Let $\vec{c}$ be
  the corresponding eigenvector, $S\vec{c} = \vec{c}$.  Letting
  $\vec{c}^{\,in} = \vec{c}$ we find $\vec{c}^{\,out} = \vec{c}$ and
  $\vec{a}$ according to theorem~\ref{thm:scat_mat}.  The
  corresponding generalized eigenfunction $\tf$ satisfies correct
  vertex conditions at all the non-marked vertices.  It is also
  continuous at the marked vertices and satisfies
  \begin{equation}
    \label{eq:cond_deriv_leads}
    \sum_{e \in \Edges_v}\frac{d\tf}{dx_e}(v) + \sum_{l\in\Leads_v}
    \frac{d\tf}{dx_l}(v) =\alpha _v \tf(v),
  \end{equation}
  where $\Edges_v$ is the set of the finite edges incident to $v$ and
  $\Leads_v$ is the set of the infinite leads attached to it.
  Referring to \eqref{eq:sol_on_lead} we notice that $c_{v,l}^{\,in} =
  c_{l,v}^{\,out}$ implies that the derivative of $\tf$ on the lead is
  zero.  Therefore equation \eqref{eq:cond_deriv_leads} reduces to the
  corresponding equation on the compact graph.  Thus the restriction
  of $\tf$ to the compact graph satisfies vertex conditions at all
  vertices and $k^2$ is an eigenvalue of $\Gamma$.  Inclusion $\Delta
  \subseteq \sigma(\Gamma)$ has already been shown in
  lemma~\ref{lem:special_k_values}.

  Conversely, let $k^2$ be an eigenvalue of the compact graph $\Gamma$
  and let $f$ be the corresponding eigenfunction.  Then $f$ can be
  continued onto the leads as $f(v)\cos(k x)$, where $f(v)$ is the
  value of $f$ at the vertex $v$ where the lead is attached to the
  graph.  Comparing to \eqref{eq:sol_on_lead} we see that
  $c_{l,v}^{in} = c_{l,v}^{out} = f(v)/2$.  Therefore the resulting
  extended function is characterized by vectors $\vec{a}$,
  $\vec{c}^{\,in}$ and $\vec{c}^{\,out}$ such that $\vec{c}^{\,in} =
  \vec{c}^{\,out} = S \vec{c}^{\,in}$.  If the function $f$ was
  non-zero on at least one of the marked vertices, $\vec{c}^{\,in}\neq
  0$ is a valid eigenvector of $S(k)$ with eigenvalue 1.  If $f$ is
  zero on all marked vertices, $k\in\Delta$ by
  lemma~\ref{lem:special_k_values}.
\end{proof}

\section{Applications to the nodal domains count}
\label{sec:applications_to_nodal_count}

\subsection{\label{sec:single_lead_dyn}Application for a single lead case.}

We wish to study the nodal count sequence of a certain graph $\Gamma$
by attaching a single lead to one of its vertices. Let $S(k)=e^{i\varphi(k)}$
be the corresponding one dimensional scattering matrix. For each real
$k>0$ there exists a generalized eigenfunction, $f(k;\, x)$, of
the Laplacian with eigenvalue $k^{2}$ on the extended graph, $\tilde{\Gamma}$,
as proved in theorem \ref{thm:scat_mat}. In addition, up to a multiplicative
factor, this function is uniquely determined on the lead, where it
equals \begin{align*}
f_{lead}(k;\, x) & =c^{in}\exp(-ikx)+c^{out}\exp(ikx)=c^{in}\left[\exp(-ikx)+\exp\left(i\left(\varphi(k)+kx\right)\right)\right]\\
 & =c^{in}\exp\left(i\frac{\varphi(k)}{2}\right)\cos\left(kx+\frac{\varphi(k)}{2}\right).\end{align*}
 The positions of the nodal points of this function on the lead are therefore
uniquely defined for every real $k>0$ and given by\begin{equation}
D_{lead}\left(k\right)=\left\{ x\ge0\,\left|\, x\in-\frac{\varphi\left(k\right)}{2k}+\frac{\pi}{2k}+\frac{\pi}{k}\mathbb{Z}\right.\right\} .\label{eq:nodal_points_on_single_lead}\end{equation}
 We exploit this by treating $k$ as a continuous parameter and inspecting
the change in the positions of the nodal points as $k$ increases. Let $x=-\frac{\varphi\left(k\right)}{2k}+\frac{\pi}{2k}+\frac{\pi}{k}n_{x}$
be the position of a certain nodal point on the lead at some value
$k$, i.e., $x\in D_{lead}\left(k\right)$. The direction of movement of
this nodal point is given by

\begin{eqnarray}
\frac{\textrm{d}}{\textrm{dk}}\left(-\frac{\varphi\left(k\right)}{2k}+\frac{\pi}{2k}+\frac{\pi}{k}n_{x}\right) & = & -\frac{k\varphi'(k)-\varphi(k)+\pi+2\pi n_{x}}{2k^{2}}\nonumber \\
 & = & -\frac{1}{k}\left(\frac{\varphi'(k)}{2}+x\right)<0,\label{eq:nodal_dynamics_1_lead}\end{eqnarray}
 where for the last inequality we need to assume that all the vertex
conditions of $\Gamma$ are either of Dirichlet or of Neumann type
in order to use the conclusion of lemma \ref{lem:counterclockwise}
, $\varphi'\left(k\right)>0$. From \eqref{eq:nodal_dynamics_1_lead}
we learn that all nodal points on the lead move towards the graph,
as $k$ increases.

The event of a nodal point arriving to the graph from the lead occurs
at values $k$ for which $0\in D_{lead}\left(k\right)$ and will be
called an \emph{entrance event}. We may use \eqref{eq:nodal_points_on_single_lead}
to characterize these events in terms of the scattering matrix: \begin{equation}
0\in D_{lead}\left(k\right)\Leftrightarrow\mmod_{2\pi}\left(\varphi(k)\right)=\pi.\label{eq:entrance_event_single_lead}\end{equation}
 After such an event occurs, the nodal point from the lead enters
the regime of the graph $\Gamma$ and may change the total number
of the nodal points of $f(k;\, x)$ within $\Gamma$.

Another significant type of events is described by

\begin{equation}
\mmod_{2\pi}\left(\varphi(k)\right)=0.\label{eq:eigenvalue_single_lead}\end{equation}
 Proposition \ref{pro:inside-outside} shows that such an event happens
at a spectral point of $\Gamma$ and at this event, the restriction
of $f(k;\, x)$ on $\Gamma$ equals the corresponding eigenfunction
of $\Gamma$. These events form the whole spectrum of $\Gamma$ if
and only if $\Delta=\emptyset$. This is indeed the case if we choose
to attach the lead to a position where none of the graph's eigenfunctions
vanish (lemma \ref{lem:special_k_values}). In addition, we will assume
in the following discussion that $\Gamma$ has a simple spectrum.
This is needed for the unique definition of the nodal count sequences
and is shown in \cite{Fri_ijm05} to be the generic case for quantum
graphs.

The two types of events described by
\eqref{eq:entrance_event_single_lead} and
\eqref{eq:eigenvalue_single_lead} interlace, as we know from lemma
\ref{lem:counterclockwise} (compare also with theorem
\ref{thm:interlacing_monotone}).  We may investigate the nodal count
of tree graphs by merely considering these two types of events and
their interlacing property. We count the number of nodal points within
$\Gamma$ only at the spectral points to obtain the sequence $\left\{
  \mu_{n}\right\} $. Between each two spectral points we have an
entrance event, during which the number of nodal points within
$\Gamma$ increases by one, as a single nodal point has entered from
the lead into the regime of $\Gamma$. This interlacing between the
increments of the number of nodal points and its sampling gives
$\mu_{n}=n-1$ and $\nu_{n}=\mu_{n}+1=n$.

The above conclusion is indeed true for tree graphs under certain
assumptions (see \cite{AlO_viniti92,PokPryObe_mz96,Sch_wrcm06}).
However, when graphs with cycles are considered, there are other
interesting phenomena to take into account:
\begin{enumerate}
\item In the paragraph above it was taken for granted that at an
  entrance event the nodal points count increases by one. This is
  indeed so if the nodal point which arrives from the lead enters
  exactly one of the edges of $\Gamma$ without interacting with other
  nodal points which already exist on the graph.  However, when the
  lead is attached to a cycle of the graph, the generic behavior is
  either a split or a merge. Assume for simplicity that the attachment
  vertex has degree 3, counting the lead.  A split event happens when
  a nodal point from the lead splits into two nodal points that
  proceed the two internal edges. This will increase the number of
  nodal points on $\Gamma$ correspondingly.  In a merge event the
  entering nodal point merges with another nodal point coming along
  one of the internal edges.  The resulting nodal point proceeds along
  the other internal edge.  The number of nodal points on $\Gamma$
  will not change during such an event.  If a lead is attached to a
  vertex of higher degree the variety of scenarios can be larger.
\item Another type of events that were not considered are ones in
  which a nodal point travels on the graph and reaches a vertex which
  is not connected to the lead. When this vertex belongs to a cycle,
  the generic behavior would be a split or a merge event and would
  correspondingly increase or decrease the number of nodal points
  present inside the graph.
\end{enumerate}
These complications are dealt with in section
\ref{sec:single_cycle_dyn}, where we use the single lead approach to
derive a nodal count formula for graphs which contain a single
cycle. In the following section, \ref{sec:sign_counting}, we consider
a modification of this method --- we attach two leads to a graph and use
the corresponding scattering matrix to express nodal count related
quantities of the graph. Later, in section \ref{sec:dihedral_formula},
we show how the two leads approach yields an exact nodal count formula
for a specific graph.

%%%%%%%%%%%%%%%%%% Subsection %%%%%%%%%%%%%%%%%%%%%%%%%
\subsection{\label{sec:sign_counting}Sign-weighted counting function}

The number of nodal points on a certain edge $e=\left(u,v\right)$ at
an eigenvalue $k_n$ is given by
\begin{equation}
\hspace{-5mm}
  \left\lfloor\frac{k_n L_{ e }}{\pi}\right\rfloor+\frac{1}{2} \left( 1-(-1)^{
      \left\lfloor\frac{k_n L_{e}}{\pi} \right\rfloor}
    \sgn[f_n(u)\,f_n(v)]
  \right ), \
  \label{eq:NNB2}
\end{equation}
where $\lfloor x \rfloor$ stands for the largest integer which is
smaller than $x$, and $f_n$ is the corresponding eigenfunction \cite{GnuSmiWeb_wrm04}.
We infer that the relative sign of the eigenfunction at two chosen points is of particular interest when counting nodal domains.
While the most natural candidates for the two points are end-points of an
edge, the results of this section apply to any two points on a graph.
Denoting these points $x_1$ and $x_2$, we are interested in the sign
of the product $f_n(x_1) f_n(x_2)$, where $f_n$ is the $n$-th
eigenfunction of the graph.  We define the \emph{sign-weighted
  counting function} $N_{x_1,x_2}(k)$ as
\begin{equation}
  \label{eq:swcf_def}
  N_{x_1,x_2}(k) = \# \{k_n \leq k: f_n(x_1) f_n(x_2)>0
  \} - \# \{k_n \leq k: f_n(x_1) f_n(x_2)<0
  \}.
\end{equation}
Using the scattering matrix formalism allows us to obtain the
following elegant formula

\begin{thm}
  \label{thm:swcf}
  Let $\Graph$ be a graph with Neumann or
  Dirichlet vertex conditions and $x_1$ and $x_2$ be points on the
  graph such that no eigenfunction turns to zero at $x_1$ or $x_2$.
  Denote by $S(k)$ the $2\times2$ scattering matrix obtained by attaching
  leads to the points $x_1$ and $x_2$.  Let $\Id_\epsilon$ be the matrix
  \begin{equation}
    \label{eq:Id_eps}
    \Id_\epsilon = \left(
      \begin{matrix}
        1 & -\epsilon\\
        -\epsilon & 1
      \end{matrix}
    \right) = \Id - \epsilon \sigma_1,
  \end{equation}
  where $\sigma_1$ is the first Pauli matrix.  Then
  \begin{equation}
    \label{eq:swcf_formula}
    N_{x_1,x_2}(k) = \frac{1}{\pi} \left[ \lim_{\epsilon\to0} \arg
    \det(\Id_\epsilon - S(k)) - \phi(k)\right],
  \end{equation}
  where $e^{2i\phi(k)} = \det(S(k))$ and a suitable continuous branch
  of the argument is chosen.  The convergence is pointwise everywhere except
  at $k\in\sigma(\Gamma)$ (where $N_{x_1,x_2}$ is discontinuous).
\end{thm}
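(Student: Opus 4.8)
The plan is to show that both sides of \eqref{eq:swcf_formula}, viewed as functions of $k$, are integer-valued step functions with the same jumps, and then to fix the constant difference between them using the freedom in choosing the branch of the argument. Write the two eigenvalues of the $2\times2$ unitary matrix $S(k)$ as $e^{i\theta_1(k)}$ and $e^{i\theta_2(k)}$; by Lemma~\ref{lem:counterclockwise} the phases $\theta_j$ can be chosen to increase with $k$. Since no eigenfunction of $\Graph$ vanishes at $x_1$ or $x_2$, Lemma~\ref{lem:special_k_values} gives $\Delta=\emptyset$, so Proposition~\ref{pro:inside-outside} identifies $\sigma(\Graph)$ with the zeros of $\det(\Id-S(k))=(1-e^{i\theta_1})(1-e^{i\theta_2})$, i.e.\ with the $k$ at which some $\theta_j\equiv0\pmod{2\pi}$. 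Off $\sigma(\Graph)$ the matrix $\Id-S$ is invertible, and for small $\epsilon$ so is $\Id_\epsilon-S$; hence for fixed small $\epsilon$ the right-hand side is continuous in $k$, and the only possible discontinuities of the $\epsilon\to0$ limit sit at the points of $\sigma(\Graph)$, exactly as for $N_{x_1,x_2}$.

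The heart of the matter is the jump at a spectral point $k_n$. Expanding the $2\times2$ determinant directly gives the exact identity
\[
  \det(\Id_\epsilon-S) = \det(\Id-S) - \epsilon\,\mathrm{tr}(\sigma_1 S) - \epsilon^2 .
\]
To evaluate $\mathrm{tr}(\sigma_1 S)$ at $k_n$ I would use the spectral decomposition $S=v_1v_1^*+e^{i\theta_2}v_2v_2^*$ of the unitary matrix $S$ at the eigenvalue $1$. The proof of Proposition~\ref{pro:inside-outside} shows that the eigenvector for the eigenvalue $1$ is $\vec{c}^{\,in}\propto(f_n(x_1),f_n(x_2))$, which is real; normalising, I write $v_1=(a,b)$ with $\mathrm{sgn}(ab)=\sgn[f_n(x_1)f_n(x_2)]$ and take $v_2=e^{i\psi}(-b,a)$ in the orthogonal complement. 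A short computation then yields $\mathrm{tr}(\sigma_1 S)=2ab\,(1-e^{i\theta_2})$ at $k_n$, the phase $\psi$ cancelling; note that no symmetry of $S$ is required.

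With $w:=1-e^{i\theta_2(k_n)}\neq0$ (simplicity of $\lambda_n$) and $t:=k-k_n$, near $k_n$ one has $1-e^{i\theta_1}\approx-i\,\theta_1'(k_n)\,t$ with $\theta_1'(k_n)>0$, so
\[
  \det(\Id_\epsilon-S) \approx w\bigl(-i\,\theta_1'(k_n)\,t - 2ab\,\epsilon\bigr).
\]
The factor $w$ contributes a continuous argument; the bracket traces the vertical segment $\{\mathrm{Re}=-2ab\epsilon\}$ as $t$ runs through $0$, passing the origin on the left if $ab>0$ and on the right if $ab<0$, so its continuous argument changes by $\pi\,\mathrm{sgn}(ab)$ as $k$ crosses $k_n$. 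Since $\phi(k)$ is continuous across $k_n$, the right-hand side of \eqref{eq:swcf_formula} jumps by $\mathrm{sgn}(ab)=\sgn[f_n(x_1)f_n(x_2)]$, which is precisely the jump of $N_{x_1,x_2}$ there.

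It remains to pin down the additive constant and to confirm there are no other jumps. Away from the spectrum the limit $\epsilon\to0$ gives $\arg\det(\Id_\epsilon-S)\to\arg\det(\Id-S)$, and the factorisation $e^{-i\phi}\det(\Id-S)=-4\sin(\theta_1/2)\sin(\theta_2/2)\in\Reals$ shows the bracket in \eqref{eq:swcf_formula} is an integer multiple of $\pi$ there; hence the right-hand side is integer-valued and locally constant off $\sigma(\Graph)$, with no spurious jumps. Both sides being integer step functions with identical jumps, their difference is a global constant, which ``a suitable continuous branch'' lets us set to zero (equivalently, one evaluates at a single reference $k$ below the spectrum). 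I expect the main obstacle to be the sign bookkeeping in the jump: correctly identifying the eigenvalue-$1$ eigenvector with $(f_n(x_1),f_n(x_2))$ and then tracking the winding direction of the regularised determinant. A secondary technical point is the treatment of non-simple eigenvalues (where $w=0$, or both phases cross $0$ simultaneously), which the genericity of simple spectra for quantum graphs cited earlier allows us to sidestep.
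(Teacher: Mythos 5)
Your proof is correct, but it takes a genuinely different route at the crucial step. Both arguments share the exact identity $\det(\Id_\epsilon-S)=\det(\Id-S)-\epsilon\,\mathrm{tr}(\sigma_1 S)-\epsilon^2$ and the same winding mechanism for the regularized determinant; they diverge in how the sign $\sgn[f_n(x_1)f_n(x_2)]$ enters. The paper proceeds globally: using the complex symmetry of $S$ (from the real symmetric $\Sigma$ under Neumann/Dirichlet conditions) it writes $\zeta(k)=\det(\Id-S)=r(k)\rme^{\rmi\phi(k)}$ and $\tau(k)=S_{1,2}=S_{2,1}=\frac{\rmi}{2}t(k)\rme^{\rmi\phi(k)}$ with $r,t$ real, and then identifies $\sgn[f_n(x_1)f_n(x_2)]$ with the parity of the combined number of zeros of $\zeta$ and $\tau$ below $k_n$ by a dynamical one-lead argument (Dirichlet events at $x_1$ interlace with $\sigma(\Gamma)$; a zero crossing $x_2$ happens exactly when the two-lead $S(k)$ is diagonal, i.e.\ $\tau=0$). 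You bypass that identification entirely and compute the jump locally at each $k_n$: the proof of Proposition~\ref{pro:inside-outside} gives the eigenvalue-$1$ eigenvector of $S(k_n)$ as $(f_n(x_1),f_n(x_2))$, the spectral decomposition yields $\mathrm{tr}(\sigma_1 S(k_n))=2ab\,(1-\rme^{\rmi\theta_2})$, and Lemma~\ref{lem:counterclockwise} supplies $\theta_1'(k_n)>0$, forcing the half-turn $\pi\,\sgn(ab)$. Your route buys two things: it needs no symmetry of $S$ (the vertex conditions enter only through the monotonicity lemma), and it replaces the paper's informal nodal-dynamics reasoning by a direct linear-algebra computation; the paper's route, in exchange, yields the structural interpretation of the zeros of $\tau$ as the sign-change events of the one-lead scattering solution at $x_2$, which ties this theorem to the dynamical picture used throughout the rest of the paper. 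Two refinements you could add: first, your ``secondary technical point'' about non-simple eigenvalues is vacuous and needs no genericity --- if $\lambda_n$ had a multi-dimensional eigenspace, some nonzero combination of eigenfunctions would vanish at $x_1$, contradicting the hypothesis, so the hypothesis itself forces simplicity and hence $w=1-\rme^{\rmi\theta_2(k_n)}\neq0$; second, your jump computation is done at fixed small $\epsilon$ over a $k$-window around $k_n$ and then $\epsilon\to0$, so one should note that the window width $\delta$ and $\epsilon$ must satisfy $\epsilon\ll\delta$ for the error terms $O(t^2,\epsilon t,\epsilon^2)$ not to disturb the half-turn count --- though the paper's own treatment is no more explicit on this point.
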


\begin{proof}
  First of all, we observe that the scattering matrix of a graph with
  only Neumann or  Dirichlet conditions is complex symmetric: $S_{j,k}
  = S_{k,j}$.  This can be verified explicitly by using
  representation \eqref{eq:C-definition}-\eqref{eq:S-definition},
  together with \eqref{eq:full_graph_scatter0} and the fact that the
  matrix $\Sigma(k)$ is real and symmetric under the specified
  conditions.

  For a moment, consider that only one lead is connected, to the point
  $x_1$. Then the events that change the sign of $f(x_1)$ and
  $f(x_2)$, where $f$ is the one-lead scattering solution, are the
  values of $k$ such that (A) a zero comes into the vertex $x_1$ from
  the lead (``Dirichlet events'') or (B) a zero crosses the point
  $x_2$. The former are easy to characterize: they interlace with the
  events when a ``Neumann point'' comes into the vertex $x_1$, which
  happen precisely at $\sigma(\Gamma)$, as discussed in
  section~\ref{sec:single_lead_dyn}.

  Denote by $\kappa$ the value of $k$ when a zero crosses the point
  $x_2$ where the lead is \emph{not} attached. Now consider the
  scattering system when both leads are attached, at points $x_1$ and
  $x_2$.  At $k=\kappa$ the one-lead scattering solution $f(x)$ can be
  continued to the second lead by setting it to vanish on the entire
  lead.  This would create a valid two-lead solution with $c_2^{in} =
  c_2^{out} = 0$.  By inspecting \eqref{eq:c_out} we conclude that the
  vector $(1,0)^T$ is therefore an eigenvector of the two-lead
  scattering matrix $S(\kappa)$.  This happens whenever the matrix
  $S(k)$ is diagonal.  We conclude that the events of type (B) happen
  in a one-lead scattering scenario precisely when the two-lead
  scattering matrix satisfies $S(k)_{1,2} = S(k)_{2,1} = 0$.

  Introducing the notation,
  \begin{align}
    \zeta(k) & =\det(I-S(k))
    \label{eq:zeta_def} \\
    \tau(k) & =S(k)_{1,2}=S(k)_{2,1},
    \label{eq:tau_def}
  \end{align}
  we can summarize the earlier discussion as follows.  The eigenvalues
  of the graph are given by the zeros of $\zeta(k)$ (see
  proposition~\ref{pro:inside-outside}), and the relative sign of the
  $n$-th eigenfunction, $\sgn[f_n(x_1)\,f_n(x_2)]$, is equal to the parity of the total number of
  zeros of $\zeta$ and $\tau$ that are strictly less than $k_n$.  Note
  that the condition that no eigenfunction is zero on $x_1$ or $x_2$
  implies that the set $\Delta$ in
  proposition~\ref{pro:inside-outside} is empty and that the zeros of
  the functions $\zeta$ and $\tau$ are distinct.

  Applying complex conjugation to $\zeta(k)$ we obtain
  \begin{equation}
    \zeta(k)^{*}=\det(S^{*})\zeta(k)=\zeta(k)/\det(S).
    \label{eq:zeta_conj}
  \end{equation}
  Similarly, using the explicit formula for the inverse of a
  $2\times2$ matrix together with the unitarity of $S$, we obtain for $\tau(k)$
  \begin{equation}
    \tau(k)^{*}=-\det(S^{*})\tau(k)=-\tau(k)/\det(S).
    \label{eq:tau_conj}
  \end{equation}
  These relations allow us to represent $\zeta(k)=r(k)e^{i\phi(k)}$
  and $\tau(k)=\frac{i}{2}t(k)e^{i\phi(k)}$, when recalling that $e^{2i\phi(k)} = \det(S(k))$.

  We now evaluate
  \begin{align*}
    %\label{eq:det_eval}
    \det(\Id_\epsilon - S) &= \det(\Id-S) - \epsilon(S_{1,2}+S_{2,1}) -
    \epsilon^2\\
    &= \zeta(k) - \epsilon \tau(k) - \epsilon^2 = (r(k) -
    \rmi \epsilon t(k)) e^{\rmi\phi(k)} - \epsilon^2,
  \end{align*}
  and, therefore,
  \begin{equation}
    \label{eq:rat_dets}
    \det(\Id_\epsilon - S)e^{-i\phi(k)} = (r(k) -
    \rmi \epsilon t(k)) + o(\epsilon).
  \end{equation}
  It is now clear that, when $r(k) \neq 0$ (i.e. when
  $k\not\in\sigma(\Gamma)$), the $\epsilon\to0$ limit of the above
  ratio is a non-zero real number and therefore its argument is an integer
  multiple of $\pi$.

  To evaluate this integer we focus on the values of $k$ when $z(k) =
  r(k) - \rmi \epsilon t(k)$ crosses the line $\Re(z) = 0$.  When the crossing is in the counter-clockwise direction, the integer above increases, and otherwise it decreases. The counter-clockwise versus clockwise direction of the crossing is
  decided exclusively by the sign of the ratio $r(k_n-0) / t(k_n-0)$,
  which coincides with the parity of the total number of zeros of the
  two functions.  This, in turn, has been shown to coincide with the
  relative sign of the eigenfunction.
\end{proof}

\begin{remark}
  It is interesting to compare the above formula for the sign-weighted
  counting function with the corresponding formula for the more commonly used spectral counting function,
  \begin{equation}
    \label{eq:cf_def}
    N(k) = \# \{k_n \leq k\}.
  \end{equation}
  Under conditions of theorem~\ref{thm:swcf} the counting function
  $N(k)$ can be represented as
  \begin{equation}
    \label{eq:cf_formula}
    N(k) = \frac{1}{\pi} \left[\lim_{\epsilon\to0} \arg
    \det(\Id-\epsilon\Id - S)-\phi(k)\right],
  \end{equation}
  Combining the two we can obtain a counting function that counts only
  the eigenvalues whose eigenfunctions have differing signs at $x_1$
  and $x_2$,
  \begin{equation}
    \label{eq:negcf_formula}
    N_{x_1,x_2}^{-}(k) := \# \{k_n \leq k: f_n(x_1) f_n(x_2)<0 \}
    = \frac{1}{2\pi} \lim_{\epsilon\to0} \arg
    \frac{\det(\Id-\epsilon\Id - S)}{\det(\Id-\epsilon\sigma_1 - S)}.
  \end{equation}
\end{remark}

\subsection{\label{sec:dihedral_formula}Using two leads to derive an exact nodal
count formula}

In the current section we derive a nodal count formula for the graph,
$\Gamma$, given in figure \ref{fig:original_dihedral}.
\begin{figure}[ht]
 \includegraphics[scale=0.6]{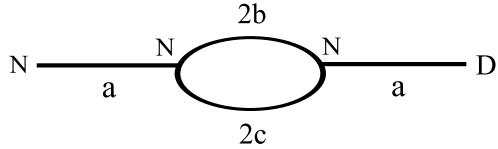}

\caption{\label{fig:original_dihedral}The quantum graph whose nodal count
we compute. The lengths of the edges and the vertex conditions are
indicated. \emph{D} stands for Dirichlet vertex conditions and \emph{N}
for Neumann ones.}

\end{figure}
This graph is a member of an isospectral pair, as described in \cite{BPB09}.
The isospectral twin of this graph is the graph shown in figure \ref{fig:tree_dihedral}.
\begin{figure}[ht]
 \includegraphics[scale=0.6]{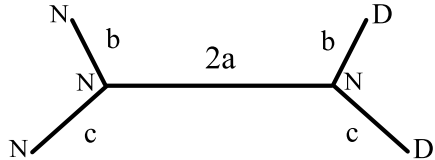}

\caption{\label{fig:tree_dihedral}The tree graph which is isospectral to the
graph in figure \ref{fig:original_dihedral}. The lengths of the edges
and the vertex conditions are indicated.}

\end{figure}

Examining the topology of each of the graphs according to \eqref{eq:nodal_domains_bound}
tells us that the tree graph has the nodal count $\nu_{n}=n$ and
the nodal count of the graph $\Gamma$, which has a cycle, obey the
bounds $n-1\leq\nu_{n}\leq n$. It was claimed in \cite{BOS08} that
the nodal count of $\Gamma$ is \begin{equation}
\nu_{n}=n-\frac{1}{2}-\frac{1}{2}\left(-1\right)^{\left\lfloor \frac{b+c}{a+b+c}n\right\rfloor }.\label{eq:dihedral_old_formula}\end{equation}
 This formula was not proved there, but rather a numerical justification
was given. We present here a proof for the following theorem

\begin{thm}\label{thm:dihedral_formula}

Let $a,b,c$ be positive real numbers such that $\frac{b}{c}\notin\mathbb{Q}$
and $\frac{a}{b+c}\notin\mathbb{Q}$. Let $\Gamma$ be the graph described
in figure \ref{fig:original_dihedral}. Then the nodal points count
sequence of $\Gamma$ is

\begin{equation}
\mu_{n}=n+\mmod_{2}\left(\left\lfloor \frac{b+c}{a+b+c}n\right\rfloor \right),\label{eq:dihedral_nodal_points_formula}\end{equation}

and the nodal domains count sequence of $\Gamma$ is:

\begin{equation}
\nu_{n}=\begin{cases}
n & n\leq\left\lfloor \frac{a}{b+c}\right\rfloor +1\\
n-1+\mmod_{2}\left(\left\lfloor \frac{b+c}{a+b+c}n\right\rfloor \right) & n>\left\lfloor \frac{a}{b+c}\right\rfloor +1\end{cases},\label{eq:dihedral_nodal_domains_formula}\end{equation}

\end{thm}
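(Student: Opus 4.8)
The plan is to compute the nodal point count $\mu_n$ one edge at a time via formula \eqref{eq:NNB2} and only afterwards to read off $\nu_n$. Summing \eqref{eq:NNB2} over the three edges of lengths $a$, $b$, $c$ splits $\mu_n$ into a \emph{spectral} part $\lfloor k_n a/\pi\rfloor + \lfloor k_n b/\pi\rfloor + \lfloor k_n c/\pi\rfloor$, which depends only on the eigenvalues, and a sum of parity corrections, each depending only on a relative sign $\sgn[f_n(u)\,f_n(v)]$ of the eigenfunction at the two ends of an edge. Since the free end of the $a$-edge carries a Dirichlet condition (so $f_n$ vanishes there and the $a$-edge is handled as a Dirichlet-ended interval contributing $\lfloor k_n a/\pi\rfloor$ zeros), and since the edges $b$ and $c$ join the \emph{same} pair of vertices bounding the cycle, the only genuinely unknown quantity is the single relative sign $s_n=\sgn[f_n(x_1)\,f_n(x_2)]$ across the cycle. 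I would therefore attach two leads at the points $x_1,x_2$ bounding the cycle and recover $s_n$ from the sign-weighted counting function of Theorem \ref{thm:swcf}; at the exceptional $k$ where an eigenfunction vanishes at one of these vertices I would place the lead at a nearby interior point, since those are exactly the sign-flip events isolated below.

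First I would pin down the spectral part. Writing out the secular equation \eqref{eq:secular_cond} for this graph (equivalently, exploiting the isospectrality with the tree of figure \ref{fig:tree_dihedral}) gives an explicit trigonometric quantization condition in $ka$, $kb$, $kc$. Using Weyl's law $N(k)=\frac{a+b+c}{\pi}k+O(1)$ together with the hypotheses $b/c\notin\mathbb{Q}$ and $a/(b+c)\notin\mathbb{Q}$, I would show that $k_n(b+c)/\pi$ is never an integer and that the floor terms combine, up to an explicit constant fixed by the Weyl expansion, to reproduce $n$ plus the parity term, relating $\lfloor k_n(b+c)/\pi\rfloor$ to $\lfloor\frac{b+c}{a+b+c}n\rfloor$ with the two having equal parity. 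The two irrationality conditions are precisely what forbid integer coincidences of $ka$, $kb$, $kc$ with multiples of $\pi$ and what guarantee that the spectrum is simple, so that $\mu_n$ and $\nu_n$ are unambiguously defined in the first place.

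Next I would determine the signs $s_n$ by the scattering route. I would compute the $2\times2$ matrix $S(k)$ for the two leads explicitly from \eqref{eq:C-definition}--\eqref{eq:S-definition}, form $\zeta(k)=\det(\Id-S(k))$ and $\tau(k)=S(k)_{1,2}$, and apply Theorem \ref{thm:swcf}: $s_n$ equals the parity of the number of zeros of $\zeta$ and $\tau$ lying below $k_n$. By Proposition \ref{pro:inside-outside} the zeros of $\zeta$ are exactly $\sigma(\Gamma)$, and by Lemma \ref{lem:counterclockwise} the eigenvalues of $S(k)$ rotate monotonically, so the zeros of $\zeta$ and of $\tau$ strictly interlace. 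Identifying the $\tau$-zeros as the events at which the scattering solution vanishes at the far cycle vertex and counting them against the spectral points shows that $s_n$ flips exactly when $\lfloor\frac{b+c}{a+b+c}n\rfloor$ changes parity; feeding this back into the parity corrections of \eqref{eq:NNB2} yields \eqref{eq:dihedral_nodal_points_formula}.

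Finally, to pass from $\mu_n$ to $\nu_n$ I would use \eqref{E:mu_nu} together with the observation that the cycle contributes an extra nodal domain exactly when it carries no zero, which separates the regime $n\leq\lfloor a/(b+c)\rfloor+1$ (giving the first case of \eqref{eq:dihedral_nodal_domains_formula}) from the generic regime. The hard part --- and the reason \eqref{E:mu_nu} cannot simply be quoted --- is the non-generic eigenvalues at which $f_n$ vanishes at a vertex of the cycle: there a nodal point sits exactly on a vertex, several sign domains meet, and the usual $\mu$--$\nu$ accounting fails and must be redone by hand. Tracking the nodal points through these merge and split events, along the lines of the single-cycle analysis of section \ref{sec:single_cycle_dyn}, is where the real bookkeeping lies, and it is precisely these events that produce both the parity term $\mmod_2(\lfloor\frac{b+c}{a+b+c}n\rfloor)$ and the threshold $\lfloor a/(b+c)\rfloor+1$.
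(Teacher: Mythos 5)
Your overall architecture (summing the per-edge count \eqref{eq:NNB2} so that everything reduces to the floor terms $\lfloor k_n L_e/\pi\rfloor$ plus one unknown relative sign $s_n$ across the cycle, then extracting $s_n$ from Theorem~\ref{thm:swcf}) is a genuinely different route from the paper, which instead tracks the motion of the nodal points of a continuously varying two-lead scattering solution (the ``real contra-phasal solution'' of Proposition~\ref{prop:anti_phase_sol}) and counts split/merge events. However, your proposal has a genuine gap at its quantitative core: the step where you pass from the spectral floor terms to $\left\lfloor\frac{b+c}{a+b+c}n\right\rfloor$ ``using Weyl's law \ldots{} up to an explicit constant fixed by the Weyl expansion.'' Weyl's law only gives $k_n=\frac{\pi}{2(a+b+c)}n+O(1)$, and an $O(1)$ error fed into a floor function generically corrupts the parity for a positive fraction of the sequence; the irrationality hypotheses $b/c\notin\mathbb{Q}$, $a/(b+c)\notin\mathbb{Q}$ forbid exact coincidences but give no control whatsoever over the bounded fluctuations of the counting function. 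The paper itself makes precisely this point: the Weyl-inversion argument appears in its discussion section as an appealing derivation that ``cannot be justified,'' and the exact identity that would rescue it, \eqref{eq:surprising_spectral_identity}, is there verified only numerically and is deduced \emph{as a consequence} of Theorem~\ref{thm:dihedral_formula}, so it is not available to you as an input. Likewise, your assertion that ``counting [the $\tau$-zeros] against the spectral points shows that $s_n$ flips exactly when $\lfloor\frac{b+c}{a+b+c}n\rfloor$ changes parity'' is stated, not derived; it is exactly the content of the paper's Lemma~\ref{lem:d_p_formula}, where the spectrum is characterized exactly as $\{k\,:\,a\in D_1(k)\cup D_2(k)\}$ with $D_1\cup D_2$ an arithmetic progression of spacing $\pi/2k$ drifting monotonically (Lemma~\ref{lem:counterclockwise}), the flip events pinned at $k_p^*=\frac{\pi}{2b+2c}p$ (Lemma~\ref{lem:np_at_vertex}), and the number of eigenvalues between consecutive events computed by an \emph{exact} lattice count, \eqref{eq:d_p_explicit_formula} --- this exact count is what replaces the Weyl inversion, and without it your proof does not close.

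Two secondary issues. First, Theorem~\ref{thm:swcf} requires that no eigenfunction vanish at $x_1$ or $x_2$; your fix of moving the lead to a ``nearby interior point'' is delicate precisely at the exceptional values you need it for, since those are the sign-flip events where the relative sign is unstable, and on this graph one must also separately rule out coincidences between vertex-vanishing events and $\sigma(\Gamma)$ (the paper does the analogous disjointness, $\{k_p^*\}\cap\sigma(\Gamma)=\emptyset$, using both irrationality conditions). Second, in the passage to $\nu_n$, note that the Dirichlet endpoint forces a permanent zero, so the a priori bound is the shifted one $0\leq\mu_n-n\leq1$ rather than \eqref{eq:nodal_zeros_bound} as stated; your criterion that the cycle contributes an extra domain exactly when it carries no zero agrees with the paper, but the threshold $\lfloor a/(b+c)\rfloor+1$ is again an output of the exact event count ($d_1=\lfloor a/(b+c)\rfloor+1$ eigenvalues occur before the first entrance event $k_1^*$), not of the asymptotic argument. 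If you replace the Weyl step by the interlacing/lattice count of Lemma~\ref{lem:d_p_formula}, your edge-sum formulation could be made rigorous, but at that point it essentially reproduces the paper's proof in different notation.
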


\begin{remark}Note that for eigenvalues high enough in the spectrum,
the exact nodal count \eqref{eq:dihedral_nodal_domains_formula} coincides
with the previously known numeric result, \eqref{eq:dihedral_old_formula}.
\end{remark}

The method of proof of the formulas \eqref{eq:dihedral_nodal_points_formula}, \eqref{eq:dihedral_nodal_domains_formula} involves attaching leads to the graph and keeping track of the nodal points dynamics with respect to an increment of the spectral parameter. This specific example both presents the ability to derive an exact formula and also demonstrates the technical complications that may arise while using this method.

\subsubsection{\label{sec:dihedral_proof_outline}A brief outline of the proof}
Define a graph with two vertices connected by two edges of lengths
$2b$ and $2c$. This graph is actually a cycle of length $2b+2c$.
Connect two leads to the vertices of this graph to obtain the graph
in figure \ref{fig:basic_loop_with_leads}. Denote this graph by $\tilde{\Gamma}$
and notice that $\Gamma$ as a metric graph is a subgraph of $\tilde{\Gamma}$.
The graph $\tilde{\Gamma}$ has a symmetry of reflection along an
axis which passes through the middle of the graph. We will exploit
this symmetry in the next section.

\begin{figure}[h]
 \includegraphics[scale=0.6]{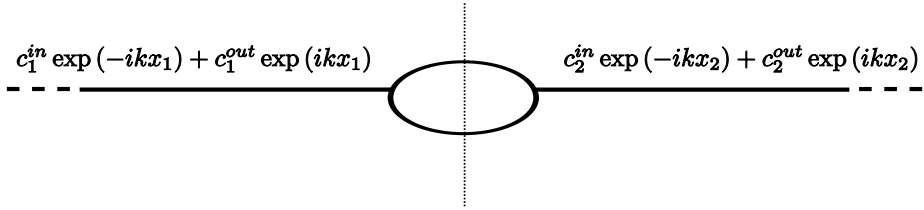}

\caption{\label{fig:basic_loop_with_leads}The graph $\tilde{\Gamma}$ and
its symmetry axis (dotted). The values of its generalized eigenfunction
are specified on the leads.}

\end{figure}

The Laplacian on $\tilde{\Gamma}$ possesses a continuous spectrum
and each generalized eigenvalue, $k^{2},$ has a two-dimensional generalized
eigenspace, characterized by $\vec{c}^{\,in}\in\mathbb{C}^{2}$ (theorem
\ref{thm:scat_mat}). We will describe a one parameter ($k\in\Reals$)
family of generalized eigenfunctions on $\tilde{\Gamma}$, $f(k;\, x)$.
We thus consider $f(k;\, x)$
as a function on $\tilde{\Gamma}$ which changes continuously with
$k$ - this will be emphasized by the notation $f(k;\,\cdot)$. This
$k$-dependent function would be chosen such that its restriction
on the subgraph $\Gamma$ at $k^{2}\in\sigma\left(\Gamma\right)$
equals the corresponding eigenfunction of $\Gamma$. The strategy
of the proof is to keep track of the number of nodal points of $f(k;\,\cdot)$
as it changes with $k$ and to sample this number at $k^{2}\in\sigma\left(\Gamma\right)$.
We will notice that the nodal points travel continuously from infinity
towards the cycle and we will characterize the dynamics of the nodal
points which enter the cycle. This will allow us to find the change
in the number of nodal points during such entrance events. We will
then calculate the number of eigenvalues which occur between two consequent
entrance events and will combine all those observations to deduce
the nodal count formulas \eqref{eq:dihedral_nodal_points_formula}
and \eqref{eq:dihedral_nodal_domains_formula}.

\subsubsection{Towards a proof of theorem \ref{thm:dihedral_formula}}

Let $\tilde{\Gamma}$ be the graph that is described in section \ref{sec:dihedral_proof_outline}
and appears in figure \ref{fig:basic_loop_with_leads}. A generalized
eigenfunction of $\tilde{\Gamma}$ with eigenvalue $k^{2}$, on the
j$^{th}$ lead, is given by \[
f_{j}(k;\, x_{j})=c_{j}^{in}\exp(-ikx_{j})+c_{j}^{out}\exp(ikx_{j}),\]
 where $j=1,2$ and the coefficients \[
\vec{c}^{\,in}=\left(\begin{array}{c}
c_{1}^{in}\\
c_{2}^{in}\end{array}\right),\,\vec{c}^{\,out}=\left(\begin{array}{c}
c_{1}^{out}\\
c_{2}^{out}\end{array}\right)\]
 are related by\begin{align}
\vec{c}^{\, out} & =S(k)\vec{c}^{\, in}.\label{eq:scatter_basic}\end{align}
 The graph $\tilde{\Gamma}$ obeys a symmetry of reflection along
a vertical axis which passes through the center of the graph. This
reflection symmetry exchanges the two leads of $\tilde{\Gamma}$ and
it implies that its scattering matrix, $S(k)$, commutes with the
matrix \[
\sigma=\left(\begin{array}{cc}
0 & 1\\
1 & 0\end{array}\right).\]
 This, together with the unitarity of $S(k)$ (theorem \ref{thm:scat_mat})
allows us to write it in the form \begin{equation}
S(k)=\left(\begin{array}{cc}
\cos\left(\gamma\left(k\right)\right)e^{i\varphi\left(k\right)} & \sin\left(\gamma\left(k\right)\right)e^{i\left(\varphi\left(k\right)+\nicefrac{\pi}{2}\right)}\\
\sin\left(\gamma\left(k\right)\right)e^{i\left(\varphi\left(k\right)+\nicefrac{\pi}{2}\right)} & \cos\left(\gamma\left(k\right)\right)e^{i\varphi\left(k\right)}\end{array}\right).\label{eq:scattering_matrix}\end{equation}
 \\
 The exact form of $S(k)$ (expressed in terms of the edge lengths
parameters $a,b,c$) can be calculated using \eqref{eq:full_graph_scatter0}
, \eqref{eq:full_graph_scatter} and \eqref{eq:S-definition}.

Following the approach described in section \ref{sec:single_lead_dyn}
we treat $k$ as a continuous parameter and choose $\vec{c}^{\,in}$ to vary
continuously with $k.$ Namely, we choose a certain continuous vector
function $\vec{c}^{\,in}:\left(0,\infty\right)\to\mathcal{\mathrm{\mathbb{C}}}^{2}$.
Relation \eqref{eq:scatter_basic} yields the continuous function
$\vec{c}^{\,out}:\left(0,\infty\right)\to\mathcal{\mathrm{\mathbb{C}}}^{2}$
and both $\vec{c}^{\,in}\left(k\right)$ and $\vec{c}^{\,out}\left(k\right)$
determine $f(k;\,\cdot)$, a function on $\tilde{\Gamma}$ that changes
continuously with $k$. We next describe a specific choice of $\vec{c}^{\,in}\left(k\right)$
that yields a function $f(k;\, x)$ with the following properties
which are convenient for our proof.

\begin{property}\label{property_real}

The values of the function on the leads are real, i.e.,

\[
\,\,\,\, f_{j}(k;\, x_{j})=\overline{f_{j}(k;\, x_{j})}\text{\,\,\, for}\,\, j=1,2.\]

\end{property}

\begin{property}\label{property_anti_phase}

Denote the zeros of the function and of its derivative on the leads
by

\begin{eqnarray*}
D_{j}\left(k\right) & := & \left\{ x_{j}\geq0\,\left|\, f_{j}(k;\, x_{j})=0\right.\right\} \\
N_{j}\left(k\right) & := & \left\{ x_{j}\geq0\,\left|\,\frac{\partial}{\partial x_{j}}f_{j}(k;\, x_{j})=0\right.\right\} .\end{eqnarray*}
 They obey \[
D_{1}\left(k\right)=N_{2}\left(k\right)\,\,\text{and}\,\,\, D_{2}\left(k\right)=N_{1}\left(k\right).\]

\end{property}

The usefulness of these properties is made transparent in the following
proposition.

\begin{prop}\label{prop:anti_phase_sol} Let $a,b,c$ be positive
real numbers, such that $\frac{b}{c}\notin\mathbb{Q}$. Let $\Gamma$
and $\tilde{\Gamma}$ the graphs defined above (with the edge lengths
parameters $a,b,c$).
\begin{enumerate}
\item \label{enu:prop_part1}For each $k\in\Reals$ properties \ref{property_real}
and \ref{property_anti_phase} define a function $f(k;\,\cdot)$ on
$\tilde{\Gamma}$ which is unique up to a multiplication by a scalar
and a reflection along a vertical axis which passes through the middle
of $\tilde{\Gamma}$.
\item \label{enu:prop_part2}The above function, $f(k;\,\cdot)$, can be
chosen to be continuous in $k$.
\item \label{enu:prop_part3}If $k^{2}\in\sigma\left(\Gamma\right)$, the
restriction of the function $f(k;\,\cdot)$ to the graph on $\Gamma$
coincides with the eigenfunction of $\Gamma$ up to reflection.
\end{enumerate}
\end{prop}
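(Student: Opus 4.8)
The plan is to translate Properties \ref{property_real} and \ref{property_anti_phase} into linear conditions on the incoming amplitudes $\vec{c}^{\,in}$ and to exploit the explicit form \eqref{eq:scattering_matrix} of $S(k)$, which is forced by the reflection symmetry. First I would note that on the $j$-th lead $f_j = c_j^{\,in}\rme^{-\rmi k x_j} + c_j^{\,out}\rme^{\rmi k x_j}$ is real for all $x_j$ exactly when $c_j^{\,out} = \overline{c_j^{\,in}}$; combined with the scattering relation \eqref{eq:scatter_basic} this shows that Property \ref{property_real} is equivalent to
\begin{equation*}
  S(k)\,\vec{c}^{\,in} = \overline{\vec{c}^{\,in}}.
\end{equation*}
Under this condition each lead value reads $f_j = 2|c_j^{\,in}|\cos(k x_j - \theta_j)$ with $\theta_j = \arg c_j^{\,in}$, so $D_j$ and $N_j$ are arithmetic progressions of step $\pi/k$ shifted by $\theta_j$; a direct comparison turns Property \ref{property_anti_phase} into the single phase condition $\theta_2 - \theta_1 \equiv \pi/2 \pmod{\pi}$.

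For part \eqref{enu:prop_part1} I would diagonalise $S(k)$ using the fact that it commutes with the swap matrix $\sigma$ for every $k$: the constant vectors $v_\pm = (1,\pm1)^{T}$ are then eigenvectors with eigenvalues $\rme^{\rmi(\varphi\pm\gamma)}$ read off from \eqref{eq:scattering_matrix}. Writing $\vec{c}^{\,in} = \alpha v_+ + \beta v_-$, the reality condition decouples into $\alpha\,\rme^{\rmi(\varphi+\gamma)} = \bar\alpha$ and $\beta\,\rme^{\rmi(\varphi-\gamma)} = \bar\beta$, whose solutions are $\alpha = r_+\rme^{-\rmi(\varphi+\gamma)/2}$ and $\beta = r_-\rme^{-\rmi(\varphi-\gamma)/2}$ with $r_\pm\in\Reals$; this is the two-real-parameter family cut out by Property \ref{property_real}. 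A short computation of $\theta_2-\theta_1$ in terms of $r_\pm$ reduces the phase condition above to $r_+^2 = r_-^2$. Hence, up to a common real factor, $r_+ = \pm r_-$, and since flipping the sign of $r_-$ sends $\alpha v_+ + \beta v_- \mapsto \sigma(\alpha v_+ + \beta v_-)$ it is precisely the lead-swapping reflection. This yields uniqueness up to a real scalar and a reflection.

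Part \eqref{enu:prop_part2} is then largely automatic, because the eigenvectors $v_\pm$ do not depend on $k$: continuity of $\vec{c}^{\,in}(k)$ reduces to choosing continuous branches of $\varphi\pm\gamma$, which exist since $S(k)$ is analytic on $\Reals$ and its eigenvalues wind monotonically by Lemma \ref{lem:counterclockwise}. The outgoing data, and hence the lead values, are therefore continuous. The internal amplitudes $\vec{a}=C(k)\vec{c}^{\,in}(k)$ are analytic off the discrete exceptional set $\Delta$ of Lemma \ref{lem:special_k_values}; here the hypothesis $\frac{b}{c}\notin\mathbb{Q}$ enters, guaranteeing that the two families of arc-resonances making up $\Delta$ never coincide, so each point of $\Delta$ is simple and the one-dimensional ambiguity of $\vec{a}$ there is removed by continuous extension.

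The substance of the proposition is part \eqref{enu:prop_part3}, and this is where I expect the real work. The graph $\Gamma$ is the subgraph of $\tilde{\Gamma}$ obtained by cutting the two leads at distance $a$ from the cycle and imposing a Dirichlet condition at one of the new endpoints and a Neumann condition at the other; its interior vertices carry exactly the conditions they have in $\tilde{\Gamma}$, with each tail derivative equal to the corresponding lead derivative. Consequently $f(k;\cdot)$, being a generalized eigenfunction of $\tilde{\Gamma}$, automatically satisfies every vertex condition of $\Gamma$ except the two truncation conditions $a\in D_1$ (Dirichlet) and $a\in N_2$ (Neumann); and the point is that Property \ref{property_anti_phase} makes these identical, $D_1 = N_2$, so a single event $a\in D_1(k)$ produces an eigenfunction of $\Gamma$. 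For the converse I would take an eigenfunction $g$ of $\Gamma$ at $k^2\in\sigma(\Gamma)$, choose it real, and continue it along both leads: the Dirichlet end forces $a\in D_1$ and the Neumann end forces $a\in N_2$, and subtracting the two resulting phase relations returns precisely $\theta_2-\theta_1\equiv\pi/2$, so the continued $g$ satisfies Properties \ref{property_real} and \ref{property_anti_phase}. By the uniqueness established in part \eqref{enu:prop_part1}, $g$ must coincide with $f(k;\cdot)$ up to scalar and reflection. The main obstacle is exactly this matching step: one must verify that the two \emph{different} boundary conditions of $\Gamma$, sitting at the common distance $a$ on the two leads, conspire to reproduce the anti-phase relation, and that simplicity of $\sigma(\Gamma)$ (together with the interlacing of $D_1$ and $N_1$) excludes the degenerate coincidences where $a$ would lie in both $D_1$ and $N_1$.
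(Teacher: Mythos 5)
Your reduction of Properties \ref{property_real} and \ref{property_anti_phase} to conditions on $\vec{c}^{\,in}$ is correct, and your part-\eqref{enu:prop_part1} computation in the symmetric/antisymmetric basis $v_\pm=(1,\pm1)^{T}$ (reality giving $\alpha=r_+\rme^{-\rmi(\varphi+\gamma)/2}$, $\beta=r_-\rme^{-\rmi(\varphi-\gamma)/2}$, the anti-phase condition reducing to $r_+^2=r_-^2$, and the sign flip of $r_-$ realizing the reflection) is a clean alternative to the paper's direct substitution, which instead produces the explicit solution \eqref{eq:sol_relation_1}--\eqref{eq:sol_relation_2}. Your part-\eqref{enu:prop_part3} argument --- continue the eigenfunction of $\Gamma$ along the leads, observe that the Dirichlet end at distance $a$ on one lead and the Neumann end on the other reproduce exactly the anti-phase relation, then invoke uniqueness --- is the same route as the paper's.

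The genuine gap is your treatment of the set $\Delta$, and with it the determination of $f(k;\,\cdot)$ on the cycle. You claim that $\frac{b}{c}\notin\mathbb{Q}$ guarantees that each point of $\Delta$ is simple and that the one-dimensional ambiguity of $\vec{a}$ there is removed by continuous extension. This is wrong in two ways. First, factually: by Lemma \ref{lem:special_k_values}, $\Delta=\sigma(\Gamma^{*})$ where $\Gamma^{*}$ is the cycle with Dirichlet conditions imposed at both vertices \emph{in addition} to the conditions already there; an eigenfunction must be a multiple of $\sin(kx)$ on each arc vanishing at both endpoints (a function supported on a single arc violates the derivative condition at the vertices), which forces $\sin(2bk)=0$ and $\sin(2ck)=0$ \emph{simultaneously}, i.e.\ $\frac{b}{c}\in\mathbb{Q}$. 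So under the hypothesis $\Delta$ is empty --- it is not the union of two ``arc-resonance'' families whose coincidences incommensurability avoids, but essentially their intersection, which incommensurability kills outright (this is the paper's Lemma \ref{lem:Delta_is_empty}). Second, structurally: if $\Delta$ were nonempty your mechanism could not save the proposition, because at $k\in\Delta$ one may add to $\vec{a}$ any element of $\Ker(\id-\tU)$, i.e.\ a Dirichlet eigenfunction supported on the cycle, without changing the lead values at all; since Properties \ref{property_real} and \ref{property_anti_phase} constrain only the leads, the pointwise uniqueness asserted in part \eqref{enu:prop_part1} (``for each $k\in\Reals$'') would genuinely fail at such $k$ --- a continuous extension selects a value but does not prove uniqueness. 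The correct step is to prove $\Delta=\emptyset$ first; then Theorem \ref{thm:scat_mat} and formula \eqref{eq:avec_in_terms_of_cinvec} yield both uniqueness and $k$-continuity of $f(k;\,\cdot)$ on the cycle, since $\id-\tU$ is invertible for all real $k$.
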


The following lemma will aid us in proving the uniqueness of $f(k;\,\cdot)$.

\begin{lem}\label{lem:Delta_is_empty}

Let $b,c$ be positive real numbers, such that $\frac{b}{c}\notin\mathbb{Q}$.
Then the set \[
\Delta=\left\{ k\in\Reals\,\left|\,\det\left(\id-\tU\right)=0\right.\right\} ,\]

that was defined in \eqref{eq:setJ}, lemma \ref{lem:special_k_values},
is an empty set.

\end{lem}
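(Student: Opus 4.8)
The plan is to recall the content of Lemma~\ref{lem:special_k_values}: the set $\Delta$ coincides with the spectrum $\sigma(\Gamma^*)$, where $\Gamma^*$ is obtained from the underlying compact graph by imposing the Dirichlet condition $f(v)=0$ at both marked vertices, in addition to whatever conditions were already there. Here the compact graph underlying $\tilde\Gamma$ is the cycle of length $2b+2c$ formed by the two edges of lengths $2b$ and $2c$ joining the two vertices, and the marked vertices are precisely these two vertices. So $\Gamma^*$ is the cycle with Dirichlet conditions imposed at both vertices. Since imposing Dirichlet at both endpoints of each edge decouples the edges entirely, $\Gamma^*$ splits into two disjoint Dirichlet intervals, one of length $2b$ and one of length $2c$. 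Showing $\Delta=\emptyset$ is therefore equivalent to showing that these two decoupled Dirichlet intervals share no common eigenvalue.

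First I would make this decoupling explicit: a function on $\Gamma^*$ is an eigenfunction iff its restriction to each edge solves $-f''=k^2 f$ with $f=0$ at both endpoints of that edge. The spectrum of a Dirichlet interval of length $\ell$ is $\{(\pi m/\ell)^2 : m\in\mathbb{N}\}$, so the Dirichlet eigenvalues of the two edges are the $k$-values
\begin{equation*}
  k \in \frac{\pi}{2b}\mathbb{N} \quad\text{and}\quad k \in \frac{\pi}{2c}\mathbb{N}.
\end{equation*}
A value $k^2$ lies in $\sigma(\Gamma^*)$ precisely when it is an eigenvalue of at least one of the two intervals. I would caution that one must double-check whether $\Delta$ requires an eigenfunction supported on a single edge (which is automatically allowed here, since a function vanishing on one edge and Dirichlet-oscillating on the other is a legitimate eigenfunction of the decoupled $\Gamma^*$); reviewing the proof of Lemma~\ref{lem:special_k_values}, any $\vec a\in\Ker(\id-\tU)$ works, so single-edge-supported modes do count, and consequently $\sigma(\Gamma^*)$ is the \emph{union} of the two interval spectra rather than their intersection. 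The key point I must then extract is that membership in $\Delta$ forces $k$ to be a Dirichlet eigenvalue of \emph{both} intervals simultaneously — this comes from the extra requirement, implicit in the definition of $\tilde\Gamma$ for this particular graph, that the generalized eigenfunction be continuous and match across the shared vertices; I would verify via the explicit block structure of $\tU$ that a nontrivial kernel element indeed forces $\sin(2bk)=\sin(2ck)=0$.

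The main obstacle, and the step that actually uses the hypothesis $b/c\notin\mathbb{Q}$, is then purely arithmetic: I must show that the two sequences $\frac{\pi}{2b}\mathbb{N}$ and $\frac{\pi}{2c}\mathbb{N}$ are disjoint away from $k=0$. A common nonzero value $k$ would give $k=\frac{\pi m}{2b}=\frac{\pi m'}{2c}$ for some positive integers $m,m'$, whence
\begin{equation*}
  \frac{b}{c} = \frac{m}{m'} \in \mathbb{Q},
\end{equation*}
contradicting irrationality of $b/c$. Since $k=0$ is excluded (we work with $k\in\Reals$, $k>0$, and $\lambda=0$ is not in the relevant part of the spectrum for Dirichlet intervals anyway), no common eigenvalue exists, and therefore $\Delta=\emptyset$.

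I would close by remarking that the only genuinely delicate point is pinning down exactly which condition a kernel vector of $\id-\tU$ imposes — whether disjointness of the spectra or intersection is what makes $\Delta$ empty — so I would carry out the explicit computation of $\tU$ for the two-edge cycle with the two marked vertices to confirm that the simultaneous Dirichlet condition $\sin(2bk)=\sin(2ck)=0$ is what is required. Once that computation is in hand, the irrationality argument finishes the proof immediately.
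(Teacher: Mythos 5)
There is a genuine gap, and it sits exactly at the point you flagged but deferred. You assert that $\Gamma^*$ ``splits into two disjoint Dirichlet intervals'' and that, by the proof of Lemma~\ref{lem:special_k_values}, ``single-edge-supported modes do count, and consequently $\sigma(\Gamma^*)$ is the \emph{union} of the two interval spectra.'' This is a misreading of Lemma~\ref{lem:special_k_values}: there the condition $f(v)=0$ is imposed \emph{in addition} to the conditions already present at the marked vertices, not in place of them. The Neumann--Kirchhoff derivative condition is retained, so the edges do \emph{not} decouple and the problem is overdetermined (the paper remarks explicitly that this is why $\Delta$ is usually empty; if $\sigma(\Gamma^*)$ were the union of two Dirichlet-interval spectra, $\Delta$ would be infinite and the lemma false). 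Concretely, a putative single-edge mode $f=A\sin(kx)$ on the edge of length $2b$, extended by zero to the other edge, satisfies continuity and both Dirichlet conditions, but at the marked vertex the retained condition $\sum_{e}f'_e(v)=0$ reads $Ak=0$, killing it. More generally, writing $f=A\sin(kx)$ and $f=B\sin(ky)$ on the two edges (coordinates from the same vertex), the Kirchhoff condition forces $B=-A$, so a nontrivial kernel element has both amplitudes nonzero, and the Dirichlet conditions at the second vertex then force $\sin(2bk)=\sin(2ck)=0$ \emph{simultaneously}. This is the intersection statement your argument needs; your proposal never establishes it, first asserting the opposite (union), then attributing the fix vaguely to ``continuity and matching'' (continuity is trivially satisfied by any function vanishing at the vertices, so it cannot be the mechanism), and finally leaving the decisive computation of $\Ker(\id-\tU)$ as an unexecuted ``I would verify.''

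Once that step is supplied, your closing arithmetic is fine and matches the paper: $\sin(2bk)=\sin(2ck)=0$ with $k>0$ gives $2b,2c\in\frac{\pi}{k}\mathbb{N}$, hence $b/c\in\mathbb{Q}$, contradicting the hypothesis. The paper's proof reaches the simultaneous sine conditions directly by noting the eigenfunction of the overdetermined problem is (up to scale) $\sin(kx)$ on each edge with the retained vertex conditions coupling the two edges; your route via the kernel of $\id-\tU$ would work too, but only after the explicit verification you postponed, which is precisely where the content of the lemma lives.
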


\begin{proof}

Lemma \ref{lem:special_k_values} tells us that $\Delta=\sigma\left(\Gamma^{*}\right)$,
where $\Gamma^{*}$ is a cycle graph with additional Dirichlet conditions
imposed on its two vertices (figure \ref{Fig:only_cycle}).

\begin{figure}[h]
 \includegraphics[scale=0.6]{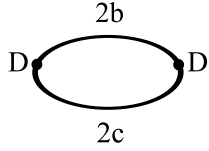}

\caption{The graph $\Gamma^{*}$, which obeys $\Delta=\sigma\left(\Gamma^{*}\right)$.}

\label{Fig:only_cycle}
\end{figure}

Assume that $k$ is in the spectrum of $\Gamma^{*}$. The corresponding
eigenfunction should then be of the form $\sin\left(kx\right)$ on
each of the edges (up to a multiplicative factor). The Dirichlet boundary
conditions imply that $\sin\left(2bk\right)=\sin\left(2ck\right)=0$
and therefore $b$ and $c$ both belong to the set $\frac{\pi}{2k}\mathbb{N}$.
This means that $\frac{b}{c}\in\mathbb{Q}$ and contradicts the assumption.

\end{proof}

\begin{proof} of proposition \ref{prop:anti_phase_sol}

Let $k\in\Reals$. Let $f(k;\,\cdot)$ be a generalized eigenfunction
of $\tilde{\Gamma}$ which obeys the properties \ref{property_real}
and \ref{property_anti_phase}. From property \ref{property_real}
we conclude that \[
c_{j}^{in}=\overline{c_{j}^{out}}\,\,\text{for }j=1,2.\]
 Thus, for a suitable $c_{j}$ and $\alpha_{j}$,\begin{align}
c_{j}^{in} & =c_{j}\exp\left(i\alpha_{j}\right)\nonumber \\
c_{j}^{out} & =c_{j}\exp\left(-i\alpha_{j}\right).\label{eq:c_j_coeffs}\end{align}
 We plug this in the expression for the values of $f$ on the leads
\begin{align*}
f_{j}(k;\, x_{j}) & =c_{j}^{in}\exp(-ikx_{j})+c_{j}^{out}\exp(ikx_{j})=2c_{j}\cos(\alpha_{j}-kx_{j}),\end{align*}
 and obtain \begin{align*}
D_{j}\left(k\right) & =\left\{ x_{j}\geq0\,\left|\, f_{j}(k;\, x_{j})=0\right.\right\} \\
 & =\left\{ x_{j}\geq0\,\left|\, x_{j}\in\frac{\alpha_{j}}{k}+\frac{\pi}{2k}+\frac{\pi}{k}\mathbb{Z}\right.\right\} \\
\\N_{j}\left(k\right) & =\left\{ x_{j}\geq0\,\left|\,\frac{\partial}{\partial x_{j}}f_{j}(k;\, x_{j})=0\right.\right\} \\
 & =\left\{ x_{j}\geq0\,\left|\, x_{j}\in\frac{\alpha_{j}}{k}+\frac{\pi}{k}\mathbb{Z}\right.\right\} .\end{align*}
 Property \ref{property_anti_phase} now translates to \begin{equation}
\alpha_{2}=\alpha_{1}+\frac{\pi}{2}.\label{eq:alpha_coeffs}\end{equation}

We use \eqref{eq:scattering_matrix},\eqref{eq:c_j_coeffs} and \eqref{eq:alpha_coeffs}
and plug them in \eqref{eq:scatter_basic} to get equations on $c_{j},\alpha_{j}$.
There are two possible solutions, which describe two functions that
are the same up to a reflection along a vertical axis which passes
through the middle of $\tilde{\Gamma}$. One of the solutions reads

\begin{eqnarray}
c_{1}^{in}(k) & = & \frac{1}{2}\exp\left(-i\frac{\varphi\left(k\right)}{2}\right)\label{eq:sol_relation_1}\\
c_{2}^{in}(k) & = & \begin{cases}
\frac{\cos\left(\gamma\left(k\right)\right)-1}{2\sin\left(\gamma\left(k\right)\right)}\exp\left(-i\frac{\varphi\left(k\right)-\pi}{2}\right) & \gamma\left(k\right)\notin2\pi\mathbb{Z}\\
0 & \gamma\left(k\right)\in2\pi\mathbb{Z}\end{cases},\label{eq:sol_relation_2}\end{eqnarray}

and the corresponding function is given on the leads by

\begin{eqnarray*}
f_{1}(k;\, x_{1}) & = & \cos\left(\frac{\varphi\left(k\right)}{2}+kx_{1}\right)\\
f_{2}(k;\, x_{2}) & = & \begin{cases}
\frac{\cos\left(\gamma\left(k\right)\right)-1}{\sin\left(\gamma\left(k\right)\right)}\cos\left(\frac{\varphi\left(k\right)-\pi}{2}+kx_{2}\right) & \gamma\left(k\right)\notin2\pi\mathbb{Z}\\
0 & \gamma\left(k\right)\in2\pi\mathbb{Z}\end{cases}.\end{eqnarray*}
 Note that $f_{2}(k;\,\cdot)$ is continuous in $k$. In addition,
$f_{1}(k;\,\cdot)$ and $f_{2}(k;\,\cdot)$ that are given above can
be multiplied by any $k$-continuous scalar function to yield an appropriate
solution which is also continuous in $k$. This proves that $f(k;\,\cdot)$
is uniquely defined on the leads and also $k$-continuous there. It
is left to show the same for the values of $f(k;\,\cdot)$ on the
cycle. Theorem \ref{thm:scat_mat} implies that $f(k;\,\cdot)$ may
have multiple values on the cycle only for $k^{2}\in\Delta$. However,
since $\Delta=\emptyset$ (lemma \ref{lem:Delta_is_empty}), this
cannot happen and $f(k;\,\cdot)$ is uniquely defined on the cycle.
In addition, the values of $f(k;\,\cdot)$ on the cycle are determined
by equation \eqref{eq:avec_in_terms_of_cinvec}, which shows that
these values are continuous in $k$, due to the reversibility of $\id-\tilde{U}$
and the $k$-continuity of $\vec{c}^{\,in}$.

We start proving part \ref{enu:prop_part3} of the proposition by
assuming that $k^{2}\in\sigma\left(\Gamma\right)$. We have that there
exists a real eigenfunction with eigenvalue $k^{2}$ on $\Gamma$.
We fix a function $g(k;\,\cdot)$ on $\tilde{\Gamma}$ to equal this
eigenfunction when restricted on $\Gamma.$ Then the values of this
function, $g(k;\,\cdot)$, can be uniquely continued so that it is
defined on the whole of $\tilde{\Gamma}$. It is easy to verify that
the obtained function obeys properties \ref{property_real} and \ref{property_anti_phase}
and we conclude from the proof of part \ref{enu:prop_part1} of the
proposition that it is equal to $f(k;\,\cdot)$ up to a multiplication
by a scalar or a reflection.

\end{proof}

Proposition \ref{prop:anti_phase_sol} shows that there are only two
$k$-continuous functions, $f(k;\, x)$, which obey the properties
\ref{property_real} and \ref{property_anti_phase}. We call such
a function a \emph{real contra-phasal solution}, due to the properties
that it has. These functions will be used to prove theorem \ref{thm:dihedral_formula}.
We carry on by stating a few lemmas which describe the dynamical properties
of the nodal points of such a real contra-phasal solution.

\begin{lem}\label{lem:np_move_towards_cycle}

The nodal points of a real contra-phasal solution move on the leads
towards the cycle as $k$ increases.

\end{lem}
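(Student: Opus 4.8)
The plan is to directly compute the direction of motion of each nodal point on the leads, mimicking the single-lead calculation in equation~\eqref{eq:nodal_dynamics_1_lead}, but now using the explicit contra-phasal solution furnished by proposition~\ref{prop:anti_phase_sol}. From the proof of that proposition we have the closed-form expressions
\[
f_{1}(k;\,x_{1}) = \cos\left(\frac{\varphi(k)}{2}+kx_{1}\right), \qquad
f_{2}(k;\,x_{2}) = \frac{\cos(\gamma(k))-1}{\sin(\gamma(k))}\cos\left(\frac{\varphi(k)-\pi}{2}+kx_{2}\right),
\]
so the nodal points on lead $j$ are exactly the points where the cosine vanishes. First I would write out the zero sets: on lead~$1$ the nodal points sit at $x_{1}\in -\frac{\varphi(k)}{2k}+\frac{\pi}{2k}+\frac{\pi}{k}\mathbb{Z}$, and on lead~$2$ at $x_{2}\in -\frac{\varphi(k)-\pi}{2k}+\frac{\pi}{2k}+\frac{\pi}{k}\mathbb{Z}$, both intersected with $[0,\infty)$.

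Next I would differentiate the position of a fixed nodal point with respect to $k$, exactly as in \eqref{eq:nodal_dynamics_1_lead}. Writing a nodal point on lead~$1$ as $x_{1}=-\frac{\varphi(k)}{2k}+\frac{\pi}{2k}+\frac{\pi}{k}n$ and differentiating, the same algebra that produced \eqref{eq:nodal_dynamics_1_lead} gives
\[
\frac{\mathrm{d}x_{1}}{\mathrm{d}k} = -\frac{1}{k}\left(\frac{\varphi'(k)}{2}+x_{1}\right),
\]
and identically on lead~$2$ one obtains $\frac{\mathrm{d}x_{2}}{\mathrm{d}k}=-\frac{1}{k}\left(\frac{\varphi'(k)}{2}+x_{2}\right)$, the shift by $\pi$ in the phase dropping out upon differentiation. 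Since $k>0$ and $x_{j}\geq0$ on the leads, both derivatives are negative as soon as $\varphi'(k)\geq0$; in fact they are strictly negative whenever $\varphi'(k)>0$. Thus every nodal point on either lead moves toward the cycle (toward $x_{j}=0$) as $k$ increases, which is exactly the claim of lemma~\ref{lem:np_move_towards_cycle}.

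The one genuine input needed is the sign of $\varphi'(k)$, and this is the step I would flag as the crux. Here $\varphi(k)$ is the common phase appearing in the diagonal entries of $S(k)$ in \eqref{eq:scattering_matrix}, not directly the argument of a one-dimensional scattering phase, so the monotonicity is not quite the verbatim conclusion of lemma~\ref{lem:counterclockwise}. The clean way to secure it is to diagonalize the symmetric scattering matrix: since $S(k)$ commutes with $\sigma$, its eigenvalues are $e^{i(\varphi\pm\gamma)}$ (the eigenvectors being the symmetric and antisymmetric combinations of the two leads), and lemma~\ref{lem:counterclockwise} guarantees that each eigenvalue of $S(k)$ moves counterclockwise, i.e.\ $(\varphi+\gamma)'>0$ and $(\varphi-\gamma)'>0$. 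Adding these two inequalities yields $2\varphi'(k)>0$, hence $\varphi'(k)>0$. I would therefore insert a short lemma-or-remark establishing $\varphi'>0$ from lemma~\ref{lem:counterclockwise} before deploying the differentiation above; with that in hand the derivative computation is routine and the conclusion is immediate.
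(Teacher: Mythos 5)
Your proposal is correct and follows essentially the same route as the paper: the explicit contra-phasal values on the leads, the zero sets, the differentiation yielding $\frac{\mathrm{d}x_j}{\mathrm{d}k}=-\frac{1}{k}\left(\frac{\varphi'(k)}{2}+x_j\right)$, and the reduction of the crux $\varphi'(k)>0$ to lemma~\ref{lem:counterclockwise}. The only (cosmetic) difference is that you obtain $\varphi'>0$ by explicitly diagonalizing $S(k)$ in the symmetry-adapted basis to get the eigenphases $\varphi\pm\gamma$, whereas the paper uses $\det S(k)=e^{2i\varphi(k)}$ and the sum of the eigenphases --- the same argument in slightly different clothing, and your sign bookkeeping (the derivative is negative, so the points move toward $x_j=0$) is in fact stated more carefully than in the paper's text.
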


\begin{proof}

While proving proposition \ref{prop:anti_phase_sol} we have showed
that one of the real contra-phasal solutions has the following values
on the leads

\begin{eqnarray}
f_{1}(k;\, x_{1}) & = & \cos\left(\frac{\varphi\left(k\right)}{2}+kx_{1}\right)\nonumber \\
f_{2}(k;\, x_{2}) & = & \begin{cases}
\frac{\cos\left(\gamma\left(k\right)\right)-1}{\sin\left(\gamma\left(k\right)\right)}\cos\left(\frac{\varphi\left(k\right)-\pi}{2}+kx_{2}\right) & \gamma\left(k\right)\notin2\pi\mathbb{Z}\\
0 & \gamma\left(k\right)\in2\pi\mathbb{Z}\end{cases}.\label{eq:real_contra_phasal_on_leads}\end{eqnarray}

The positions of its nodal points on the leads are therefore given
by \begin{eqnarray}
D_{1}\left(k\right) & = & \left\{ x_{1}\ge0\,\left|\, x_{1}\in-\frac{\varphi\left(k\right)}{2k}-\frac{\pi}{2k}+\frac{\pi}{k}\mathbb{Z}\right.\right\} \nonumber \\
D_{2}\left(k\right) & = & \left\{ x_{2}\ge0\,\left|\, x_{2}\in-\frac{\varphi\left(k\right)}{2k}+\frac{\pi}{k}\mathbb{Z}\right.\right\} .\label{eq:positions_of_nodal_points}\end{eqnarray}

Let $x\left(k\right)=-\frac{\varphi\left(k\right)}{2k}+\frac{\pi}{2k}+\frac{\pi}{k}n_{x}$
be the position of a certain nodal point on the first lead at the value
$k$, i.e., $x\left(k\right)\in D_{1}\left(k\right)$. The direction
in which this nodal point travel on the first lead is given by\begin{eqnarray}
x'\left(k\right)=\frac{\textrm{d}}{\textrm{dk}}\left(-\frac{\varphi\left(k\right)}{2k}+\frac{\pi}{2k}+\frac{\pi}{k}n_{x}\right) & = & -\frac{k\varphi'(k)-\varphi(k)+\pi+2\pi n_{x}}{2k^{2}}\nonumber \\
 & = & -\frac{1}{k}\left(\frac{\varphi'(k)}{2}+x\right).\label{eq:nodal_dyn_two_leads}\end{eqnarray}
 A simple calculation based on \eqref{eq:scattering_matrix} gives
\[
\det S\left(k\right)=\exp\left(i2\varphi\left(k\right)\right).\]
 Denoting the eigenvalues of $S\left(k\right)$ by $\exp\left(i\varphi_{1}\left(k\right)\right),\exp\left(i\varphi_{2}\left(k\right)\right)$,
we have that $\varphi\left(k\right)=\varphi_{1}\left(k\right)+\varphi_{2}\left(k\right)$
and can therefore conclude from lemma \ref{lem:counterclockwise}
that $\varphi'\left(k\right)>0$. Plugging this in \eqref{eq:nodal_dyn_two_leads}
together with $x\ge0$ shows that $x'\left(k\right)>0$. We thus get
that all nodal points on the first lead move towards the cycle, as
$k$ increases. A similar derivation leads to the same conclusion
for the nodal points on the second lead. The second real contra-phasal
solution is a reflection of the one mentioned above and therefore
its nodal points obviously also move towards the cycle.

\end{proof}

\begin{lem}\label{lem:np_at_vertex}

Let $k$ be a value at which a nodal point is positioned on a vertex
of $\tilde{\Gamma}$. The following scenarios exist for the dynamics
of the mentioned nodal point.
\begin{enumerate}
\item The nodal point had arrived to the vertex from a lead. Then, upon
entering the cycle the nodal point will either split into two nodal
points or merge with another nodal point arriving from the cycle.
The set of $k$ values at which these events happen is $\left\{ k_{p}^{*}\right\} _{p=1}^{\infty}:=\left\{ \frac{\pi}{2b+2c}p\right\} $.
The split events happen at $k\in\left\{ k_{1}^{*},k_{3}^{*},k_{5}^{*},...\right\} $
and the merge events at $k\in\left\{ k_{2}^{*},k_{4}^{*},k_{6}^{*},...\right\} $.
\item No nodal point arrives to the vertex from the lead during this event.
The nodal point had therefore arrived to the vertex from the cycle.
It will just flow to the other edge of the cycle. These events happen
at $k$ values for which $\gamma\left(k\right)\in\pi\mathbb{Z}$.
\end{enumerate}
\end{lem}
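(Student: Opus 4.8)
The plan is to make the phases $\varphi(k)$ and $\gamma(k)$ of \eqref{eq:scattering_matrix} completely explicit and then read off the nodal dynamics at $v_1,v_2$ from the local form of $f(k;\cdot)$ on the two arcs. Since $S(k)$ commutes with $\sigma$, I would first diagonalize it in the reflection-symmetric and antisymmetric sectors, with eigenphases $\varphi_{1,2}=\varphi\pm\gamma$ belonging to the eigenvectors $(1,\pm1)^{T}$. Each sector reduces to a single-lead scattering problem on the graph obtained by folding $\tilde\Gamma$ along its symmetry axis: a vertex carrying the lead, joined by edges of lengths $b$ and $c$ to the two arc-midpoints, which carry Neumann (symmetric sector) or Dirichlet (antisymmetric sector) conditions at the fold. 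Solving these two elementary problems gives $e^{i\varphi_1}=(i-T)/(i+T)$ with $T=\tan kb+\tan kc$, and $e^{i\varphi_2}=(i+T')/(i-T')$ with $T'=\cot kb+\cot kc$. From these I would record two identities: $e^{2i\varphi}=\det S=1 \iff T=T' \iff \sin\!\big(2k(b+c)\big)=0 \iff k\in\{k_p^*\}$, and $\gamma\in\pi\mathbb{Z}\iff e^{i\varphi_1}=e^{i\varphi_2}\iff T+T'=0\iff \sin\!\big(k(b+c)\big)\cos\!\big(k(b-c)\big)=0$.

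For Part 1, the explicit lead values \eqref{eq:real_contra_phasal_on_leads} give $f(v_1)=\cos(\varphi/2)$, which vanishes exactly when $\varphi\equiv\pi$, hence by the first identity only at the $k_p^*$; since the lead derivative $-k\sin(\varphi/2)$ is then nonzero, these are genuine entrance events. To separate split from merge I would use that, at such an event, $f$ vanishes at $v_1$, so on the two arcs $f=(f'_b/k)\sin(ks)$ and $f=(f'_c/k)\sin(k\tau)$ with $s,\tau$ measured from $v_1$; matching the two values they produce at $v_2$ yields $f'_b\sin 2kb=f'_c\sin 2kc$, and $2k(b+c)=\pi p$ turns this into $f'_b=(-1)^{p+1}f'_c$. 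Combined with the Kirchhoff identity $f'_b+f'_c=-f'_{\mathrm{lead}}\neq0$, odd $p$ forces $f'_b=f'_c$ of equal sign, so as $\cos(\varphi/2)$ changes sign a zero appears on \emph{both} arcs simultaneously (a split); even $p$ forces $f'_b=-f'_c$, which is inconsistent with an entrance at $v_1$ (indeed $f(v_1)=\cos(\varphi/2)\neq0$ there, as $\varphi\equiv0$), so the event must sit at $v_2$, where the arc derivatives have opposite sign and one zero arrives along the cycle while the other leaves (a merge). The reflection exchanging $v_1\leftrightarrow v_2$ and the two contra-phasal solutions is what transports the $v_1$ computation to $v_2$, and Lemma~\ref{lem:np_move_towards_cycle} ($x'(k)>0$) fixes the orientation distinguishing the incoming from the outgoing arc.

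For Part 2, I would argue that $\gamma\in\pi\mathbb{Z}$ makes $\sin\gamma=0$ and hence $S=\pm e^{i\varphi}\Id$ diagonal; a diagonal $S$ forces one lead amplitude to satisfy $c^{\,in}=c^{\,out}=0$, so $f$ vanishes identically on that lead. Then $f$ vanishes at the corresponding vertex with zero derivative into the lead, and the Kirchhoff condition there collapses to the two-edge cycle condition (exactly as in the proof of Proposition~\ref{pro:inside-outside}): the lead is invisible and a nodal point simply passes along the cycle through the effectively degree-two vertex. This is the pure flow-through, with no accompanying entrance because the lead carries no zero; the two cases $\gamma\in2\pi\mathbb{Z}$ and $\gamma\in\pi+2\pi\mathbb{Z}$ (realized generically at $\cos(k(b-c))=0$) place the crossing at $v_2$ and $v_1$ respectively, matching the vanishing/blow-up of the prefactor $(\cos\gamma-1)/\sin\gamma$ in \eqref{eq:real_contra_phasal_on_leads}.

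The main obstacle is the overlap at even $k_p^*$: there $\sin(k(b+c))=0$ places $k$ simultaneously in $\{k_p^*\}$ (Part 1) and in $\{\gamma\in\pi\mathbb{Z}\}$ (Part 2), and the normalization of \eqref{eq:real_contra_phasal_on_leads}, with its $(\cos\gamma-1)/\sin\gamma$ prefactor, degenerates precisely at these points. The delicate work is to take the correct $k\to k_p^*$ limit (or pass to the reflected, lead-$2$–normalized solution) and confirm that these coincidences are genuinely merges --- the incoming lead zero and a cycle zero annihilating and re-emerging on the far arc --- rather than splits, so that Part 1 and Part 2 are not describing the same event twice. A final consistency check is that the split/merge alternation with the parity of $p$ keeps the number of zeros on the cycle even throughout, as it must be.
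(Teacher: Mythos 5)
Your proposal is essentially correct but follows a genuinely different route from the paper. The paper never computes the eigenphases of $S(k)$: it uses Property~\ref{property_anti_phase} to observe that at an entrance event the lead derivative vanishes at the \emph{other} vertex, so the restriction of $f$ to the cycle is an eigenfunction of a single interval of length $2b+2c$ with Dirichlet ends, which yields $\{k_p^*\}=\{\frac{\pi}{2b+2c}p\}$ immediately; the split/merge dichotomy is then settled by a sign-continuity argument --- for even $p$ the interval eigenfunction $\sin\bigl(\frac{p\pi x}{2b+2c}\bigr)$ has opposite signs near its two endpoints, so the two cycle nodal domains adjacent to the vertex carry opposite signs at the event, whereas just before it the vertex neighborhood lay in a single nodal domain, forcing a merge (odd $p$ similarly forces a split). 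Your folded-sector computation ($e^{i\varphi_1}=(i-T)/(i+T)$, $e^{i\varphi_2}=(i+T')/(i-T')$, which I checked) buys explicit secular identities: $\det S=1\iff\sin\bigl(2k(b+c)\bigr)=0$ recovers $\{k_p^*\}$, and the factorization $\sin\bigl(k(b+c)\bigr)\cos\bigl(k(b-c)\bigr)=0$ for $\gamma\in\pi\mathbb{Z}$ makes visible the overlap of the even $k_p^*$ with the pass-through set --- a coincidence the paper's proof is silent about. Your Kirchhoff bookkeeping ($f'_b=(-1)^{p+1}f'_c$ from continuity at the far vertex, against $f'_b+f'_c=-f'_{\mathrm{lead}}$) reaches the same parity dichotomy as the paper's sign argument and, as a bonus, pins down which vertex hosts which parity of event for the lead-$1$--normalized solution, a refinement the paper leaves implicit in the reflection symmetry.

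The one step you leave unfinished is the one you honestly flag: certifying that the even-$p$ coincidences are merges despite the degenerating amplitude. Two observations close it. First, the lead zero positions $D_2(k)$ in \eqref{eq:positions_of_nodal_points} depend only on the phase $\varphi(k)$, not on the prefactor $\frac{\cos\gamma-1}{\sin\gamma}$ of \eqref{eq:real_contra_phasal_on_leads}, so the zero trajectories (monotone by Lemma~\ref{lem:np_move_towards_cycle}) pass continuously through the amplitude-vanishing moment, and your transversality $f'_b=-f'_c\neq 0$ at the vertex gives, for nearby $k$, a simple cycle zero crossing it; assembling these yields lead zero in, cycle zero in, one zero out --- a merge. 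Second, the paper's sign-continuity argument bypasses the amplitude degeneracy entirely, which is why its (otherwise less explicit) proof needs no limiting analysis. A small caveat on your Part~2: the assertion that ``a diagonal $S$ forces one lead amplitude to satisfy $c^{\,in}=c^{\,out}=0$'' is not a property of diagonal unitaries per se, but of the contra-phasal normalization \eqref{eq:sol_relation_2}; moreover at $\gamma\in\pi+2\pi\mathbb{Z}$ the given normalization blows up rather than vanishes, so one must renormalize before concluding that lead~$1$ carries the identically vanishing component --- your parenthetical placement of the two cases is correct, but the implication should be routed through \eqref{eq:sol_relation_2} rather than through diagonality of $S$.
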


\begin{proof}

When a nodal point enters the cycle from one of the leads, say the
first one, $f_{1}(k_{p}^{*};\,0)=0$, and we have from property \ref{prop:anti_phase_sol}
that on the second lead $\frac{\partial}{\partial x_{2}}f_{2}(k_{p}^{*};\,0)=0$.
We therefore have that the restriction of $f$ to the cycle during
such an event is equal to an eigenfunction of a single edge of length
$2b+2c$ with Dirichlet vertex conditions at its endpoints.
This implies that the entrance events occur at $k_{p}^{\ast}=\frac{\pi}{2b+2c}p$.
These events are of two types (explanation follows):
\begin{enumerate}
\item At $k_{2m-1}^{\ast}=\frac{\pi}{2b+2c}\left(2m-1\right)$ the entering
nodal point splits into two new nodal points which continue to move
in the cycle. Hence the total number of nodal points increases by
one.
\item At $k_{2m}^{\ast}=\frac{\pi}{2b+2c}2m$ the entering nodal point merges
with another nodal point coming towards it from the cycle. Hence the
total number of nodal points decreases by one (see figure \ref{fig:merge_event}).
\end{enumerate}
During an entrance event, $k=k_{p}^{*}$, the nodal point is positioned
on a vertex of $\tilde{\Gamma}$ and about to enter the cycle. We
observe that the number of nodal points on the cycle must be even.
This implies that at the entrance event the mentioned nodal point
either merges with another nodal point form the cycle (so that the
number of nodal points on the cycle remains unchanged), or splits
into two nodal points (which increases this number by two). The occurrence
of a split or a merge event is determined by the values of $f(k_{p}^{*};\,\cdot)$
restricted on the cycle. As mentioned before, this restriction is
an eigenfunction on the edge of length $2b+2c$ and it is therefore
equals $\sin\left(\frac{p\pi}{2b+2c}x\right)$ up to a multiplicative
scalar. For an even value of $p$, this function has opposite signs
in the vicinity of the endpoints of the edge. This means that when
the nodal point is located exactly on the vertex of $\tilde{\Gamma}$,
the two nodal domains of $f(k_{p}^{*};\,\cdot)$ on the cycle which
are bounded by this nodal point have opposite signs. (see figure \ref{fig:merge_event}-during).

\begin{figure}[h]
 before \includegraphics[scale=0.7]{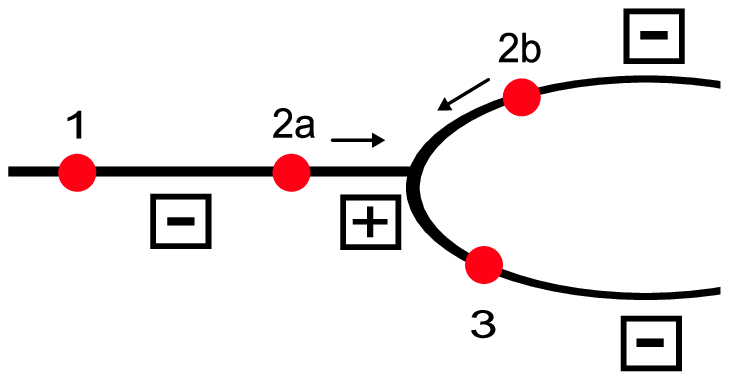}

\vspace{0.4in}

during \includegraphics[scale=0.7]{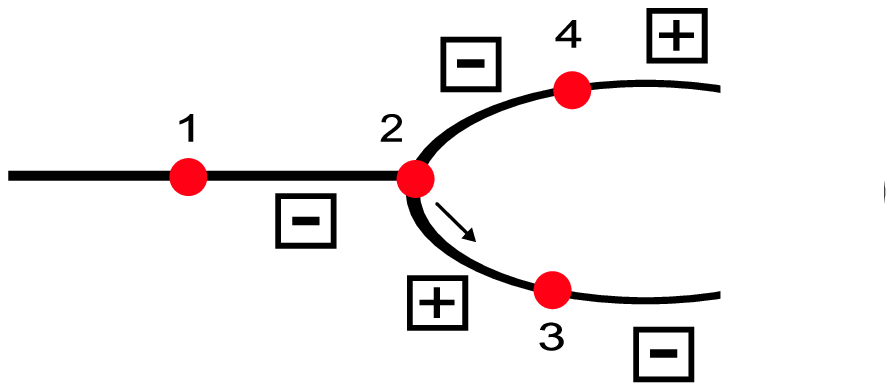}

\vspace{0.4in}

after \includegraphics[scale=0.7]{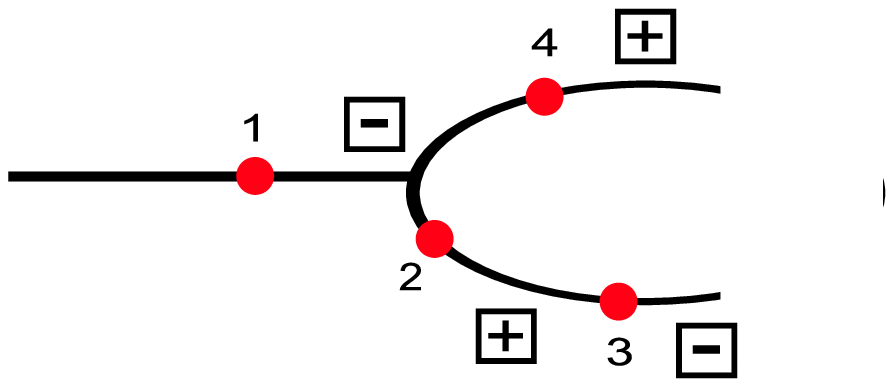}

\caption{A description of a merge event. The nodal points (2a) and (2b) merge
and become the nodal point (2). The signs of the nodal domains of
$f^{(k)}$ are marked with squares.\label{fig:merge_event}}

\end{figure}

However, a short while before this event, the neighborhood of this
vertex was contained in a single nodal domain with a definite sign.
The $k$-continuity of the solution implies that this is possible
only if a short while before the event there was another nodal point
in the vicinity of the vertex that has disappeared while merging with
the nodal point at the vertex (see figure \ref{fig:merge_event}-before)
. A similar reasoning shows that split events occur for odd $p$ values.

We have treated by now the possibility that the nodal point at the
vertex had arrived from the lead. It might also happen that $f(k;\,\cdot)$
equals zero at a vertex of $\tilde{\Gamma}$ when $f(k;\,\cdot)$
vanishes on the lead which is connected to that vertex. For the real
contra-phasal solution given in \eqref{eq:real_contra_phasal_on_leads}
this happens exactly at $\gamma\left(k\right)=\pi\mathbb{Z}$. This
event would happen only on vertex number two for that solution (and
on vertex number one for the reflected solution). These events do
not change the number of nodal points on the graph, and therefore
we do not need to keep track of them.

\end{proof}

\begin{lem}

Let $a,b,c$ be positive real numbers such that $\frac{b}{c}\notin\mathbb{Q}$
and $\frac{a}{b+c}\notin\mathbb{Q}$ and $\Gamma$, $\tilde{\Gamma}$
be the graphs described above. The number of nodal points on $\Gamma$
of a real contra-phasal solution on $\tilde{\Gamma}$ is increased
by one at $k$ such that $k^{2}\in\sigma\left(\Gamma\right)$.

\end{lem}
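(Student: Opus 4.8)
The plan is to follow the strategy of the single--lead analysis of Section~\ref{sec:single_lead_dyn}: I treat $k$ as a continuous parameter, track the integer $M(k)$ equal to the number of zeros of the real contra--phasal solution $f(k;\,\cdot)$ that lie inside the subgraph $\Gamma\subset\tilde{\Gamma}$, and show that this integer jumps upward by exactly one each time $k$ passes a value with $k^{2}\in\sigma(\Gamma)$. The input that ties $M(k)$ to the spectrum is part~\ref{enu:prop_part3} of Proposition~\ref{prop:anti_phase_sol}: at a spectral value the restriction $f(k;\,\cdot)|_{\Gamma}$ is (up to reflection) an eigenfunction of $\Gamma$, so it must obey the Dirichlet condition at the Dirichlet vertex of $\Gamma$. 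This forces a zero of $f(k;\,\cdot)$ to sit precisely at that vertex, which is the point at which the compact graph $\Gamma$ meets the remaining semi--infinite part of the lead.

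First I would localise the relevant nodal point and determine the direction of its motion. Using the explicit description of $f$ on the leads from the proof of Lemma~\ref{lem:np_move_towards_cycle} together with the anti--phase Property~\ref{property_anti_phase}, the Dirichlet requirement on one lead and the complementary Neumann requirement on the reflected lead are satisfied by a single scalar condition on $k$, which is exactly the condition $k^{2}\in\sigma(\Gamma)$. Lemma~\ref{lem:np_move_towards_cycle} then says that every nodal point on the leads moves strictly towards the cycle as $k$ increases; hence the zero that sits on the Dirichlet boundary vertex at a spectral value lies outside $\Gamma$ for slightly smaller $k$ and inside $\Gamma$ for slightly larger $k$. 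This single nodal point therefore crosses into $\Gamma$ in the inward direction and contributes $+1$ to $M$. Because the motion is strictly monotone, no nodal point can simultaneously leave $\Gamma$, so there is no compensating decrement from this channel.

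It remains to rule out every other event that could change $M(k)$ at a spectral value. By Lemma~\ref{lem:np_at_vertex} the only such events are the split and merge events at the cycle vertices, which occur on the arithmetic progression $\{k_{p}^{*}\}=\{\tfrac{\pi}{2b+2c}\,p\}$, together with an inward crossing through the complementary (Neumann) lead boundary; the ``pass--through'' events at $\gamma(k)\in\pi\mathbb{Z}$ may be discarded at once since they leave $M$ unchanged. The Neumann--boundary crossing is governed by a phase condition that differs by $\pi$ from the one characterising $\sigma(\Gamma)$ and so can never occur at a spectral $k$. Excluding a coincidence with a split or merge is where the arithmetic hypotheses are used: a solution of $k^{2}\in\sigma(\Gamma)$ equal to some $k_{p}^{*}$ would, upon evaluating the scattering phase $\varphi$ at $k_{p}^{*}$, force a rational relation between $a$ and $b+c$, contradicting $\tfrac{a}{b+c}\notin\mathbb{Q}$; the twin hypothesis $\tfrac{b}{c}\notin\mathbb{Q}$ has already guaranteed, through Lemma~\ref{lem:Delta_is_empty}, that $f(k;\,\cdot)$ is well defined in the first place. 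Combining the three steps, the net change of $M$ at each $k$ with $k^{2}\in\sigma(\Gamma)$ is precisely $+1$.

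I expect the genuine difficulty to lie entirely in this last non--coincidence argument. Identifying the entering nodal point and the sign of its motion is an immediate consequence of Proposition~\ref{prop:anti_phase_sol} and Lemma~\ref{lem:np_move_towards_cycle}; by contrast, showing that the transcendental spectral condition for $\Gamma$ avoids the explicit progression $\{k_{p}^{*}\}$ of split/merge events requires pinning down the value of $\varphi$ at the resonant points $k_{p}^{*}$ and extracting from a putative coincidence a rational relation that the irrationality assumptions forbid.
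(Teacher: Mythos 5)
Your proposal follows essentially the same route as the paper's own proof: a nodal point sits at the boundary point $x=a$ of $\Gamma$ on one of the leads exactly when $k^{2}\in\sigma(\Gamma)$, Lemma~\ref{lem:np_move_towards_cycle} forces it to cross into $\Gamma$ (the $+1$), simultaneous crossings at both boundary points are impossible since $D_1$ and $D_2$ are offset by half a period, and the remaining task is, as you say, to show $\left\{ k_{p}^{*}\right\}\cap\sigma(\Gamma)=\emptyset$ using $\frac{a}{b+c}\notin\mathbb{Q}$. The computation you defer is carried out in the paper exactly along the lines you indicate and is quite short: at $k_{p}^{*}$ the solution on the relevant lead is a pure cosine with a zero at $x=0$, so the spectral condition (a zero or a Neumann point at $x=a$ on that lead) forces $k_{p}^{*}a\in\pi\mathbb{Z}$ or $k_{p}^{*}a\in\frac{\pi}{2}+\pi\mathbb{Z}$, and substituting $k_{p}^{*}=\frac{\pi p}{2b+2c}$ yields $\frac{a}{b+c}\in\mathbb{Q}$ in either case, a contradiction.
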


\begin{proof}

When $k$ equals an eigenvalue of $\Gamma$, the solution restricted
on $\Gamma$ equals an eigenfunction of $\Gamma$, i.e., either $f_{1}(k;\, a)=0$
or $f_{2}(k;\, a)=0$. A nodal point is therefore positioned on the
boundary of $\Gamma$, and from lemma \ref{lem:np_move_towards_cycle}
we deduce that this nodal point moves towards the cycle, increasing
by one the number of nodal points on $\Gamma$. It is only left to
verify that there is no simultaneous split or merge events which further
change the total number of nodal points. Namely, we show that $\left\{ k_{p}^{*}\right\} $
and $\sigma\left(\Gamma\right)$ are disjoint sets. Assume the contrary:
$k_{p}^{*}\in\sigma\left(\Gamma\right)$ for some $p$. By definition,
$f_{j}(k_{p}^{*};\,0)=0$ for $j\in\left\{ 1,2\right\} $. Assume
without loss of generality that $f_{1}(k_{p}^{*};\,0)=0$. Then, since
$k_{p}^{*}\in\sigma\left(\Gamma\right)$ we also have that either
$f_{1}(k_{p}^{*};\, a)=0$ or $f_{1}'(k_{p}^{*};\, a)=0$. If $f_{1}(k_{p}^{*};\, a)=0$
then $k_{p}^{*}\in\frac{\pi}{a}\mathbb{Z}$ and applying lemma \ref{lem:np_at_vertex}
gives $\frac{\pi}{2b+2c}\in\frac{\pi}{a}\mathbb{Z}$, which contradicts
the incommensurability assumption. Otherwise, if $f_{1}'(k_{p}^{*};\, a)=0$,
we similarly obtain $\frac{\pi}{2b+2c}\in\frac{\pi}{2a}+\frac{\pi}{a}\mathbb{Z}$,
and again get a contradiction.

\end{proof}

\begin{lem}\label{lem:d_p_formula}

Let the set $\left\{ k_{p}^{*}\right\} $, as defined in lemma \ref{lem:np_at_vertex},
be the set of $k$ values at which merge and split events occur, and
let $k_{0}^{*}=0$. Denote $d_{p}:=\left|\left\{ k^{2}\in\sigma\left(\Gamma\right)\left|k_{p-1}^{\ast}<k<k_{p}^{\ast}\right.\right\} \right|$
, the number of eigenvalues of $\Gamma$ that occurred between two
consequent merge/split events. Then \begin{equation}
d_{p}=\left\lfloor \frac{a}{b+c}p\right\rfloor -\left\lfloor \frac{a}{b+c}\left(p-1\right)\right\rfloor +1.\label{eq:d_p_explicit_formula}\end{equation}

\end{lem}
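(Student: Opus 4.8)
The plan is to reduce the count $d_p$ to a difference of two floor functions built from the monotone phase governing $\sigma(\Gamma)$, and then to evaluate that phase at the endpoints $k_p^*$. By \eqref{eq:real_contra_phasal_on_leads}, a real contra-phasal solution satisfies one of the end conditions of $\Gamma$ at the marked distance $a$ exactly when $f_1(k;a)=0$ or $f_2(k;a)=0$, i.e. when
\[
  \Theta(k):=\varphi(k)+2ka\in\pi\mathbb{Z};
\]
the preceding lemma gives that every eigenvalue of $\Gamma$ is of this form, and conversely each such $k$ makes the restriction of $f(k;\cdot)$ (or of its reflection) an eigenfunction of $\Gamma$, so $\sigma(\Gamma)=\{k:\Theta(k)\in\pi\mathbb{Z}\}$. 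Since $\varphi$ is proportional to the sum $\varphi_1+\varphi_2$ of the two eigenphases of $S(k)$, it increases with $k$ by Lemma \ref{lem:counterclockwise}, and hence $\Theta$ is strictly increasing. Because the preceding lemma also shows $k_p^*\notin\sigma(\Gamma)$, the number of points of $\pi\mathbb{Z}$ met by $\Theta$ on $(k_{p-1}^*,k_p^*)$ is
\[
  d_p=\left\lfloor\frac{\Theta(k_p^*)}{\pi}\right\rfloor-\left\lfloor\frac{\Theta(k_{p-1}^*)}{\pi}\right\rfloor .
\]

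So the whole problem is to evaluate $\Theta(k_p^*)=\varphi(k_p^*)+2k_p^*a$. The lead term is immediate, $2k_p^*a=\tfrac{\pi a}{b+c}\,p$. For the phase, computing the $2\times2$ matrix $S(k)$ of the loop-with-two-leads from \eqref{eq:full_graph_scatter0}, \eqref{eq:full_graph_scatter} and \eqref{eq:S-definition} yields the closed form
\[
  \varphi(k)=\arctan\!\big(\tan kb+\tan kc\big)-\arctan\!\big(\cot kb+\cot kc\big),
\]
read on a continuous branch. I would track the branch by noting that between consecutive poles the argument $T=\tan kb+\tan kc$ is increasing while $T'=\cot kb+\cot kc$ is decreasing, so the first $\arctan$ gains $+\pi$ at each pole of $T$ and the second loses $-\pi$ at each pole of $T'$. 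With $n_1,n_2$ the numbers of poles of $T,T'$ in $(0,k_p^*)$ this gives $\varphi(k_p^*)=(\arctan T-\arctan T')_{\mathrm{princ}}+\pi(n_1+n_2)$. The decisive point is that $k_p^*(b+c)=\tfrac{\pi}{2}p$ forces $T(k_p^*)=T'(k_p^*)$ (because $\tan(k_p^*c)$ equals $-\tan(k_p^*b)$ for even $p$ and $\cot(k_p^*b)$ for odd $p$), so the principal bracket vanishes and $\varphi(k_p^*)=\pi(n_1+n_2)$.

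It remains to count poles. The poles of $\tan kb$ and $\cot kb$ together are the points with $\tfrac{2kb}{\pi}\in\mathbb{N}$, of which there are $\lfloor\tfrac{pb}{b+c}\rfloor$ below $k_p^*$, and likewise $\lfloor\tfrac{pc}{b+c}\rfloor$ from the $c$-edge; hence $n_1+n_2=\lfloor\tfrac{pb}{b+c}\rfloor+\lfloor\tfrac{pc}{b+c}\rfloor$. Since $\tfrac{b}{b+c}\notin\mathbb{Q}$, the identity $\lfloor x\rfloor+\lfloor p-x\rfloor=p-1$ for non-integer $x$ gives $n_1+n_2=p-1$, so $\varphi(k_p^*)=(p-1)\pi$. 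Therefore $\tfrac{\Theta(k_p^*)}{\pi}=(p-1)+\tfrac{a}{b+c}p$ and $\big\lfloor\tfrac{\Theta(k_p^*)}{\pi}\big\rfloor=(p-1)+\big\lfloor\tfrac{a}{b+c}p\big\rfloor$; substituting into the formula for $d_p$ telescopes to \eqref{eq:d_p_explicit_formula}.

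The main obstacle is the branch bookkeeping for $\varphi$ and the cancellation at $k_p^*$: one must verify that the four trigonometric terms have only simple, mutually distinct poles and that $k_p^*\notin\sigma(\Gamma)$, which is precisely where the hypotheses $\tfrac bc\notin\mathbb{Q}$ and $\tfrac{a}{b+c}\notin\mathbb{Q}$ enter. Everything else --- the monotonicity of $\Theta$, the evaluation $2k_p^*a=\tfrac{\pi a}{b+c}p$, and the floor identity --- is routine once the phase has been pinned down.
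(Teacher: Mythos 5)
Your proposal is correct, and I verified its decisive computational claims independently: diagonalizing the symmetric two-lead scattering matrix on the symmetric/antisymmetric subspaces gives eigenphases $2\arctan(\tan kb+\tan kc)$ and $-2\arctan(\cot kb+\cot kc)$, so your closed form for $\varphi$ holds; the poles of the four trigonometric terms are pairwise distinct precisely because $\frac bc\notin\mathbb{Q}$, the pole count is $n_1+n_2=\lfloor\frac{pb}{b+c}\rfloor+\lfloor\frac{pc}{b+c}\rfloor=p-1$, and the principal bracket does vanish at $k_p^*$. However, your route differs from the paper's. The paper never computes $S(k)$ explicitly: it tracks the finite set $Q(k)=(0,a]\cap\{D_{1}(k)\cup D_{2}(k)\}$ of nodal points sitting on the leads within distance $a$, uses lemma \ref{lem:np_move_towards_cycle} to see that these points flow monotonically toward the cycle, so that $|Q|$ jumps by $+1$ at each eigenvalue (a point crosses $x=a$, by \eqref{eq:spectrum_char}) and by $-1$ at each $k_p^*$ (a point exits at $x=0$, by \eqref{eq:split-merge_char}), whence $d_p=|Q(k_p^*)|-|Q(k_{p-1}^*)|+1$; the evaluation is then immediate because at an entrance event $D_1\cup D_2=\frac{\pi}{2k_p^*}\mathbb{Z}$, giving $|Q(k_p^*)|=\lfloor\frac{2k_p^*a}{\pi}\rfloor=\lfloor\frac{a}{b+c}p\rfloor$. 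Your phase-winding formulation $\sigma(\Gamma)=\{k:\Theta(k)\in\pi\mathbb{Z}\}$ with $\Theta=\varphi+2ka$ is the analytic counterpart of the same mechanism, but you pay for it with the explicit trigonometry and branch bookkeeping. That labor is avoidable: $0\in D_1(k_p^*)\cup D_2(k_p^*)$ already forces $\varphi(k_p^*)\in\pi\mathbb{Z}$ by \eqref{eq:positions_of_nodal_points}, and strict monotonicity of $\varphi$ (lemma \ref{lem:counterclockwise}) gives $\varphi(k_p^*)=\varphi(k_1^*)+(p-1)\pi$, with the additive constant cancelling in your telescoping difference since any two continuous branches of $\varphi$ differ by a multiple of $\pi$. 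What your computation buys in exchange is a self-contained verification, independent of lemma \ref{lem:np_at_vertex}, that the entrance events sit exactly at $k_p^*=\frac{\pi p}{2(b+c)}$ and that $\varphi$ gains exactly $\pi$ between consecutive ones.

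One small point to patch: your telescoping at $p=1$ uses the endpoint $k_0^*=0$, where the formula $\varphi(k_p^*)=(p-1)\pi$ does not literally apply. You need the separate evaluation $\varphi(0^+)=-\pi/2$, which your branch does supply since $T(0^+)=0$ while $T'(0^+)=+\infty$; then $\lfloor\Theta(k_0^*)/\pi\rfloor=-1$ and $d_1=\lfloor\frac{a}{b+c}\rfloor+1$, matching \eqref{eq:d_p_explicit_formula} at $p=1$. This is a boundary check, not a flaw in the method, but it should be stated.
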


\begin{proof}

The following two observations concern the set $D_{1}\left(k\right)\cup D_{2}\left(k\right)$,
which gives the positions of the nodal points on the leads.

The spectrum of $\Gamma$ may be characterized as \begin{equation}
\sigma\left(\Gamma\right)=\left\{ k^{2}\,\left|\, a\in D_{1}\left(k\right)\cup D{}_{2}\left(k\right)\right.\right\} .\label{eq:spectrum_char}\end{equation}

The merge/split events happen at \begin{equation}
\left\{ k_{p}^{*}\right\} =\left\{ k\,\left|\,0\in D_{1}\left(k\right)\cup D_{2}\left(k\right)\right.\right\} .\label{eq:split-merge_char}\end{equation}

We denote $Q\left(k\right):=\left(\left.0,a\right.\right]\cap\left\{ D_{1}\left(k\right)\cup D_{2}\left(k\right)\right\} $
and describe how it changes with $k$. Lemma \ref{lem:np_move_towards_cycle}
implies that the values of $Q\left(k\right)$ continuously decrease
with $k$. In addition, the first observation gives that $\left|Q\left(k\right)\right|$
increases at $k\in\sigma\left(\Gamma\right)$, when a nodal point
enters $\Gamma$. The second observation shows that $\left|Q\left(k\right)\right|$
decreases by one at $k\in\left\{ k_{p}^{*}\right\} $ , when a nodal
point enters the cycle. It is therefore evident that during the interval
$\left(\left.k_{p-1}^{*},k_{p}^{*}\right.\right],$ $\left|Q\left(k\right)\right|$
decreased a single time (at $k_{p}^{*}$), and the number of times
it increased is given by $d_{p}$, the number of eigenvalues in this
interval. We conclude that \begin{equation}
d_{p}=\left|Q\left(k_{p}^{*}\right)\right|-\left|Q\left(k_{p-1}^{*}\right)\right|+1.\label{eq:d_p_implicit_formula}\end{equation}
 It is easy to see that $D_{1}\left(k_{p}^{*}\right)\cup D_{2}\left(k_{p}^{*}\right)=\frac{\pi}{2k_{p}^{*}}\mathbb{Z}$,
and therefore \[
\left|Q(k_{p}^{*})\right|=\left|\left(\left.0,a\right.\right]\cap\frac{\pi}{2k_{p}^{*}}\mathbb{Z}\right|=\left\lfloor \frac{2k_{p}^{\ast}a}{\pi}\right\rfloor .\]
 Substituting $k_{p}^{*}=\frac{\pi}{2b+2c}p$ (lemma \ref{lem:np_at_vertex})
and plugging this in \eqref{eq:d_p_implicit_formula} gives \eqref{eq:d_p_explicit_formula}.

\end{proof}

We now have all the required information to obtain an expression for
$\mu_{n}$, the number of nodal points on $\Gamma$ at $k=k_{n}$.

\begin{proof} {[}Proof of theorem \ref{thm:dihedral_formula}{]}

In order to prove \eqref{eq:dihedral_nodal_points_formula} we need
to keep track of all the events which affect the number of nodal points
on the graph. These include the eigenvalues of the original graph,$\left\{ k_{n}\right\} $,
and the merge/split events, $\left\{ k_{p}^{*}\right\} $. Figure
\ref{fig:stream_of_events} shows a possible scenario for such a stream
of events. In this figure, the value of $\mu_{n}-n$ is shown for
each eigenvalue. The bounds on $\mu_{n}-n$ can be obtained from \eqref{eq:nodal_zeros_bound}
with a slight modification due to the additional nodal point positioned
on the Dirichlet vertex of the graph: 0$\leq\mu_{n}-n\leq1$. The
value of $\mu_{n}-n$ differs from $\mu_{n-1}-\left(n-1\right)$ if
and only if a merge/split event occurred in between the corresponding
eigenvalues. We therefore conclude that the value of $\mu_{n}-n$
depends on the parity of the number of merge/split events that occurred
before $k_{n}$.

\begin{figure}[h]
 \includegraphics[scale=0.55]{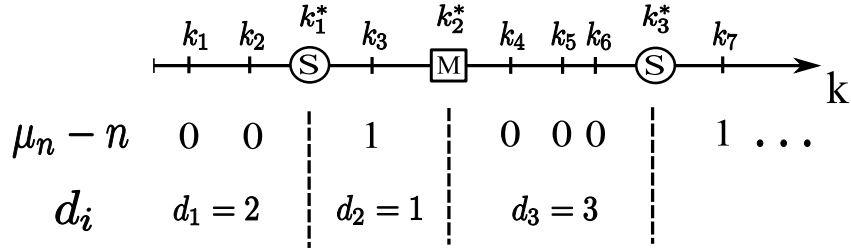}

\caption{An example of a stream of events needed to keep track of the number
of nodal points.\label{fig:stream_of_events}}

\end{figure}

Namely,

\[
\mu_{n}-n=\mmod_{2}p,\]
 where $p$ is an integer such that \[
k_{p}^{\ast}<k_{n}<k_{p+1}^{\ast}.\]

By the definition of $d_{p}$ (see lemma \ref{lem:d_p_formula}) this
is equivalent to\[
\sum_{i=1}^{p}d_{i}<n\leq\sum_{i=1}^{p+1}d_{i},\]

which by \eqref{eq:d_p_explicit_formula} evaluates to\[
\left\lfloor \frac{a}{b+c}p\right\rfloor +p<n\leq\left\lfloor \frac{a}{b+c}\left(p+1\right)\right\rfloor +\left(p+1\right).\]

Since $n,\, p$ are integers and $\frac{a}{b+c}\notin\mathbb{Q}$,\[
\frac{a}{b+c}p+p<n\leq\frac{a}{b+c}\left(p+1\right)+\left(p+1\right).\]

Multiplying through by $\frac{b+c}{a+b+c}$ we get

\[
p<\frac{b+c}{a+b+c}n<\left(p+1\right),\]

and conclude that \[
p=\left\lfloor \frac{b+c}{a+b+c}n\right\rfloor .\]

The number of nodal points on the graph is therefore given by \[
\mu_{n}=n+\mmod_{2}\left(\left\lfloor \frac{b+c}{a+b+c}n\right\rfloor \right).\]

We now wish to turn this into a formula for the nodal count, $\nu_{n}$.
The relation between $\mu_{n}$ and $\nu_{n}$ depends on whether
the $n^{th}$ eigenfunction has nodal points on the loop as demonstrated
in figure \ref{fig:points_and_domains}.

\begin{figure}[h]
 \hfill{}\includegraphics[scale=0.4]{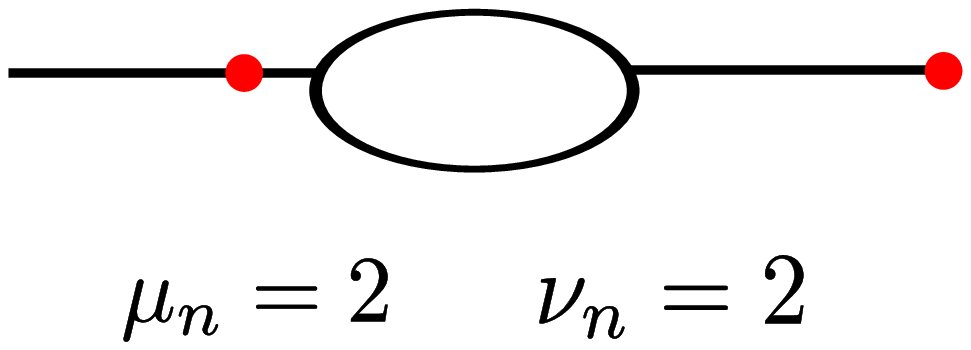}\hfill{}\includegraphics[scale=0.4]{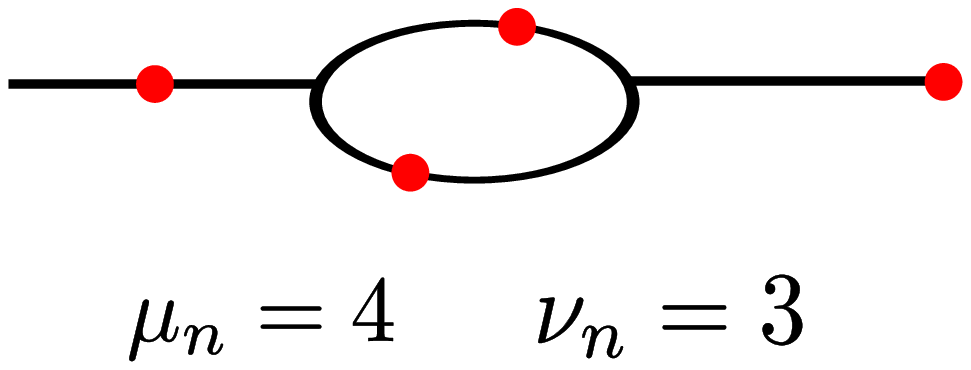}\hfill{}

\bigskip{}

\hfill{}(a)\qquad{}\hfill{}\qquad{}(b)\hfill{}

\caption{The two possible relations between the number of nodal points, $\mu_{n}$,
and the number of nodal domains, $\nu_{n}$.\label{fig:points_and_domains}}

\end{figure}

If it does have nodal points on the loop then $\nu_{n}=\mu_{n}-1$
(figure \ref{fig:points_and_domains}(b)), and in the case it does
not, $\nu_{n}=\mu_{n}$ (figure \ref{fig:points_and_domains}(a)).\\
 We therefore have that for the first $d_{1}=\left\lfloor \frac{a}{b+c}\right\rfloor +1$
eigenvalues (when there are still no nodal points on the loop) the
nodal count is \[
\nu_{n}=n+\mmod_{2}\left(\left\lfloor \frac{b+c}{a+b+c}n\right\rfloor \right)=n,\]
 where the second equality is due to $n\leq\left\lfloor \frac{a}{b+c}\right\rfloor +1$.
For the rest of the nodal count, $n>\left\lfloor \frac{a}{b+c}\right\rfloor +1$,
we get \[
\nu_{n}=n-1+\mmod_{2}\left(\left\lfloor \frac{b+c}{a+b+c}n\right\rfloor \right).\]

\end{proof}

%%%%%%%%%%%%%%%%%%%%%%%%%%%%%%%%%%%%%%%%%%%%%%%%%%

%%%%%%%%%%%%%%%%%%%%%%%%%%%%%%%%%%%%%%%%%%%%%%%%%%

\section{The nodal count of graphs with disjoint cycles}
\label{sec:disjoint_cycles}

%%%%%%%%%%%%%%%%%%%%%%%%%%%%%%%%%%%%%%%%%%%%%%%%%%%%%%%
\subsection{Graphs with $\beta=1$: a dynamical approach}
\label{sec:single_cycle_dyn}

In this section we will discuss the nodal dynamics on a graph with one
cycle (i.e. $\beta=1$) and a lead attached to a general position on
the cycle, see figure~\ref{fig:one_loop}.  The discussion will not be
formal, as we will prove the results by other methods in
section~\ref{sec:single_cycle_proof}.

\begin{figure}[th]
  \includegraphics[scale=0.6]{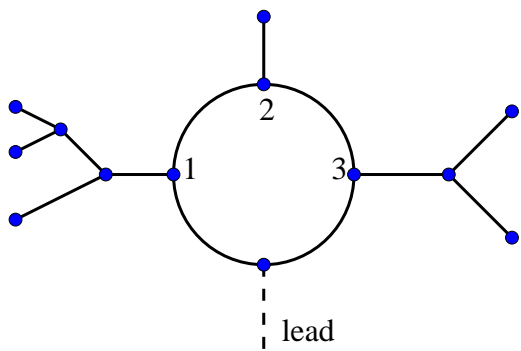}
  \caption{An example of a graph $\Gamma$ with $\beta=1$.  We view it
    as a cycle with trees grafted upon it.}
  \label{fig:one_loop}
\end{figure}

We have seen in section \ref{sec:single_lead_dyn} that, as $k$
increases the nodal points (zeroes) travel along the lead in the
direction of the graph.  Consider the quantity $\delta_n = \mu_n - n + 1$.
This is the ``surplus'' of zeros due to the graph not being a tree.
Bound \eqref{eq:nodal_zeros_bound} implies that $\delta_n$ can be equal to
either $0$ or $1$.  The change in this quantity from eigenvalue
$k=k_{n-1}$ to eigenvalue $k=k_{n}$ can be attributed to the following
three causes:
\begin{enumerate}
\item The increase in the index $n$ (the change in $d$ is $-1$).
\item A zero entering the graph from the lead.  Upon entering, the
  zero can either merge (M) with a zero already present on the cycle
  or split (S) into two zeros.
\item A zero entering a tree. This zero can either split off a
  zero traveling on the cycle or it can be a result of two zeros from
  the cycle merging together.
\end{enumerate}
Another notable event is a zero {\em passing\/} through a vertex where
a tree is attached.  We did not list it above since an event of this
type does not affect the nodal count. Similarly, when a zero is
traveling through the tree, we know (see
\cite{PokPryObe_mz96,Sch_wrcm06,Ber_cmp08}) that the number of zeros
does not change.

As already explained in section~\ref{sec:single_lead_dyn}, event (2)
happens exactly once between each pair of eigenvalues $k_{n-1}$ and
$k_{n}$, since an eigenvalue corresponds to the Neumann condition and
the entrance event corresponds to the Dirichlet condition satisfied at
the attachment point.  If event (2) is a split, the contribution to
$\delta$ is $+2$, otherwise it is $0$.  However, if we consider the
total contribution of events (1) and (2), we get $+1$ from a split and
$-1$ from a merge.  This is the same as a contribution of a type (3)
event, when the split results in $+1$ (number of zeros on the cycle
stays the same but another zero appears on a tree) and the merge in
$-1$ (the number of zeros on the cycle reduces by $2$, while one zero
enters a tree).

The first eigenfunction has constant sign, so $\delta_1=0$ and no
events happen until $k=k_1$.  Since the contribution of type (1) is
now absorbed in the contributions of type (2), the value of $\delta_n$
is the total number of splits minus the total number of merges up to
$k=k_n$.  On the other hand, $\delta_n$ is restricted by the nodal
bound to be either $1$ or $0$, therefore it is equal to the
\emph{parity of the total number of S/M events}.

There are exactly $n-1$ events of type (2) happening until $k=k_n$.
To count the number of events of type (3), we consider an auxiliary
graph $\Gamma_\emptyset$, obtained from $\Gamma$ by removing all edges
belonging to the cycle and imposing Dirichlet conditions on the points
where the trees were connected to the cycle.  The graph
$\Gamma_\emptyset$ is a collection of trees that were grafted on the
cycle.  Since a zero entering a tree signals that the Dirichlet
condition is satisfied on the tree, the corresponding value of $k$ is
in fact an eigenvalue\footnote{The corresponding eigenfunction is
  identically zero on all trees apart from the one with Dirichlet
  condition satisfied.} of $\Gamma_\emptyset$.  And the number of
events of type (3) is thus equal to the number of eigenvalues of
$\Gamma_\emptyset$ that are smaller than $k_n$.  To summarize,
\begin{equation*}
  \delta_n = \mmod_2(n-1+N_\emptyset(k_n)),
\end{equation*}
where $N_\emptyset(k_n)$ is the spectral counting function of the
graph $\Gamma_\emptyset$.  Thus we can fully predict the nodal count
using the spectra of two graphs, $\Gamma$ and $\Gamma_\emptyset$.  The
discussion above captures the dynamics of the zeros, but it is
relatively difficult to formalize.  Instead we will prove the formula
for $\mu_n$ by other methods, which, although not very pictorial,
allow us to extend the argument to the case of non-zero potential
$V(x)$.

%%%%%%%%%%%%%%%%%%%%%%%%%%%%%%%%%%%%%%%%%%%%%%%%%%%%%%%
\subsection{Graphs with $\beta=1$: a formal proof}
\label{sec:single_cycle_proof}

In this section we prove the formula that was informally
derived in section~\ref{sec:single_cycle_dyn}.

\begin{thm}
  \label{thm:single_cycle_formula}
  Consider the Schr\"odinger operator \eqref{E:schrod} on a connected
  graph with a single cycle.  Let the $n$-th eigenvalue be simple and
  the corresponding eigenfunction be non-zero on the vertices. Then
  \begin{equation}
    \mu_{n}=n-1+\mmod_{2}\left(n-1+N_{\emptyset}(\lambda_{n})\right),
    \label{eq:formula_one_cycle}
  \end{equation}
  where $N_{\emptyset}$ is the spectral counting function of the
  disconnected graph obtained by removing the cycle and putting
  Dirichlet conditions on the new vertices.
\end{thm}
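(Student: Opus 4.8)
The plan is to reduce the claimed formula to a single parity congruence and then to establish that congruence by a branch-by-branch Sturm oscillation count on the tree parts of $\Gamma$. First I would note that since the graph has $\beta_\G=1$, the a priori bounds \eqref{eq:nodal_zeros_bound} force $\mu_n\in\{n-1,n\}$, so the surplus $\delta_n:=\mu_n-(n-1)$ lies in $\{0,1\}$. The asserted equation says $\delta_n=\mmod_2(n-1+N_\emptyset(\lambda_n))$, and two numbers in $\{0,1\}$ agree exactly when they have the same parity. Hence the whole theorem is equivalent to
\begin{equation*}
  \mu_n\equiv N_\emptyset(\lambda_n)\pmod 2,
\end{equation*}
since $\delta_n\equiv n-1+N_\emptyset(\lambda_n)$ rearranges to $\mu_n\equiv 2(n-1)+N_\emptyset(\lambda_n)\equiv N_\emptyset(\lambda_n)$. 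This congruence is the object I would prove.

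Next I would split the zero count according to geometry, $\mu_n=\mu_n^{\mathrm{cyc}}+\mu_n^{\mathrm{tree}}$, where $\mu_n^{\mathrm{cyc}}$ counts the zeros of $f_n$ on the edges of the unique cycle and $\mu_n^{\mathrm{tree}}$ the zeros on the trees grafted onto it. Because $f_n$ is nonzero at every vertex and (being an eigenfunction of a second order operator) has only simple zeros, following $f_n$ once around the cycle returns it to its initial sign, so the number of sign changes on the cycle is \emph{even}; thus $\mu_n^{\mathrm{cyc}}$ is even and $\mu_n\equiv\mu_n^{\mathrm{tree}}\pmod 2$. It then suffices to show $\mu_n^{\mathrm{tree}}\equiv N_\emptyset(\lambda_n)\pmod2$, and in fact I expect equality to hold exactly.

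The heart of the argument is a relative oscillation count on each branch. Removing the cycle edges and imposing $f=0$ at the cycle vertices decouples $\Gamma_\emptyset$ into a disjoint union of rooted trees $b$, one for each tree edge leaving a cycle vertex, each carrying a Dirichlet condition at its root $w$ and the original $\Gamma$-conditions at its leaves, so that $N_\emptyset(\lambda_n)=\sum_b N_{b^D}(\lambda_n)$. The restriction $f_n|_b$ is the unique solution of \eqref{eq:eig_eq} on $b$ satisfying the interior and leaf conditions and having root value $f_n(w)\neq 0$: the solution space subject to those conditions is one–dimensional, and I would check that $\lambda_n\notin\sigma(b^D)$ (otherwise the one–dimensional space would be spanned by a mode vanishing at $w$, forcing $f_n(w)=0$, contrary to hypothesis). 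The key lemma is then the relative Sturm statement that the number of interior zeros of $f_n|_b$ equals $N_{b^D}(\lambda_n)$: tracking the normalised solution $u_\lambda$ on $b$ with $u_\lambda(w)=1$ as $\lambda$ increases, a new interior zero is born at the root precisely when $\lambda$ passes a value of $\sigma(b^D)$, the leaf conditions preventing any other change of the count. Starting from no interior zeros below the spectrum gives $\mu_n^b=N_{b^D}(\lambda_n)$, and summing over $b$ yields $\mu_n^{\mathrm{tree}}=N_\emptyset(\lambda_n)$, whence the congruence.

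The main obstacle is this oscillation lemma on a \emph{branching} tree: on a single interval it is the classical Pr\"ufer/Sturm computation, but on a genuine tree one must propagate the zero count through the interior $\delta$-type vertices and control the passage of a zero through the root. I would prove it by induction on the number of edges of $b$, peeling a pendant edge and invoking the Sturm theory already established for trees in \cite{PokPryObe_mz96,Sch_wrcm06}. Finally I would emphasise that every ingredient — the a priori bound \eqref{eq:nodal_zeros_bound}, the parity of the cycle count, and the relative oscillation lemma — is insensitive to a bounded piecewise continuous potential $V$, which is exactly what lets the argument cover the full Schr\"odinger operator \eqref{E:schrod} and not merely the free Laplacian treated by the dynamical picture.
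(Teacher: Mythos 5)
Your reduction is exactly the paper's: the bound \eqref{eq:nodal_zeros_bound} forces $\mu_n\in\{n-1,n\}$, the evenness of the number of zeros on the cycle (simple zeros, nonvanishing at vertices) reduces everything to $\mu_n^{\mathrm{tree}}=N_\emptyset(\lambda_n)$, and that identity is precisely what the paper establishes. The divergence is in how you prove the branch lemma, and this is where there is a genuine gap. Your claim that, as $\lambda$ increases, ``a new interior zero is born at the root precisely when $\lambda$ passes a value of $\sigma(b^D)$, the leaf conditions preventing any other change of the count'' is unsubstantiated at the crucial point: the dangerous events are not at the leaves but at the \emph{interior} vertices. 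The branch solution $u_\lambda$ does vanish at an interior vertex $v$ whenever the subtree hanging off $v$ hits its own Dirichlet-at-$v$ spectrum, and at such moments $u_\lambda$ can even vanish identically on a whole hanging edge (a short computation on a $3$-star with Neumann tips shows this happens at isolated $\lambda$); proving that all such events conserve the zero count --- including coincident events, which cannot be excluded since Theorem \ref{thm:single_cycle_formula} makes no genericity assumption on the edge lengths, only on the $n$-th eigenpair of $\Gamma$ --- is the actual content of the lemma, and your sketch defers it entirely to an unexecuted induction. Two further soft spots: the normalization $u_\lambda(w)=1$ blows up exactly at the birth events $\lambda\in\sigma(b^D)$, so the family must be projectivized there; and your argument that $\lambda_n\notin\sigma(b^D)$ is circular --- it assumes the solution space is one-dimensional, which is exactly what fails when $\lambda_n\in\sigma(b^D)$ (the space then contains both $f_n|_b$ and the Dirichlet mode). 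Notably, the paper itself runs your dynamical picture informally in section 4.1 and then states that it ``is relatively difficult to formalize,'' which is why its formal proof avoids spectral flow altogether.

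The repair is short and is the paper's route: view $f_n|_b$ not as a normalized solution but as an eigenfunction. Impose at the root $w$ the matched $\delta$-condition with $\alpha=f_n'(w)/f_n(w)$; then $f_n|_b$ is an eigenfunction of this tree with eigenvalue $\lambda_n=\lambda_m$ for some $m$, so by the tree nodal count it has exactly $m-1$ interior zeros, while the strict interlacing of Theorem \ref{thm:interlacing_monotone} (moving $\alpha$ to $\infty$, i.e.\ to the Dirichlet root condition) pins $\lambda_n$ strictly between the $(m-1)$-st and $m$-th Dirichlet eigenvalues, giving $N_{b^D}(\lambda_n)=m-1$ and, as a byproduct, the fact $\lambda_n\notin\sigma(b^D)$ that you needed. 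This substitution turns your unproved oscillation lemma into two citations, and it inherits robustness under a bounded potential $V$ for free, since both inputs are stated for the Schr\"odinger operator \eqref{E:schrod}.
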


\begin{proof}
  From the nodal bound for graphs with one cycle (i.e.  with
  $\beta=1$) we know that $\mu_{n}$ is equal to $n-1$ or $n$.  The
  first step of the proof is to observe that the number of zeros on
  the edges that do not belong to the cycle is equal to
  $N_{\emptyset}(\lambda_{n})$.  We will prove this statement
  below. Then $\mu_{n}-N_{\emptyset}(\lambda_{n})$ is the number of
  zeros on the cycle, and has to be even, as explained in
  section~\ref{sec:intro_nodal_count}.

  First, assume that $\mu_{n}=n$. Then, the quantity
  \begin{equation*}
  \mu_{n}-1-N_{\emptyset}(\lambda_{n})=n-1-N_{\emptyset}(\lambda_{n})
  \end{equation*}
  is odd and therefore
  \begin{equation*}
  \mmod_{2}\left(n-1+N_{\emptyset}(\lambda_{n})\right)
  = \mmod_{2}\left(n-1-N_{\emptyset}(\lambda_{n})\right)=1,
  \end{equation*}
  where we used the fact that
  \begin{equation*}
    \mmod_{2}(a+b)=\mmod_{2}(a-b)
  \end{equation*}
  for any integer $a$ and $b$.  Thus the right-hand side of
  equation~(\ref{eq:formula_one_cycle}) evaluates to $n$ which is the
  right answer.

  If $\mu_{n}$ is not equal to $n$, then it is equal to $n-1$ and we
  have
  \begin{equation*}
  \mmod_{2}\left(n-1+N_{\emptyset}(\lambda_{n})\right)
  = \mmod_{2}\left(\mu_{n}-N_{\emptyset}(\lambda_{n})\right)=0,
  \end{equation*}
  since $\mu_{n}-N_{\emptyset}(\lambda_{n})$ is even.  Thus
  equation~(\ref{eq:formula_one_cycle}) still holds.

  Now we prove that $N_{\emptyset}(\lambda_{n})$ is indeed the number
  of zeros on the subtrees of the graph. To shorten the formulas we
  introduce the following notation. We denote the $n$-th eigenvalue by
  $\Lambda$ and the corresponding eigenfunction by $F$. Break up the
  original graph into the cycle and the trees $T_{j}$. For each tree
  $T_{j}$ we choose as a root the vertex that was its contact point
  with the cycle. We can ensure that each root has degree 1: if
  necessary we can split trees that share a root. On each tree the
  vertex conditions are inherited from the graph, but we still need to
  specify the conditions on the root. We will consider two versions of
  each tree. The first, $T_{j,F}$ has the condition on the root $r$
  chosen to be satisfied by the function $F$, restricted to the
  tree. That is we chose the constant $\alpha$ in the $\delta$-type
  condition to be $\alpha=F'(r)/F(r)$. The second version of the tree,
  denoted $T_{j,\infty}$, has the Dirichlet condition on the root.

  Denote by $T_{\infty}$, the disjoint union of the graphs
  $T_{j,\infty}$. We observe that
  \begin{equation*}
    N_{\emptyset}(\lambda)=N_{T_{\infty}}(\lambda)
    =\sum_{j}N_{T_{j,\infty}}(\lambda).
  \end{equation*}
  Thus we only need to prove that $N_{T_{j,\infty}}(\Lambda)$ gives
  the number of zeros of $F$ on the subtree $T_{j,\infty}$. Since, by
  construction, the restriction of $F$ is an eigenfunction of
  $T_{j,F}$ with the eigenvalue $\Lambda$, we have
  $\Lambda=\lambda_{m}(T_{j,F})$ for some $m$. By the strict
  interlacing, theorem~\ref{thm:interlacing_monotone},
  \begin{equation*}
    \lambda_{m-1}(T_{j,\infty})<\lambda_{m}(T_{j,F})<\lambda_{m}(T_{j,\infty}),
  \end{equation*}
  and, therefore, $N_{T_{j,\infty}}(\Lambda)=m-1$. On the other hand,
  the nodal count on trees, equation~\eqref{eq:nodal_zeros_bound},
  gives $\mu_{m}(T_{j,\infty})=m-1$.  This concludes the proof.
\end{proof}

%%%%%%%%%%%%%%%%%%%%%%%%%%%%%%%%%%%%%%%%%%%%%%%
\subsection{Number of zeros on a graph with disjoint cycles}

In fact, the formula of the previous section can be extended to
$\beta>1$ as long as the cycles do not share any vertices.

\begin{thm}
  \label{thm:disjoint_cycle_formula}
  For a connected graph containing $\beta$ disjoint cycles, let the
  $n$-th eigenvalue be simple and the corresponding eigenfunction be
  non-zero on the vertices. Then
  \begin{equation*}
    \mu_{n} = n-1
    + \sum_{j=1}^\beta \mmod_2\left(n-1+N_{j,\emptyset}(\lambda_n)\right),
  \end{equation*}
  where $N_{j,\emptyset}$ is the spectral counting function of the
  disconnected graph obtained by removing the $j$-th cycle and putting
  Dirichlet conditions on the new vertices.
\end{thm}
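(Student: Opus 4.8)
The plan is to induct on the number of cycles $\beta$, taking Theorem~\ref{thm:single_cycle_formula} as the base case $\beta=1$ (the case $\beta=0$ being the tree count $\mu_n=n-1$). Two structural facts are used throughout. First, the nodal bound \eqref{eq:nodal_zeros_bound} forces the surplus $\delta_n:=\mu_n-(n-1)$ into $\{0,1,\dots,\beta\}$. Second, on each individual cycle the eigenfunction $F$ (for a simple eigenvalue $\Lambda=\lambda_n$, non-vanishing at vertices) has an even number of zeros, since this is an even number of sign changes around a closed loop, as noted in section~\ref{sec:intro_nodal_count}. The target is to show that each cycle contributes independently either $0$ or $1$ to $\delta_n$, with the contribution of the $j$-th cycle equal to $\mmod_2\!\left(n-1+N_{j,\emptyset}(\lambda_n)\right)$.

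For the inductive step I would use vertex-disjointness directly: contracting each cycle to a point turns $\Gamma$ into a tree, so there is a \emph{pendant} cycle $C$ joined to the rest of $\Gamma$ through a single cut vertex $v^{*}$. Splitting at $v^{*}$ writes $\Gamma=A\cup_{v^{*}}B$, where $B$ is the bouquet carrying $C$ (one cycle) and $A$ carries the remaining $\beta-1$ cycles. Because $F(v^{*})\neq0$, the zeros add: $\mu_n=\mu(F|_A)+\mu(F|_B)$. I endow each piece with the $\delta$-condition at $v^{*}$ that $F$ already satisfies, obtaining $A_F$ and $B_F$ for which $F|_A$, $F|_B$ are eigenfunctions with eigenvalue $\Lambda$, say the $m_A$-th and $m_B$-th. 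The base case applies to $B_F$ and the induction hypothesis to $A_F$.

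The indices are tied together by interlacing. Imposing a Dirichlet condition at $v^{*}$ decouples $\Gamma$ into $A_\infty\sqcup B_\infty$, and the strict single-vertex interlacing of Theorem~\ref{thm:interlacing_monotone} — exactly as used in the proof of Theorem~\ref{thm:single_cycle_formula} to pass from an $F$-condition to Dirichlet on a root — shows that this Dirichlet imposition leaves the count below $\Lambda$ unchanged. Hence $N_{A_\infty}(\Lambda)=m_A-1$, $N_{B_\infty}(\Lambda)=m_B-1$, and $n-1=N_\Gamma(\Lambda)=N_{A_\infty}(\Lambda)+N_{B_\infty}(\Lambda)=(m_A-1)+(m_B-1)$. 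Summing the formulas furnished by the base case and the induction hypothesis then gives $\mu_n=(n-1)+\mmod_2\!\left(m_B-1+N^{B_F}_{\emptyset}(\Lambda)\right)+\sum_{j}\mmod_2\!\left(m_A-1+N^{A_F}_{j,\emptyset}(\Lambda)\right)$, a sum of $\beta$ terms each in $\{0,1\}$, consistent with $0\le\delta_n\le\beta$.

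The main obstacle is the final matching: each local parity $\mmod_2(m_\bullet-1+N^{\bullet}_{\cdot,\emptyset})$ must be identified with the global parity $\mmod_2(n-1+N^{\Gamma}_{\cdot,\emptyset}(\lambda_n))$. The difficulty is that the counting functions in the statement are defined on the \emph{whole} graph with a single cycle deleted and Dirichlet imposed, whereas $N^{A_F}$ and $N^{B_F}$ see only one side of $v^{*}$ and carry a $\delta$-condition, not Dirichlet, there. To bridge this I would use the decoupling identity that imposing Dirichlet on all vertices of a cycle $C_j$ splits off its edges as independent Dirichlet--Dirichlet intervals, so that $N_{\Gamma^{D_j}}(\lambda_n)=N_{j,\emptyset}(\lambda_n)+\sum_{e\in C_j}\lfloor k_n L_e/\pi\rfloor$, together with the edge count \eqref{eq:NNB2} and repeated strict interlacing to control the residual index shifts modulo $2$. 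It is precisely the vertex-disjointness that makes these shifts, and hence the per-cycle surpluses, additive; carrying out this parity bookkeeping term by term is the delicate point on which the whole argument rests.
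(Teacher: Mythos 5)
Your skeleton is viable, and its first two stages are sound: the pendant-cycle decomposition at a cut point $v^{*}$ with $F(v^{*})\neq0$, the $F$-compatible $\delta$-conditions $\alpha_A+\alpha_B=\alpha_{v^{*}}$, and the index bookkeeping $n-1=(m_A-1)+(m_B-1)$ are exactly parallel to the paper's Lemma~\ref{lem:sum_of_m} (the paper does not induct, however: it treats each cycle symmetrically, cutting at points $c_1,\dots,c_k$ on the $k$ edges incident to the chosen cycle, and obtains $n-1=\sum_{j=0}^{k}(m_j-1)$ in one stroke). The genuine gap is precisely the step you yourself flag as ``the main obstacle,'' and your sketched bridge would not close it. The identity relating $N_{j,\emptyset}$ to the graph with Dirichlet conditions at \emph{all} vertices of the $j$-th cycle requires you to then compare that heavily modified graph's counting function at $\Lambda$ with $n-1$; each vertex-condition change shifts the count below $\Lambda$ by an amount that interlacing (Theorem~\ref{thm:interlacing_monotone}) only bounds between $0$ and $1$ and does not determine --- so ``repeated strict interlacing to control the residual index shifts modulo $2$'' is not a proof strategy but a restatement of the problem, since the undetermined $0$-or-$1$ shift \emph{is} the parity being computed. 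Worse, invoking \eqref{eq:NNB2} reintroduces the unknown sign data $\sgn[f_n(u)f_n(v)]$ that the theorem's formula is designed to eliminate.

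What actually closes the gap is the paper's second lemma, equation~\eqref{eq:gamma_plus_R}: if a graph $G_j$ is obtained by gluing at a single point two pieces carrying $F$-compatible conditions, with $\Lambda=\lambda_{m_j}$ on one piece and $\Lambda\notin\sigma$ of the other piece and of the glued graph (both shown by \emph{strict} interlacing from the simplicity of $\Lambda$ and nonvanishing of $F$), then Theorem~\ref{thm:interlacing_join} pinches $\lambda_{q-1}(G_j)<\Lambda<\lambda_q(G_j)$ with $q=m_j+N_{R_j}(\Lambda)$, giving the \emph{exact} identity $N_{G_j}(\Lambda)=m_j-1+N_{R_j}(\Lambda)$ rather than a bound. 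In your induction this is unavoidable: for the pendant cycle's own term the matching does reduce to $N_{A_\infty}(\Lambda)=m_A-1$, which you have; but for each cycle $j$ inside $A$, the component of $\Gamma$ minus cycle $j$ that contains $v^{*}$ is $B$ glued onto the corresponding component of $A_F$ minus cycle $j$, and you need $N_{\mathrm{glued}}(\Lambda)=N_{\mathrm{cut}}(\Lambda)+(m_B-1)$ exactly, i.e.\ an instance of \eqref{eq:gamma_plus_R} (including the subsidiary fact that $\Lambda$ avoids the spectra of the modified graphs). With that lemma supplied, your inductive route does go through and is a legitimate alternative to the paper's direct argument; without it, the proposal is missing its decisive ingredient.
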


\begin{proof}
  Denote the $n$-th eigenvalue by $\Lambda$ and the corresponding
  eigenfunction by $F$.  Choose an arbitrary cycle and let $e_{1}$,
  \ldots{}, $e_{k}$ be the edges incident to it.  Since the cycles are
  disjoint, these edges do not belong to any cycle.  Choose points
  $c_{1}$, \ldots{}, $c_{k}$, one on each edge, so that the function
  $F$ is non-zero at these points. If the graph is cut at these
  points, we obtain $k+1$ disjoint subgraphs, $\Gamma_{j}$,
  $j=0,\ldots,k$ (the $0$-th subgraph contains the chosen cycle), see
  figure~\ref{fig:dc1}.

  \begin{figure}[th]
    \includegraphics[scale=1]{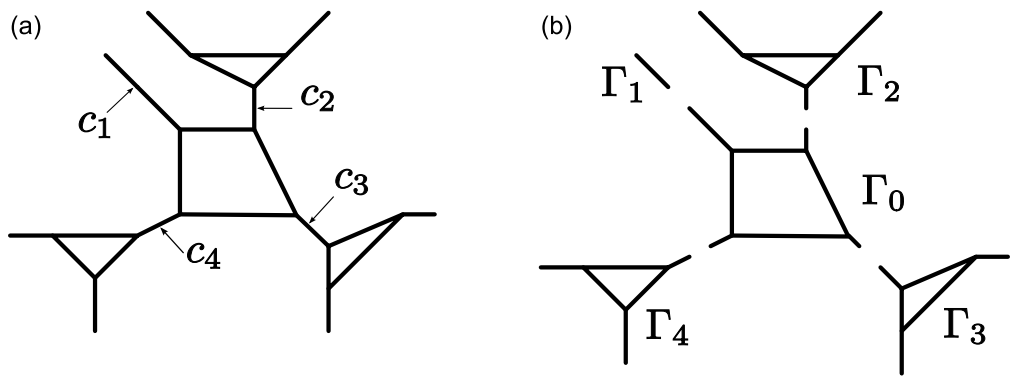}
    \caption{An example of a graph with 4 disjoint cycles.  In part
      (a), the middle cycle is chosen and the cut-points are labeled.
      In part (b) the graph is split up into subgraphs.  The central
      part $\Gamma_0$ contains exactly one cycle, while the other parts
      can contain more or fewer.}
    \label{fig:dc1}
  \end{figure}

  Define
  \begin{equation*}
    \alpha_{c_{j}}=\frac{F'(c_{j})}{F(c_{j})},
  \end{equation*}
  where the derivative is taken away from the chosen loop. We impose
  $\delta$-type conditions on the newly formed vertices. The vertex
  $c_{j}$ belonging to $\Gamma_{j}$ will get the condition with
  coefficient $\alpha_{c_{j}}$ and its counterpart belonging to the
  subgraph $\Gamma_{0}$ will get the condition with coefficient
  $-\alpha_{c_{j}}$. This way, the appropriately cut function $F$ is
  still an eigenfunction on all $k+1$ subgraphs and $\Lambda$ is the
  corresponding eigenvalue.  This allows us to define $m_{j}$ by
  \begin{equation*}
    \Lambda=\lambda_{m_{j}}(\Gamma_{j}).
  \end{equation*}

  \begin{lem}
    \label{lem:sum_of_m}
    The numbers $m_{j}$ are well-defined and satisfy
    \begin{equation}
      n-1=\sum_{j=0}^{k}(m_{j}-1).
      \label{eq:sum_of_m}
    \end{equation}
  \end{lem}

  \begin{proof}[Proof of the lemma]
    Let $\Gamma_{c}$ denote the disjoint union of the graphs
    $\Gamma_{j}$, $j=0,\ldots,k$.  First of all, we apply
    theorem~\ref{thm:interlacing_join} $k$ times (for $k$ cuts) to
    obtain inequalities
    \begin{equation*}
      \lambda_{n-1}(\Gamma_{c}) \leq\lambda_{n-1}(\Gamma)
      \qquad
      \mbox{and}
      \qquad
      \lambda_{n+1}(\Gamma) \leq \lambda_{n+k+1}(\Gamma_{c})
    \end{equation*}
    On the other hand, simplicity of the eigenvalue
    $\lambda_n(\Gamma)=\Lambda$ means that $\lambda_{n-1}(\Gamma) <
    \Lambda < \lambda_{n+1}(\Gamma)$ and therefore
    \begin{equation}
      \label{eq:Lambda_and_Gamma_c}
      \lambda_{n-1}(\Gamma_{c}) < \Lambda < \lambda_{n+k+1}(\Gamma_{c}).
    \end{equation}
    Finally, out of $F$ we can form at least $k+1$ linearly
    independent eigenfunctions of the graph $\Gamma_c$: functions that
    are restrictions of $F$ on one of the parts $\Gamma_j$ and
    identically zero on all the others.  All these eigenfunctions have
    eigenvalue $\Lambda$.  Combining this observation with
    inequality~\eqref{eq:Lambda_and_Gamma_c} we conclude that
    $\Lambda$ has degeneracy exactly $k+1$ in the spectrum of
    $\Gamma_c$ and therefore is a simple eigenvalue of every part
    $\Gamma_j$.  Thus the numbers $m_{j}$ are well-defined.

    Finally, since the spectrum of $\Gamma_{c}$ is the superposition
    of spectra of $\Gamma_{j}$, equation \eqref{eq:sum_of_m} is
    equivalent to the statement {}``there are $n-1$ eigenvalues of
    $\Gamma_{c}$ that are strictly less than $\Lambda$'', which is
    also obvious from inequality~\eqref{eq:Lambda_and_Gamma_c} and the
    fact that $\lambda_n(\Gamma_c)=\Lambda$.
  \end{proof}

  \begin{figure}[th]
    \includegraphics[scale=1]{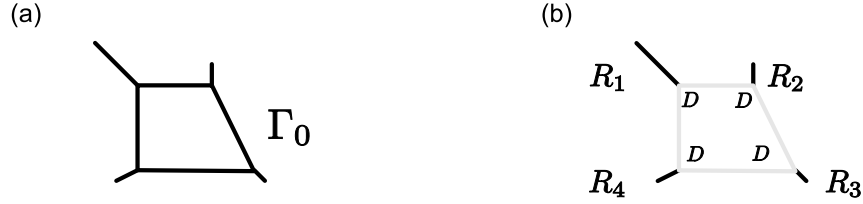}
    \caption{The subgraph $\Gamma_0$ and the graph $R$ obtained after
      removing the cycle (it appears shaded on part (b) of the
      figure).  The graph $R$ is a disjoint union of four subgraphs
      $R_j$.}
    \label{fig:dc2}
  \end{figure}

  We now want to use theorem~\ref{thm:single_cycle_formula} to find
  the number of zeros of the function $F$ on the graph $\Gamma_{0}$.
  Let $R$ be the graph obtained from $\Gamma_{0}$ by removing the
  cycle and imposing Dirichlet conditions on the new vertices. This
  graph is a disjoint union of the graphs $R_{j}$, $j=1,\ldots,k$, see
  Fig.~\ref{fig:dc2} Therefore, we have
  \begin{equation}
    N_{R}(\Lambda)=\sum_{j=1}^{k}N_{R_{j}}(\Lambda).
    \label{eq:R_spec}
  \end{equation}
  According to theorem~\ref{thm:single_cycle_formula} the number of
  zeros of $F$ on the subgraph $\Gamma_{0}$ is
  \begin{equation*}
    \mu(\Lambda,\Gamma_{0})
    = m_{0}-1+\mmod_{2}\left(m_{0}-1+N_{R}(\Lambda)\right).
  \end{equation*}
  Extracting $m_{0}-1$ from equation~\eqref{eq:sum_of_m} and using
  equation~\eqref{eq:R_spec} we get
  \begin{equation}
    \mu(\Lambda,\Gamma_{0})
    = m_{0}-1+\mmod_{2}\left(n-1
      +\sum_{j=1}^{k}\left(m_{j}-1+N_{R_{j}}(\Lambda)\right)\right),
    \label{eq:mu_gamma0_1}
  \end{equation}
  where we used
  \begin{equation*}
    \mmod_{2}(a+b)=\mmod_{2}(a-b)
  \end{equation*}
  for integer $a$ and $b$ to change some signs.
  Define now the graph $G$ by removing the chosen cycle from the
  original graph $\Gamma$ and imposing the Dirichlet conditions on the
  new vertices.  Similarly to the graph $R$, the graph $G$ is a
  disjoint union of $k$ subgraphs $G_{j}$, see Fig.~\ref{fig:dc3}, and
  \begin{equation}
    N_{G}(\Lambda)=\sum_{j=1}^{k}N_{G_{j}}(\Lambda).
    \label{eq:G_spec}
  \end{equation}
  If we were to cut the graph $G_{j}$ at the point $c_{j}$ the two
  parts would be exactly $\Gamma_{j}$ and $R_{j}$. This suggests the
  following lemma.

  \begin{figure}[th]
    \includegraphics[scale=1]{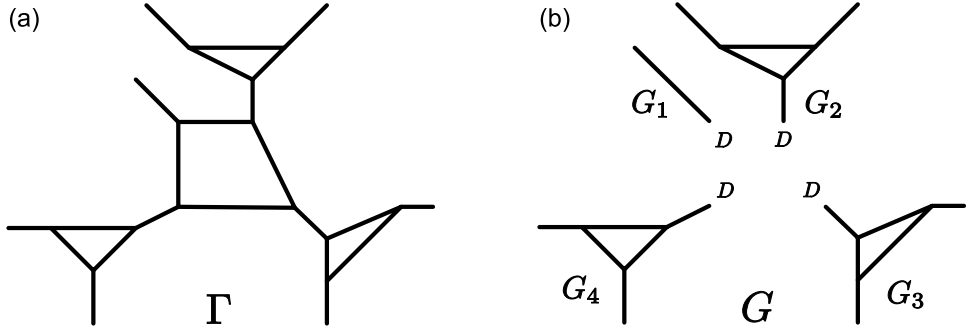}
    \caption{The original graph $\Gamma$ and the graph $G$ obtained by
      removing the chosen cycle.  The graph $G$ is a disjoint union of
      graphs $G_j$.}
    \label{fig:dc3}
  \end{figure}

  \begin{lem}
    For every $j=1,\ldots,k$,
    \begin{equation}
      N_{G_{j}}(\Lambda)=m_{j}-1+N_{R_{j}}(\Lambda).
      \label{eq:gamma_plus_R}
    \end{equation}
  \end{lem}

  \begin{proof}[Proof of the lemma]
    First we observe that $\Lambda$ belongs to the spectrum of the
    graph $\Gamma_{j}$ and does not belong to the spectrum of $R_{j}$
    or $G_{j}$. This can be shown by the strict interlacing,
    theorem~\ref{thm:interlacing_monotone}, applied to the graph $\Gamma$
    (corresp. $G_0$) by changing the condition from Neumann to
    Dirichlet at the vertex where $G_j$ (corresp. $R_j$) was connected
    to the cycle.

    Denote by $G_{j,c}$ the disjoint union of the graphs $\Gamma_{j}$
    and $R_{j}$. Let integer $q$ be such that
    \begin{equation*}
      \Lambda=\lambda_{q}(G_{j,c}).
    \end{equation*}
    Since $\Lambda=\lambda_{m_{j}}(\Gamma_{j})$, we have that
    $q=m_{j}+N_{R_{j}}(\Lambda)$.  On the other hand, by
    theorem \ref{thm:interlacing_join},
    \begin{equation*}
      \lambda_{q-1}(G_{j})<\Lambda=\lambda_{q}(G_{j,c})<\lambda_{q}(G_{j})\end{equation*}
    Therefore, $N_{G_{j}}(\Lambda)=q-1$ which concludes the
    proof.
  \end{proof}

  Combining equations \eqref{eq:G_spec} and \eqref{eq:gamma_plus_R}
  with formula \eqref{eq:mu_gamma0_1} we arrive to
  \begin{equation}
    \mu(\Lambda,\Gamma_{0})
    =m_{0}-1+\mmod_{2}\left(n-1+N_{G}(\Lambda)\right).
    \label{eq:mu_gamma0_2}
  \end{equation}
  We should also note that $N_{G}=N_{j,\emptyset}$, where $j$ is the
  number of the cycle that was chosen.

  Since we chose an arbitrary cycle, equation \eqref{eq:mu_gamma0_2}
  is valid for every cycle. The conclusion of the theorem is just the
  sum of equations \eqref{eq:mu_gamma0_2} over all cycles with an
  application of the analogue of equation \eqref{eq:sum_of_m}.
\end{proof}

\section{A discussion }

\subsection{An approximate derivation of an exact nodal count formula}

We will now present an alternative way to get the nodal points count
formula \eqref{eq:dihedral_nodal_points_formula} of the graph given in
figure \ref{fig:original_dihedral_without_cycle}(a). The derivation is
most appealing, but involves an approximation that cannot be
justified.  We present it here because it makes use of an idea which
has been used in other contexts.  Also we find that an unjustifiable
approximation that reproduces the exact nodal count formula carries
information about the graph in its own right.

\begin{figure}[ht]
\includegraphics[scale=0.55]{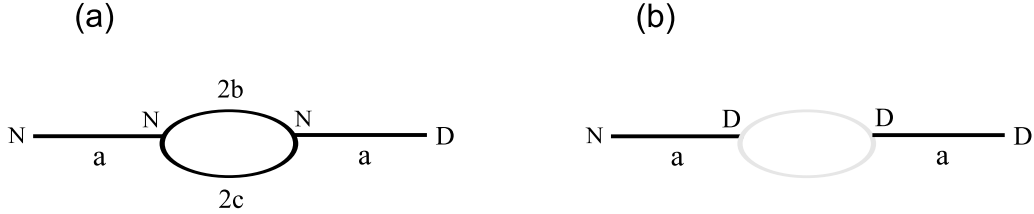}

\caption{\label{fig:original_dihedral_without_cycle}(a) The quantum graph
whose nodal count we compute. (b) The same graph with its cycle removed
(appears shaded).}

\end{figure}

We start by rewriting the formula \eqref{eq:formula_one_cycle} with
a slight modification, due to the nodal point which is positioned
on the Dirichlet boundary vertex of the graph:

\[
\mu_{n}=n+\mmod_{2}\left(n+N_{\emptyset}(k_{n})\right).\]
The spectra of the two edges which appear in figure \ref{fig:original_dihedral_without_cycle}(b)
are $\left\{ \frac{\pi}{a}n\right\} _{n\in\mathbb{N}}$ and  $\left\{ \frac{\pi}{a}\left(n+\frac{1}{2}\right)\right\} _{n\in\mathbb{N}}$.
Their spectral counting function is therefore\[
N_{\emptyset}(k)=\left\lfloor \frac{ka}{\pi}\right\rfloor +\left\lfloor \frac{ka}{\pi}+\frac{1}{2}\right\rfloor .\]
Plugging it above and using the identities $\mmod_{2}\left(\left\lfloor x\right\rfloor +\left\lfloor x+\frac{1}{2}\right\rfloor \right)=\mmod_{2}\left(\left\lfloor 2x\right\rfloor \right)$
and $\mmod_{2}\left(x+y\right)=\mmod_{2}\left(x-y\right)$ we obtain

\[
\mu_{n}=n+\mmod_{2}\left(n-\left\lfloor \frac{2k_{n}a}{\pi}\right\rfloor \right).\]
We can get an approximate expression for $k_{n}$ from the Weyl term
of the spectral counting function of the graph, \[
N\left(k\right)\approx\frac{2(a+b+c)}{\pi}k,\]
by its inversion, i.e.,

\[
k_{n}\approx\frac{\pi}{2(a+b+c)}n.\]
Using this approximation gives\begin{align*}
\mu_{n} & \approx n+\mmod_{2}\left(n-\left\lfloor \frac{a}{a+b+c}n\right\rfloor \right)\\
 & =n+\mmod_{2}\left(\left\lfloor \frac{b+c}{a+b+c}n\right\rfloor \right),\end{align*}
which is the exact result, \eqref{eq:dihedral_nodal_points_formula}.

One should note that the last step of the derivation, which involves
an approximation of $k_{n}$ by inverting the Weyl term, cannot be justified.
Moreover, the floor function is a discontinuous function and it is
therefore expected that an approximation of its argument would lead
to a completely wrong result for some portion of the sequence.

From the exactness of the final result, we conclude the following
property of the spectrum\[
N_{\emptyset}(k_{n})=\mmod_{2}\left(\left\lfloor \frac{2a}{\pi}\, k_{n}\right\rfloor \right)=\mmod_{2}\left(\left\lfloor \frac{2a}{\pi}\,\frac{\pi}{2(a+b+c)}n\right\rfloor \right).\]
Numeric examination reveals that the equality hold for the arguments of the $\mmod_{2}$ as well, namely\begin{equation}
\left\lfloor \frac{2a}{\pi}\, k_{n}\right\rfloor =\left\lfloor \frac{2a}{\pi}\,\frac{\pi}{2(a+b+c)}n\right\rfloor .\label{eq:surprising_spectral_identity}\end{equation}
The above relation connects the spectrum and the lengths of the graph's
edges. Having such a relation for our graph makes $N_{\emptyset}(k_{n})$
expressible in terms of the parameters $a,b,c$ and enables to turn
the nodal count formula, \eqref{eq:formula_one_cycle}, into a formula
which contains geometric properties of the graph, rather than spectral
ones. In short, the special nodal count formula is a direct consequence
of a purely spectral identity - a connection between the graph's spectrum
and the spectral counting function of its subgraphs.

The novelty of this result makes one wonder to what extent it can
be generalized to other graphs. Even if such an exact result is not
reproduced, one may still use approximations of the type above and
try to estimate the errors caused by them.

\subsection{Periodic orbits expansions}

Wishing to express the nodal count formula \eqref{eq:dihedral_nodal_points_formula}
as a periodic orbits expansion, we notice that $\mmod_{2}\left(\left\lfloor x\right\rfloor \right)$
is an odd periodic function (of period 2), whose Fourier transform
is: \[
\mmod_{2}\left(\left\lfloor x\right\rfloor \right)=\frac{1}{2}-\sum_{k=0}^{\infty}\frac{2}{(2k+1)\pi}\sin\left(\left(2k+1\right)\pi x\right).\]

Denoting $\alpha:=\frac{b+c}{a+b+c}$, the normalized length of the
loop, we can rewrite \eqref{eq:dihedral_nodal_points_formula} as
following: \[
\mu_{n}=n+\frac{1}{2}-\sum_{k=0}^{\infty}\frac{2}{(2k+1)\pi}\sin\left(\left(2k+1\right)\pi\alpha n\right).\]\label{eq:periodic_orbits_of_dihedral}

We therefore get that the nodal points sequence is expressed in terms
of lengths of periodic orbits on the graph. One should note that the
only periodic orbits that appear are odd repetitions of the graph's
cycle. They appear with harmonically decaying amplitudes. This causes
to seek for a more direct derivation of the periodic orbits expansion
which may also explain the meaning of the amplitudes and the absence
of other periodic orbits. Furthermore, the formula \eqref{eq:formula_one_cycle},
which holds for any graph with a single cycle, may also be turned
into an expansion of a similar type. We recall that for quantum graphs
there exist an exact periodic expansion for the spectral counting
function. Therefore, the spectral counting function of the subgraph,
$N_{\emptyset}(k_{n})$, can be expanded and plugged in formula \eqref{eq:formula_one_cycle}.
This would yield an expansion which still involves the spectral information,
$\left\{ k_{n}\right\} $. Having an approximate inversion of the
spectral counting function of the whole graph then enables to further
get a periodic orbits formula which involves only geometric properties
of the graph. Such spectral inversion attempts were recently carried out with a high degree of success \cite{GKS06, PhysRevLett.88.044101}. It is therefore evident that the obtained result leads to a wide field of further questions and open research possibilities.

\section{acknowledgments}
It is a pleasure to acknowledge Sven Gnutzmann for fruitful
discussions about the scattering matrix properties. We are grateful to
Peter Kuchment for suggesting to extend
Theorem~\ref{thm:single_cycle_formula} to what is now
Theorem~\ref{thm:disjoint_cycle_formula}.  We also wish to thank Amit
Godel for the careful examination of the proof of theorem
\ref{thm:dihedral_formula}.  The work was supported by the Minerva
Center for Nonlinear Physics, the Einstein (Minerva) Center at the
Weizmann Institute and the Wales Institute of Mathematical and
Computational Sciences) (WIMCS). Grants from EPSRC (grant EP/G021287),
ISF (grant 166/09), BSF (grant 2006065) and NSF (DMS-0604859 and
DMS-0907968) are acknowledged.

\appendix
\section{Interlacing theorems for quantum graphs}

Eigenvalue interlacing (or bracketing) is a powerful tool in spectral
theory. In particular, in the graph setting, it allows to estimate
eigenvalue of a given graph via the eigenvalues of its subgraphs,
which may be easier to calculate.  Here we quote the theorems that are
used in the proofs of the formulas of the present manuscript.  The
theorems are quoted in the form they appear in \cite{BerKuc_prep10}.

The first theorem deals with choosing a vertex on the graph $\Gamma$
and changing the parameter $\alpha_v$ of the extended $\delta$-type
condition at $v$ (see equation~\eqref{eq:delta_deriv}).  We remind the
reader that $\alpha_v=\infty$ corresponds to the Dirichlet condition at the
vertex which essentially disconnects the edges meeting at the vertex.

\begin{thm}[Interlacing when changing a parameter]
  \label{thm:interlacing_monotone}
  Let $\Graph_{\alpha'}$ be the graph obtained from the graph
  $\Graph_{\alpha}$ by changing the coefficient of the condition at
  vertex $v$ from $\alpha$ to $\alpha'$.  If $-\infty < \alpha <
  \alpha' \leq \infty$, then
  \begin{equation}
    \label{eq:interlacing_monotone}
    \lambda_n(\Graph_\alpha) \leq \lambda_n(\Graph_{\alpha'}) \leq
    \lambda_{n+1}(\Graph_\alpha).
  \end{equation}
  If the $n$-th eigenvalue of $\Gamma_{\alpha'}$ is simple and the
  corresponding eigenfunction is nonzero on the vertices, the
  inequalities are strict.
\end{thm}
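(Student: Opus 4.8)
The plan is to work entirely with the quadratic form attached to the operator, since the difference between $\Gamma_\alpha$ and $\Gamma_{\alpha'}$ is visible there as a rank-one object. For $\delta$-type conditions the form domain is the space $\mathcal{D}$ of functions that lie in $H^1$ on each edge and are continuous at every vertex, and the form is
\[
  Q_\alpha(f) = \sum_{e\in\Edges}\int_e\left(|f'|^2 + V|f|^2\right) + \sum_{w}\alpha_w\,|f(w)|^2 ,
\]
the sum running over the vertices carrying a finite coefficient, with \eqref{eq:delta_deriv} being the natural boundary condition encoded by this form. Changing only the coefficient at the single vertex $v$ gives, on the common domain $\mathcal{D}$,
\[
  Q_{\alpha'}(f) = Q_\alpha(f) + (\alpha'-\alpha)\,|f(v)|^2 ,
\]
a nonnegative rank-one perturbation when $\alpha<\alpha'<\infty$. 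I would treat the Dirichlet endpoint $\alpha'=\infty$ as the limiting case in which the form remains $Q_\alpha$ but the domain shrinks to $\mathcal{D}_0 := \{f\in\mathcal{D} : f(v)=0\}$; since the extra term vanishes on $\mathcal{D}_0$ the two descriptions are consistent. All eigenvalues are read off from the Courant--Fischer formula $\lambda_n = \min_{\dim S=n}\max_{0\neq f\in S} Q(f)/\|f\|^2$ over subspaces $S$ of the relevant form domain.

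The left inequality $\lambda_n(\Gamma_\alpha)\le\lambda_n(\Gamma_{\alpha'})$ is then immediate: for finite $\alpha'$ we have $Q_\alpha\le Q_{\alpha'}$ on $\mathcal{D}$, and for $\alpha'=\infty$ the min-max runs over a smaller family of subspaces; either way min-max is monotone. For the right inequality I would use the ``one linear constraint'' trick. Let $W$ be the span of the first $n+1$ eigenfunctions of $\Gamma_\alpha$, so $Q_\alpha(f)\le\lambda_{n+1}(\Gamma_\alpha)\|f\|^2$ on $W$, and set $W_0 := \{f\in W : f(v)=0\}$. Since $f\mapsto f(v)$ is a single linear functional, $\dim W_0\ge n$; and on $W_0$ one has $Q_{\alpha'}(f)=Q_\alpha(f)\le\lambda_{n+1}(\Gamma_\alpha)\|f\|^2$, which also places $W_0\subseteq\mathcal{D}_0$ in the Dirichlet case, making it a legitimate test space there. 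Feeding any $n$-dimensional subspace of $W_0$ into min-max for $Q_{\alpha'}$ yields $\lambda_n(\Gamma_{\alpha'})\le\lambda_{n+1}(\Gamma_\alpha)$.

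For the strict inequalities the mechanism is that equality in either bound produces a nonzero eigenfunction $f^*$ for the common eigenvalue $\lambda$ that vanishes at $v$; and a function with $f^*(v)=0$ satisfies the $\delta$-condition at $v$ for \emph{every} value of the coefficient, hence is simultaneously an eigenfunction of $\Gamma_\alpha$ and of $\Gamma_{\alpha'}$ (including the Dirichlet case). To extract it when $\lambda_n(\Gamma_\alpha)=\lambda_n(\Gamma_{\alpha'})=:\lambda$, let $S^*$ be the span of the first $n$ eigenfunctions of $\Gamma_{\alpha'}$. Because $Q_\alpha\le Q_{\alpha'}$, min-max for $Q_\alpha$ forces the $Q_\alpha$-quotient on $S^*$ to attain the value $\lambda$; a maximiser $f^*\in S^*$ then also maximises the $Q_{\alpha'}$-quotient over $S^*$ (both equal $\lambda$), whence $Q_\alpha(f^*)=Q_{\alpha'}(f^*)$ gives $f^*(v)=0$, and $f^*$ is a $\lambda$-eigenfunction of $\Gamma_{\alpha'}$ since $S^*$ is $Q_{\alpha'}$-minimising. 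The equality $\lambda_n(\Gamma_{\alpha'})=\lambda_{n+1}(\Gamma_\alpha)$ is handled the same way, extracting $f^*$ from a minimising subspace inside $W_0$. In either case $f^*$ is a nonzero eigenfunction of $\Gamma_{\alpha'}$ for the simple eigenvalue $\lambda_n(\Gamma_{\alpha'})$ that vanishes at $v$, contradicting the hypothesis that the eigenspace is one-dimensional and spanned by a function nonzero at all vertices. I expect the main obstacle to be the careful bookkeeping at the Dirichlet endpoint $\alpha'=\infty$, where the form domain genuinely jumps to $\mathcal{D}_0$: one must check that the extraction of $f^*$ and its identification as a bona fide eigenfunction survive this limit, and read the nonvanishing hypothesis correctly, since a Dirichlet eigenfunction is automatically zero at $v$. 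The shared-eigenfunction description above is precisely what makes the argument symmetric in $\alpha$ and $\alpha'$, so that imposing the nonvanishing condition on whichever side carries a finite coefficient lets the same contradiction run.
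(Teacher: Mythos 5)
A preliminary point: the paper contains no proof of this theorem. It is quoted in the appendix with the explicit remark that the interlacing theorems ``are quoted in the form they appear in \cite{BerKuc_prep10}'', so there is no in-paper argument to compare yours against; your proposal can only be judged on its own terms and against how the theorem is actually used in the body of the paper.

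On those terms your variational argument is sound and is the standard route. Recasting the change of coefficient as the nonnegative rank-one form perturbation $(\alpha'-\alpha)|f(v)|^2$ on the fixed form domain, with the Dirichlet endpoint realized as restriction to the codimension-one subspace $\{f\in\mathcal{D}: f(v)=0\}$, gives the left inequality by monotonicity of min-max and the right one by the one-linear-constraint trick; your check that $W_0$ remains an admissible test space when $\alpha'=\infty$ is exactly the needed bookkeeping. The strictness extraction is also correct where you run it: since $S^*$ is spanned by eigenfunctions of $\Gamma_{\alpha'}$, any element of $S^*$ whose Rayleigh quotient attains $\lambda_n(\Gamma_{\alpha'})$ lies in the $\lambda_n$-eigenspace, so $f^*$ is a genuine eigenfunction, and $Q_\alpha(f^*)=Q_{\alpha'}(f^*)$ forces $f^*(v)=0$, contradicting simplicity plus nonvanishing when $\alpha'<\infty$; the second equality case works symmetrically through $W_0$.

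The one genuine loose end is the Dirichlet endpoint, which you flag but do not close. As quoted, the strictness hypothesis sits on the eigenfunction of $\Gamma_{\alpha'}$, which is vacuous for $\alpha'=\infty$ since that eigenfunction vanishes at $v$ by definition --- and yet the paper invokes strict interlacing precisely with $\alpha'=\infty$, in the proofs of Theorem~\ref{thm:single_cycle_formula} and of the lemma inside Theorem~\ref{thm:disjoint_cycle_formula}, so the hypothesis must be transferable to the $\Gamma_\alpha$ side. Your machinery closes this with one additional standard observation: if an $n$-dimensional subspace $S$ achieves the min-max value $\lambda_n$ exactly, then $S$ contains a genuine $\lambda_n$-eigenfunction (intersect $S$ with the orthogonal complement of the eigenspaces below $\lambda_n$, nontrivial by dimension count, to get $f$ with Rayleigh quotient at least $\lambda_n$, hence equal to it, hence in the eigenspace). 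Applying this to $S^*$, the span of the first $n$ Dirichlet eigenfunctions, on which $Q_\alpha=Q_{\alpha'}$, the equality $\lambda_n(\Gamma_\alpha)=\lambda_n(\Gamma_\infty)$ yields an eigenfunction of $\Gamma_\alpha$ vanishing at $v$; together with your $W_0$-extraction for the other inequality, this gives strictness whenever the relevant eigenvalue of $\Gamma_\alpha$ is simple with eigenfunction nonzero at $v$, which is exactly the form in which the paper deploys the theorem. With that one-sentence refinement added, your proof is complete, including the endpoint the quoted statement handles only implicitly.
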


The second theorem deals with the situation when the graph $\Gamma'$
is obtained from $\Gamma$ by gluing two vertices together, or,
equivalently, $\Gamma$ is obtained by cutting the graph $\Gamma'$ at a
vertex or at a point on an edge.\footnote{Any point on an edge can be
  viewed as a vertex of degree $2$}  When gluing the vertices
together, their respective parameters $\alpha_{v_0}$ and
$\alpha_{v_1}$ get added.

\begin{thm}[Interlacing when gluing the vertices]
  \label{thm:interlacing_join}
  Let $\Graph$ be a compact (not necessarily connected) graph.  Let
  $v_0$ and $v_1$ be vertices of the graph $\Graph$ endowed with the
  $\delta$-type conditions with the parameters $\alpha_0$ and
  $\alpha_1$ (see definition \ref{D:spaces}).  Arbitrary self-adjoint
  conditions are allowed at all other vertices of $\Graph$.

  Let $\Graph'$ be the graph obtained from $\Graph$ by gluing the
  vertices $v_0$ and $v_1$ together into a single vertex $v$, so that
  $\Edges_v=\Edges_{v_0}\cup \Edges_{v_1}$, and endowed with the $\delta$-type
  condition with the parameter $\alpha_v = \alpha_0 + \alpha_1$.

  Then the eigenvalues of the two graphs satisfy the inequalities
  \begin{equation}
    \label{eq:interlacing_join}
    \lambda_n(\Graph) \leq \lambda_n(\Graph') \leq \lambda_{n+1}(\Graph).
  \end{equation}

  In addition, if $\lambda_n(\Graph')$ is simple and the corresponding
  eigenfunction is nonzero on vertices and not an eigenfunction of
  $\Gamma$, the inequalities are strict.
\end{thm}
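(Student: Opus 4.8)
The plan is to recognize Theorem~\ref{thm:interlacing_join} as an instance of the classical codimension-one eigenvalue interlacing that follows from the variational (min-max) characterization of the eigenvalues. The crucial observation is that passing from $\Graph$ to $\Graph'$ does not change the quadratic form at all; it merely restricts its domain by a single linear continuity constraint. I would therefore work at the level of quadratic forms rather than the operators themselves.

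Concretely, I would use the form associated with the Schr\"odinger operator and $\delta$-type conditions,
\begin{equation*}
  h_\Graph[f] = \sum_{e\in\Edges}\int_e \left(|f'|^2 + V|f|^2\right)\,dx
  + \sum_{v} \alpha_v\, |f(v)|^2,
\end{equation*}
whose form domain consists of functions in $\bigoplus_{e\in\Edges}H^1(e)$ that are continuous at every vertex. All terms coming from vertices other than $v_0,v_1$ are identical for $\Graph$ and $\Graph'$. In $\Graph$ the values $f(v_0)$ and $f(v_1)$ are independent, whereas the form domain of $\Graph'$ is obtained by imposing the extra requirement $f(v_0)=f(v_1)$. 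The key algebraic point is that on the subspace where $f(v_0)=f(v_1)=:f(v)$ one has $\alpha_0|f(v_0)|^2+\alpha_1|f(v_1)|^2=(\alpha_0+\alpha_1)|f(v)|^2=\alpha_v|f(v)|^2$, so the forms agree term by term. Hence
\begin{equation*}
  \mathcal{D}[h_{\Graph'}] = \mathcal{D}[h_\Graph]\cap\Ker\ell,
  \qquad \ell(f):=f(v_0)-f(v_1),
\end{equation*}
and $h_{\Graph'}$ is exactly the restriction of $h_\Graph$ to this subspace. By the Sobolev trace estimate $\ell$ is a bounded, nonzero linear functional on $\mathcal{D}[h_\Graph]$, so $\Ker\ell$ has codimension one.

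With this identification the interlacing~\eqref{eq:interlacing_join} is immediate from min-max. Writing $\lambda_n=\min_{\dim V=n}\max_{0\neq f\in V}h[f]/\|f\|^2$ over subspaces $V$ of the appropriate form domain, the inclusion $\mathcal{D}[h_{\Graph'}]\subseteq\mathcal{D}[h_\Graph]$ restricts the minimization to fewer subspaces and gives $\lambda_n(\Graph)\leq\lambda_n(\Graph')$ directly. For the upper bound I would take a minimizing $(n+1)$-dimensional subspace $V$ for $\lambda_{n+1}(\Graph)$ and observe that $V\cap\Ker\ell$ has dimension at least $n$, so choosing an $n$-dimensional $W\subseteq V\cap\Ker\ell\subseteq\mathcal{D}[h_{\Graph'}]$ yields $\lambda_n(\Graph')\leq\max_W h/\|\cdot\|^2\leq\max_V h/\|\cdot\|^2=\lambda_{n+1}(\Graph)$. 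Compactness of the graph ensures boundedness below and a compact resolvent, justifying the min-max formulas, and disconnectedness of $\Graph$ causes no difficulty.

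The genuinely delicate part, and the step I expect to be the main obstacle, is the strict inequality. The plan is to argue contrapositively: if equality holds in either bound, the constraint $\ell$ must be \emph{inactive}, forcing the extremal eigenfunction of $\Graph'$ to satisfy the separated vertex conditions $\sum_{e\in\Edges_{v_0}}\frac{df}{dx_e}(v_0)=\alpha_0 f(v_0)$ and $\sum_{e\in\Edges_{v_1}}\frac{df}{dx_e}(v_1)=\alpha_1 f(v_1)$ individually, whose sum is precisely the glued condition at $v$. This would make the eigenfunction an eigenfunction of $\Graph$ as well, contradicting the hypothesis that it is not. Turning this heuristic into a rigorous statement requires care: one must relate equality in min-max to the existence of a shared extremizer, use simplicity of $\lambda_n(\Graph')$ to pin the eigenspace down to a single well-defined function, invoke the ``nonzero on the vertices'' assumption to guarantee that $f(v)\neq0$ so that the constraint is nondegenerate, and treat the Dirichlet case $\alpha_v=\infty$ separately, where gluing amounts to merging two Dirichlet points and the vertex term must be interpreted as a hard constraint rather than an additive penalty.
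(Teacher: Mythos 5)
First, a point of reference: the paper does not actually prove this theorem. The appendix states explicitly that the interlacing theorems are quoted from \cite{BerKuc_prep10}, and offers only the one-sentence variational intuition that gluing imposes an additional continuity restriction which pushes the spectrum up. Your proposal is precisely the argument behind that intuition, and for the two non-strict inequalities it is correct and essentially complete: the identification $\mathcal{D}[h_{\Graph'}]=\mathcal{D}[h_\Graph]\cap\Ker\ell$ with $\ell(f)=f(v_0)-f(v_1)$, the cancellation $\alpha_0|f(v)|^2+\alpha_1|f(v)|^2=\alpha_v|f(v)|^2$ on the constrained subspace, the boundedness of $\ell$ in the form norm, and both min-max steps are sound. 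The only cosmetic flaw is that you wrote the form as if every vertex carried a $\delta$-type term, whereas the theorem permits arbitrary self-adjoint conditions at the other vertices; this is harmless because those contributions (and any form-domain constraints they entail) are identical for $\Graph$ and $\Graph'$.

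The genuine gap is in the strictness part, which you correctly flag as the main obstacle but whose mechanism you misidentify: equality does not directly force the $\Graph'$-eigenfunction to satisfy the separated conditions, and the assumption $f(v)\neq 0$ is not what drives the conclusion (it is inherited from the applications in \cite{BerKuc_prep10} and is never needed here). The operative hypotheses are simplicity and ``not an eigenfunction of $\Graph$'', and the missing lemma is: \emph{if one value $\Lambda$ lies in both $\sigma(\Graph)$ and $\sigma(\Graph')$, there is a common eigenfunction.} This follows from the rank-one structure of the constraint. Let $u$ be a $\Graph'$-eigenfunction at $\Lambda$; then the defect functional $w\mapsto h_\Graph(u,w)-\Lambda\langle u,w\rangle$ vanishes on $\Ker\ell\cap\mathcal{D}[h_\Graph]$ (since there $h_\Graph=h_{\Graph'}$ and $u$ is an eigenfunction), hence it equals $c\,\ell(w)$ for some constant $c$. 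Testing with a $\Graph$-eigenfunction $g$ at $\Lambda$ gives $h_\Graph(u,g)-\Lambda\langle u,g\rangle = \langle u, H_\Graph g\rangle - \Lambda\langle u,g\rangle = 0$, so $c\,\ell(g)=0$. If $c=0$, the first representation theorem makes $u$ itself an eigenfunction of $\Graph$; if $\ell(g)=0$, then $g\in\mathcal{D}[h_{\Graph'}]$ is an eigenfunction of $\Graph'$ at $\Lambda$. In either case there is a $\Graph'$-eigenfunction at $\Lambda$ that is simultaneously a $\Graph$-eigenfunction; since equality on either side of \eqref{eq:interlacing_join} puts $\Lambda=\lambda_n(\Graph')$ in both spectra, simplicity forces this common eigenfunction to be proportional to $f$, contradicting the hypothesis that $f$ is not an eigenfunction of $\Graph$. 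This yields both strict inequalities at once, replacing your ``inactive constraint'' heuristic; your separate treatment of the Dirichlet case (where the vertex term becomes a hard constraint and gluing two Dirichlet points changes nothing, so strictness is vacuous) is a correct side remark.
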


An intuitive explanation for the above result is that by gluing
vertices we impose an additional restriction: the continuity
condition.  This additional restriction pushes the spectrum up.

%%%%%%%%%%%%%%%%%%%%%%%%%%%% New Section %%%%%%%%%%%%%%%%%%%%%%%%%%%%

\bibliographystyle{ieeetr}
\bibliography{bk_bibl,additional,Nodal_Domains_general,Nodal_Domains_of_Quantum_Graphs,Qunatum_Graphs,Trace_Formulae,Scattering}

\def\cprime{$'$}
\begin{thebibliography}{10}

\bibitem{Kuc_wrm02}
P.~Kuchment, ``Graph models for waves in thin structures,'' {\em Waves Random
  Media}, vol.~12, no.~4, pp.~R1--R24, 2002.

\bibitem{GnuSmi_ap06}
S.~Gnutzmann and U.~Smilansky, ``Quantum graphs: Applications to quantum chaos
  and universal spectral statistics,'' {\em Adv. Phys.}, vol.~55, no.~5--6,
  pp.~527--625, 2006.

\bibitem{BerCarFulKuc_eds06}
G.~Berkolaiko, R.~Carlson, S.~Fulling, and P.~Kuchment, eds., {\em Quantum
  graphs and their applications}, vol.~415 of {\em Contemp. Math.},
  (Providence, RI), Amer. Math. Soc., 2006.

\bibitem{ExnKeaKuc_eds08}
P.~Exner, J.~P. Keating, P.~Kuchment, T.~Sunada, and A.~Teplyaev, eds., {\em
  Analysis on graphs and its applications}, vol.~77 of {\em Proc. Sympos. Pure
  Math.}, (Providence, RI), Amer. Math. Soc., 2008.

\bibitem{CourantHilbert_volume1}
R.~Courant and D.~Hilbert, {\em Methods of mathematical physics. {V}ol. {I}}.
\newblock Interscience Publishers, Inc., New York, N.Y., 1953.

\bibitem{Ple_cpam56}
{\AA}.~Pleijel, ``Remarks on {C}ourant's nodal line theorem,'' {\em Comm. Pure
  Appl. Math.}, vol.~9, pp.~543--550, 1956.

\bibitem{BGS02}
G.~{Blum}, S.~{Gnutzmann}, and U.~{Smilansky}, ``Nodal domains statistics: A
  criterion for quantum chaos,'' {\em Physical Review Letters}, vol.~88,
  p.~114101, Mar. 2002.

\bibitem{BS02}
E.~{Bogomolny} and C.~{Schmit}, ``{Percolation Model for Nodal Domains of
  Chaotic Wave Functions},'' {\em Physical Review Letters}, vol.~88, p.~114102,
  Mar. 2002.

\bibitem{NS07}
F.~Nazarov and M.~Sodin, ``On the number of nodal domains of random spherical
  harmonics,'' {\em arXiv:0706.2409v1 [math-ph]}, June 2007.

\bibitem{GKS06}
S.~Gnutzmann, P.~D. Karageorge, and U.~Smilansky, ``Can one count the shape of
  a drum?,'' {\em Physical Review Letters}, vol.~97, no.~9, p.~090201, 2006.

\bibitem{GSS05}
S.~{Gnutzmann}, U.~{Smilansky}, and N.~{Sondergaard}, ``{Resolving isospectral
  'drums' by counting nodal domains},'' {\em Journal of Physics A Mathematical
  General}, vol.~38, pp.~8921--8933, Oct. 2005.

\bibitem{BKP07}
D.~K. J.~Br{\"u}ning and C.~Puhle, ``Comment on ``resolving isospectral `drums'
  by counting nodal domains",'' {\em J. Phys. A: Math. Theor.}, vol.~40,
  pp.~15143--15147, 2007.

\bibitem{KS08}
P.~D. Karageorge and U.~Smilansky, ``Counting nodal domains on surfaces of
  revolution,'' {\em J. Phys. A: Math. Theor.}, vol.~41, p.~205102 (26pp),
  2008.

\bibitem{AlO_viniti92}
O.~Al-Obeid, ``On the number of the constant sign zones of the eigenfunctions
  of a dirichlet problem on a network (graph),'' tech. rep., Voronezh: Voronezh
  State University, 1992.
\newblock in Russian, deposited in VINITI 13.04.93, N 938 -- B 93. -- 8 p.

\bibitem{PokPryObe_mz96}
Y.~V. Pokorny{\u\i}, V.~L. Pryadiev, and A.~Al{\cprime}-Obe{\u\i}d, ``On the
  oscillation of the spectrum of a boundary value problem on a graph,'' {\em
  Mat. Zametki}, vol.~60, no.~3, pp.~468--470, 1996.

\bibitem{PokPry_rms04}
Y.~V. Pokorny{\u\i} and V.~L. Pryadiev, ``Some problems in the qualitative
  {S}turm-{L}iouville theory on a spatial network,'' {\em Uspekhi Mat. Nauk},
  vol.~59, no.~3(357), pp.~115--150, 2004.

\bibitem{Sch_wrcm06}
P.~Schapotschnikow, ``Eigenvalue and nodal properties on quantum graph trees,''
  {\em Waves Random Complex Media}, vol.~16, no.~3, pp.~167--178, 2006.

\bibitem{GnuSmiWeb_wrm04}
S.~Gnutzmann, U.~Smilansky, and J.~Weber, ``Nodal counting on quantum graphs,''
  {\em Waves Random Media}, vol.~14, no.~1, pp.~S61--S73, 2004.
\newblock Special section on quantum graphs.

\bibitem{DavGlaLeySta_laa01}
E.~B. Davies, G.~M.~L. Gladwell, J.~Leydold, and P.~F. Stadler, ``Discrete
  nodal domain theorems,'' {\em Linear Algebra Appl.}, vol.~336, pp.~51--60,
  2001.

\bibitem{Ber_cmp08}
G.~Berkolaiko, ``A lower bound for nodal count on discrete and metric graphs,''
  {\em Comm. Math. Phys.}, vol.~278, no.~3, pp.~803--819, 2008.

\bibitem{BSS06}
R.~{Band}, T.~{Shapira}, and U.~{Smilansky}, ``{Nodal domains on isospectral
  quantum graphs: the resolution of isospectrality?},'' {\em Journal of Physics
  A Mathematical General}, vol.~39, pp.~13999--14014, 2006.

\bibitem{BanOreSmi_incol08}
R.~Band, I.~Oren, and U.~Smilansky, ``Nodal domains on graphs---how to count
  them and why?,'' in {\em Analysis on graphs and its applications}, vol.~77 of
  {\em Proc. Sympos. Pure Math.}, pp.~5--27, Providence, RI: Amer. Math. Soc.,
  2008.

\bibitem{KotSmi_prl00}
T.~Kottos and U.~Smilansky, ``Chaotic scattering on graphs,'' {\em Phys. Rev.
  Lett.}, vol.~85, no.~5, pp.~968--971, 2000.

\bibitem{KotSmi_jpa03}
T.~Kottos and U.~Smilansky, ``Quantum graphs: a simple model for chaotic
  scattering,'' {\em J. Phys. A}, vol.~36, no.~12, pp.~3501--3524, 2003.
\newblock Random matrix theory.

\bibitem{DP10}
E.~B. {Davies} and A.~{Pushnitski}, ``{Non-Weyl Resonance Asymptotics for
  Quantum Graphs},'' {\em ArXiv e-prints}, Mar. 2010.

\bibitem{DEL10}
E.~B. {Davies}, P.~{Exner}, and J.~{Lipovsky}, ``{Non-Weyl asymptotics for
  quantum graphs with general coupling conditions},'' {\em ArXiv e-prints},
  Apr. 2010.

\bibitem{EL10}
P.~{Exner} and J.~{Lipovsk{\'y}}, ``{Resonances from perturbations of quantum
  graphs with rationally related edges},'' {\em Journal of Physics A
  Mathematical General}, vol.~43, pp.~105301--+, Mar. 2010.

\bibitem{EdmundsEvans_spectral}
D.~E. Edmunds and W.~D. Evans, {\em Spectral theory and differential
  operators}.
\newblock Oxford Mathematical Monographs, New York: The Clarendon Press Oxford
  University Press, 1987.
\newblock Oxford Science Publications.

\bibitem{KosSch_jpa99}
V.~Kostrykin and R.~Schrader, ``Kirchhoff's rule for quantum wires,'' {\em J.
  Phys. A}, vol.~32, no.~4, pp.~595--630, 1999.

\bibitem{Har_jpa00}
M.~Harmer, ``Hermitian symplectic geometry and extension theory,'' {\em J.
  Phys. A}, vol.~33, no.~50, pp.~9193--9203, 2000.

\bibitem{Kuc_wrm04}
P.~Kuchment, ``Quantum graphs. {I}. {S}ome basic structures,'' {\em Waves
  Random Media}, vol.~14, no.~1, pp.~S107--S128, 2004.
\newblock Special section on quantum graphs.

\bibitem{Stu_jmpa36}
C.~Sturm, ``M\'emoire sur les \'equations diff\'erentielles lin\'eaires du
  second ordre,'' {\em J. Math. Pures Appl.}, vol.~1, pp.~106--186, 1836.

\bibitem{Stu_jmpa36a}
C.~Sturm, ``M\'emoire sur une classe d'\'equations \`a diff\'erences
  partielles,'' {\em J. Math. Pures Appl.}, vol.~1, pp.~373--444, 1836.

\bibitem{Hin_incol05}
D.~Hinton, ``Sturm's 1836 oscillation results: evolution of the theory,'' in
  {\em Sturm-{L}iouville theory}, pp.~1--27, Basel: Birkh\"auser, 2005.

\bibitem{BerKuc_prep10}
G.~Berkolaiko and P.~Kuchment, ``Dependence of the spectrum of a quantum graph
  on vertex conditions and edge lengths.'' arXiv:1008.0369, 2010.

\bibitem{KotSmi_ap99}
T.~Kottos and U.~Smilansky, ``Periodic orbit theory and spectral statistics for
  quantum graphs,'' {\em Ann. Physics}, vol.~274, no.~1, pp.~76--124, 1999.

\bibitem{KotScha_wrm04}
T.~Kottos and H.~Schanz, ``Statistical properties of resonance widths for open
  quantum graphs,'' {\em Waves Random Media}, vol.~14, no.~1, pp.~S91--S105,
  2004.
\newblock Special section on quantum graphs.

\bibitem{BolEnd_incol08}
J.~Bolte and S.~Endres, ``Trace formulae for quantum graphs,'' in {\em Analysis
  on graphs and its applications}, vol.~77 of {\em Proc. Sympos. Pure Math.},
  pp.~247--259, Providence, RI: Amer. Math. Soc., 2008.

\bibitem{Fri_ijm05}
L.~Friedlander, ``Genericity of simple eigenvalues for a metric graph,'' {\em
  Israel J. Math.}, vol.~146, pp.~149--156, 2005.

\bibitem{BPB09}
R.~Band, O.~Parzanchevski, and G.~Ben-Shach, ``The isospectral fruits of
  representation theory: Quantum graphs and drums,'' {\em J. Phys. A: Math.
  Theor.}, vol.~42, p.~175202, 2009.

\bibitem{BOS08}
R.~Band, I.~Oren, and U.~Smilansky, ``Nodal domains on graphs - how to count
  them and why?,'' in {\em Analysis on Graphs and its Applications}, Proc.
  Symp. Pure. Math., pp.~5--28, AMS, 2008.

\bibitem{PhysRevLett.88.044101}
R.~Bl\"umel, Y.~Dabaghian, and R.~V. Jensen, ``Explicitly solvable cases of
  one-dimensional quantum chaos,'' {\em Phys. Rev. Lett.}, vol.~88, p.~044101,
  Jan 2002.

\end{thebibliography}

\end{document}